\newcolumntype{C}{>{\centering\arraybackslash}X}
\newtheorem{proposition}{\textcolor{black}{Proposition}}
\newtheorem{definition}{Definition}
\theoremstyle{nonumberplain}
\newtheorem{proof}{Proof}
\newtheorem{Proof}{Proof}
\definecolor{newextractedpurple}{RGB}{127,0,127}
\begin{document}

\title{\huge R-ACP: Real-Time Adaptive Collaborative Perception Leveraging Robust Task-Oriented Communications}

\author{Zhengru~Fang,~
Jingjing~Wang, 
Yanan~Ma, 
Yihang Tao,
\\
Yiqin~Deng, 
Xianhao~Chen, 
and Yuguang Fang,~\IEEEmembership{Fellow,~IEEE}%
\IEEEcompsocitemizethanks{\IEEEcompsocthanksitem Z. Fang, Y. Ma, Y. Tao, Y. Deng and Y. Fang are with the Department of Computer Science, City University of Hong Kong, Hong Kong. E-mail: \{zhefang4-c, yananma8-c, yihang.tommy\}@my.cityu.edu.hk, \{yiqideng, my.fang\}@cityu.edu.hk.
\IEEEcompsocthanksitem J. Wang is with the School of Cyber Science and Technology, Beihang University, China. Email: drwangjj@buaa.edu.cn.
\IEEEcompsocthanksitem X. Chen is with the Department of Electrical and Electronic Engineering, the University of Hong Kong, Hong Kong. E-mail: xchen@eee.hku.hk.}
\thanks{This work was supported in part by the Hong Kong SAR Government under the Global STEM Professorship and Research Talent Hub,  the Hong Kong Jockey Club under the Hong Kong JC STEM Lab of Smart City (Ref.: 2023-0108), and the Hong Kong Innovation and Technology Commission under InnoHK Project CIMDA. This work of J. Wang was partly supported by the National Natural Science Foundation of China under Grant No. 62222101 and No. U24A20213, partly supported by the Beijing Natural Science Foundation under Grant No. L232043 and No. L222039, partly supported by the Natural Science Foundation of Zhejiang Province under Grant No. LMS25F010007. The work of Y. Deng was supported in part by the National Natural Science Foundation of China under Grant No. 62301300. The work of X. Chen was supported in part by the Research Grants Council of Hong Kong under Grant 27213824 and CRS HKU702/24, in part by HKU-SCF FinTech Academy R\&D Funding, and in part by HKU IDS Research Seed Fund under Grant IDS-RSF2023-0012. \textit{(Corresponding author: Yiqin Deng})}

}
\maketitle
\begin{abstract}
Collaborative perception enhances sensing in multi-robot and vehicular networks by fusing information from multiple agents, improving perception accuracy and sensing range. However, mobility and non-rigid sensor mounts introduce extrinsic calibration errors, necessitating online calibration, further complicated by limited overlap in sensing regions.
Moreover, maintaining fresh information is crucial for timely and accurate sensing. To address calibration errors and ensure timely and accurate perception, we propose a robust task-oriented communication strategy to optimize online self-calibration and efficient feature sharing for \underline{R}eal-time \underline{A}daptive \underline{C}ollaborative \underline{P}erception (R-ACP).
Specifically, we first formulate an Age of Perceived Targets (AoPT) minimization problem to capture data timeliness of multi-view streaming. Then, in the calibration phase, we introduce a channel-aware self-calibration technique based on re-identification (Re-ID), which adaptively compresses key features according to channel capacities, effectively addressing calibration issues via spatial and temporal cross-camera correlations. In the streaming phase, we tackle the trade-off between bandwidth and inference accuracy by leveraging an Information Bottleneck (IB)-based encoding method to adjust video compression rates based on task relevance, thereby reducing communication overhead and latency. Finally, we design a priority-aware network to filter corrupted features to mitigate performance degradation from packet corruption. Extensive studies demonstrate that our framework outperforms five baselines, improving multiple object detection accuracy (MODA) by 25.49\% and reducing communication costs by 51.36\% under severely poor channel conditions. Code will be made publicly available: \href{https://github.com/fangzr/R-ACP}{github.com/fangzr/R-ACP}.
\end{abstract}
\begin{IEEEkeywords}
Multi-camera networks, task-oriented communications, camera calibration, age of perceived targets (AoPT), information bottleneck (IB).
\end{IEEEkeywords}

\section{Introduction}
\subsection{Background}
\IEEEPARstart{C}{ollaborative} perception systems are increasingly prevalent in fields such as IoT systems\cite{wang2024generative,9735326}, connected and autonomous driving\cite{10976336,Chen2024,hu2025cpguardnewparadigmmalicious}, unmanned aerial vehicles\cite{10937373,fang2025taskoriented,hou2025splitfederated}, and sports analysis \cite{yang2022traffic,he2020multi}. They offer significant advantages over single-agent systems by mitigating blind spots, reducing occlusions, and providing comprehensive coverage through multiple perspectives \cite{10557621}, which is especially valuable in cluttered or crowded environments. However, these benefits also introduce considerable challenges. The increased number of cameras demands higher network bandwidth, and more fine-grained synchronization. Besides, synchronized data transmission with high inference accuracy necessitates precise calibration and efficient communication \cite{wang2022rt}. Therefore, balancing network resource management for real-time collaborative perception becomes essential.

\begin{figure}[t]
  \centering
  \subfigure[Unpredictable accidents can alter a UGV's camera extrinsic parameters.]{
    \includegraphics[width=0.46\textwidth]{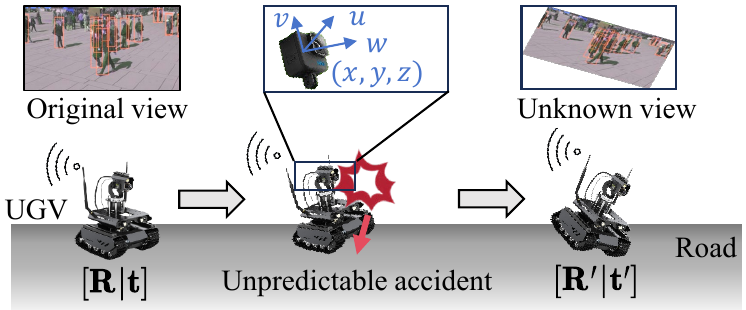}\label{fig:ugv_collaborative_perception}
  }
  \subfigure[Incorrect extrinsic parameters cause errors in collaborative perception.]{
    \includegraphics[width=0.48\textwidth]{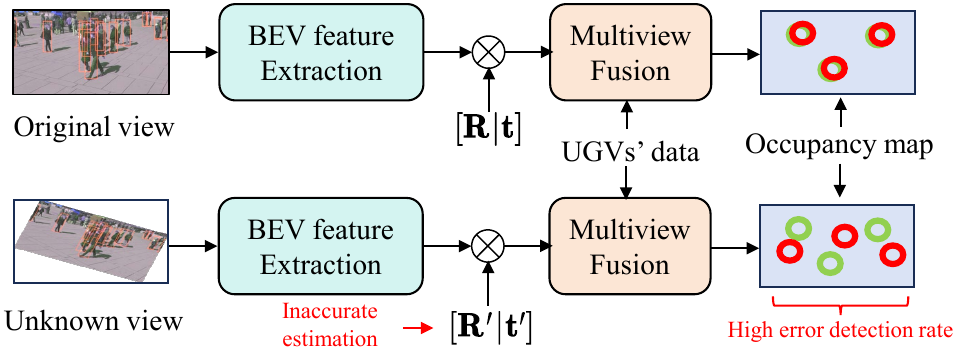}\label{fig:high_error_rate}
  }
  \caption{Effect of unpredictable accidents involving UGVs on camera extrinsic parameters and perception error rates. }
  \label{fig:ugv_perception_error}
  \vspace{-3mm}
\end{figure}

Camera calibration is the process of determining the intrinsic and extrinsic parameters of vision sensors to ensure accurate perception across different viewpoints in multi-camera networks \cite{yang2024robust}. Intrinsic calibration deals with internal characteristics like focal length and lens distortion, while extrinsic calibration defines the physical position and orientation (rotation \(\mathbf{R}\) and translation \(\mathbf{t}\)) of the camera relative to a reference frame. This enables effective alignment across cameras.
\mbox{Fig. \ref{fig:ugv_perception_error}} illustrates a scenario where multiple unmanned ground vehicles (UGVs) equipped with vision sensors collaborate to track moving objects, reducing the impact of obstacles and enhancing perception accuracy. In smart cities, UGVs can effectively prevent elderly falls, provide real-time abnormal behavior alerts (e.g., crime or terrorist attacks), and assist in search and rescue missions in hazardous areas. As shown in \mbox{Fig. \ref{fig:ugv_perception_error}(a)}, accidents can disrupt extrinsic parameters \([\mathbf{R} | \mathbf{t}]\), resulting in an ``unknown view''. In Fig. \ref{fig:ugv_perception_error}(b), these errors impact collaborative perception, leading to inaccurate Bird's Eye View (BEV) mapping when parameters \(\left[ \mathbf{R}^{'}|\mathbf{t}^{'} \right]\) are incorrect. Thus, efficient extrinsic calibration is vital for perception accuracy.  

Traditional calibration methods—pattern-based \cite{tabb2019calibration}, manual measurements \cite{ozuysal2004manual}, and feature-based approaches \cite{9495137}—rely on predefined calibration objects and work well in controlled settings. However, they are impractical for dynamic, large-scale deployments due to the need for specific targets, time-consuming processes, and inconsistent natural features. Multi-camera calibration also faces bandwidth limitations and network variability since collaboration requires data exchanges among sensors.
Kalman filter-based methods provide online calibration without predefined objects \cite{khodarahmi2023review} but rely on linear assumptions about system dynamics. These assumptions often fail in multi-UGV systems due to non-linear motions, especially in rotations and velocity changes, making them less effective in rapidly changing scenarios. Transmitting large volumes of image data for collaborative calibration consumes significant bandwidth, and network delays or packet loss disrupt synchronization, complicating cross-camera feature matching \cite{fang2024pacp,fang2024pib,fang2025ton}, particularly when adding or adjusting cameras.
To address these challenges, we propose an adaptive calibration approach leveraging spatial and temporal cross-camera correlations during deployment. By incorporating re-identification (Re-ID) technology \cite{yang2022traffic}, our method achieves higher key-point matching accuracy than traditional edge-detection techniques. The Re-ID-based calibration effectively handles non-linearities in UGV motion and varying camera perspectives by using both global appearance and fine-grained local features, enhancing robustness under changing conditions. Additionally, adaptive feature quantization based on channel capacity reduces communication overhead, maintaining high-precision calibration without extra sensors or specific calibration objects, making it suitable for dynamic and large-scale deployments.


After meeting the accuracy requirements of vision sensing through self-calibration, we need to address how to guarantee the timely data transmission in multi-camera networks, which is essential for monitoring dynamic environments. For example, timely data in life critical signal monitoring can be life-saving \cite{yang2022mixed}, and promptly detecting abnormal behaviors in public safety can prevent crimes or ensure traffic safety \cite{abedi2024safety}. Therefore, perception tasks like object detection rely on both data accuracy and its timeliness, as stale information can result in poor decision-making when immediate responses are required.
The Age of Information (AoI) measures data timeliness by tracking the time since the latest packet was received \cite{fangAgeInformationEnergy2022,Wu2025MNET}. Traditional AoI assumes homogeneous data sources of equal importance and consistent quality, simplifying timeliness evaluation. However, this cannot align with multi-camera networks, where cameras have varying fields of views (FOVs) and data quality due to different positions and environmental factors. While He \textit{et al.} \cite{he2018minimizing} considered AoI in multi-camera perception, they did not adequately model camera coverage or account for overlapping fields of view, where variations in sensing accuracy affect multi-view fusion performance. To fill this gap, we propose a novel age-aware metric for multi-camera networks that reflects both data timeliness and source quality. This enhanced metric guides optimizations for sources with higher priorities, ultimately improving overall perception performance.

Limited network bandwidth and high redundancy in video streaming increase transmission overhead \cite{fang2024pacp}. Traditional systems transmit vast amounts of raw data without considering task relevance, causing latency that degrades real-time perception. Additionally, channel limitations and multi-user interference can corrupt transmitted features, but existing task-oriented communication seldom addresses transmission robustness, often assuming ideal channels.
Task-oriented communication offers efficiency by focusing on task-relevant data and ignoring redundancy \cite{shao2023task}, prioritizing compact representations for tasks like object detection. However, without robustness considerations, these methods remain vulnerable to channel impairments.
Traditional solutions like Automatic Repeat reQuest (ARQ) protocols enhance reliability through retransmissions but introduce significant overhead and latency, unsuitable for real-time applications\cite{10268059}. Therefore, there is a need to develop robust task-oriented communication methods that withstand channel impairments without extra latency.
The Information Bottleneck (IB) method \cite{fang2025ton} aligns with this approach by encoding the most relevant features while enhancing resilience to data corruption. By leveraging robust task-oriented communication, we can optimize network resources and improve multi-camera network performance, ensuring efficient and effective real-time collaborative perception even under bandwidth constraints and poor channel conditions.

\subsection{State-of-the-Art}
\label{sec:Related Work}
\subsubsection{Multi-Camera Networks}  
Multi-camera networks enhance perception by providing comprehensive coverage and reducing occlusions through multiple views from distributed cameras. Yang \textit{et al.} \cite{YANG2023103982} introduced the edge-empowered cooperative multi-camera sensing system for traffic surveillance, leveraging edge computing and hierarchical re-identification to minimize bandwidth usage while maintaining vehicle tracking accuracy. Liu \textit{et al.} \cite{liu2023efficient} proposed a Siamese network-based tracking algorithm that enhances robustness against occlusion and background clutter in intelligent transportation systems. For pedestrian detection, Qiu \textit{et al.} \cite{qiu20223d} improved detection accuracy by using multi-view information fusion and data augmentation to address occlusion challenges.  
Guo \textit{et al.} \cite{guo2021optimal} explored wireless streaming optimization for 360-degree virtual reality video, focusing on joint beamforming and subcarrier allocation to reduce transmission power. However, existing research has not fully leveraged spatial and temporal correlations among multiple camera views to optimize coverage. Fang \textit{et al.} \cite{fang2025ton} developed a collaborative perception framework to leverage correlations among frames and perspectives through a prioritization mechanism, but the priorities cannot be adjusted for dynamically perceived targets. Additionally, the effect of data timeliness, particularly AoI, on collaborative perception accuracy remains underexplored.

\subsubsection{Visual Sensor Calibration}

Calibration is essential for ensuring accuracy in multi-camera networks. Traditional methods include pattern-based, manual measurement, and feature-based techniques. \emph{Pattern-based methods}, like using checkerboard patterns \cite{tabb2019calibration}, are impractical in dynamic environments due to the unavailability of specific targets. \emph{Manual measurement based methods}, requiring physical measurements of camera positions \cite{ozuysal2004manual}, are time-consuming and unsuitable for rapidly changing settings, such as connected and autonomous driving. \emph{Feature-based methods} match natural features across overlapping views \cite{9495137}, but inconsistent features and limited overlap reduce their reliability. In mobile robotic systems, frequent recalibration is often needed due to unpredictable conditions. However, real-time calibration transmission consumes significant bandwidth, which can degrade perception accuracy. Collaborative perception, requiring multi-view data, further increases communication resource demands, complicating precise calibration, especially when environmental changes necessitate camera adjustments. Motion-based techniques, like Su \textit{et al.} \cite{10218989}, estimate transformations through sensor motions, but accuracy may be limited without collaborative vehicle assistance. Yin \textit{et al.} \cite{10195910} introduced a targetless method, combining motion and feature-based approaches, improving accuracy but often requiring feature alignment, which increases data transmission. Thus, how to optimize communication protocols is crucial for managing calibration overhead and maintaining perception accuracy. Integrating both calibration techniques and communication strategies is essential for achieving real-time, multi-camera calibration in dynamic deployments.

\subsubsection{Task-Oriented Communications}

Recent advancements in task-oriented communication have shifted the focus from bit-level to semantic-level data transmission. Wang \textit{et al.} designed a semantic transmission framework for sharing sensing data from the physical world to Metaverse\cite{10007890}. The proposed method can achieve the sensing performance without data recovery. Meng \textit{et al.} \cite{10370739,10422886} proposed a cross-system design framework for modeling robotic arms in Metaverse, integrating Constraint Proximal Policy Optimization (C-PPO) to reduce packet transmission rates while optimizing scheduling and prediction. Kang \textit{et al.} \cite{kang2022personalized} explored semantic communication in UAV image-sensing, designing an energy-efficient framework with a personalized semantic encoder and optimal resource allocation to address efficiency and personalization in 6G networks. Wei \textit{et al.} \cite{10225550} introduced a federated semantic learning (FedSem) framework for collaborative training of semantic-channel encoders, leveraging the information bottleneck theory to enhance rate-distortion performance in semantic knowledge graph construction. Shao \textit{et al.} \cite{shao2023task} proposed a task-oriented framework for edge video analytics, focusing on minimizing data transmission by extracting compact task-relevant features and utilizing temporal entropy modeling for reduced bitrate.
However, for multi-camera networks, considering correlations between cameras and task-oriented priorities allows for further data compression, optimizing transmission efficiency based on varying perceptual and transmission needs.

\subsection{Our Contributions}
Multi-camera networks enhance real-time collaborative perception by leveraging multiple views. Our contributions are summarized as follows.

\begin{itemize}
  \item We propose a novel robust task-oriented communication strategy for \underline{R}eal-time \underline{A}daptive \underline{C}ollaborative \underline{P}erception (R-ACP), which optimizes calibration and feature transmission across calibration and streaming phases. It enhances perception accuracy while managing communication overhead. We also formulate the Age of Perceived Targets (AoPT) minimization problem to ensure both data quality and timeliness.
  
  \item We introduce a channel-aware self-calibration technique based on Re-ID, which adaptively compresses key-point features based on channel capacity and leverages spatial and temporal cross-camera correlations, improving calibration accuracy by up to 89.39\%.

  \item To balance bandwidth and inference accuracy, we develop an Information Bottleneck (IB)-based encoding method to dynamically adjust video compression rates according to task relevance, reducing communication overhead and latency while maintaining perception accuracy.
  
  \item To address severe packet errors or loss without retransmission in real-time scenarios, we design a priority-aware multi-view fusion network that discards corrupted data by dynamically adjusting the importance of each view, ensuring robust performance even under challenging network conditions.

  \item Extensive evaluations demonstrate that our R-ACP framework outperforms conventional methods, achieving significant improvements in multiple object detection accuracy (MODA) by 25.49\% and reducing communication costs by 51.36\% under constrained network conditions.
\end{itemize}

The remainder of this paper is organized as follows. Sec. \ref{sec:System_Model} and Sec. \ref{sec:Problem Formulation} introduce the communication and calibration models, analyze data timeliness, and formulate the optimization problem. Sec. \ref{sec:Methodology} details our methodology, focusing on Re-ID-based camera calibration, task-oriented compression using the IB principle, and adaptive \& robust streaming scheduling. Finally, Sec. \ref{sec:Performance Evaluation} evaluates our framework through simulations, demonstrating improved MODA and reduced communication costs under constrained network conditions.

\section{System Model and Preliminary}
\label{sec:System_Model} 
\begin{figure}[t]
  \centering
  \includegraphics[width=0.50\textwidth]{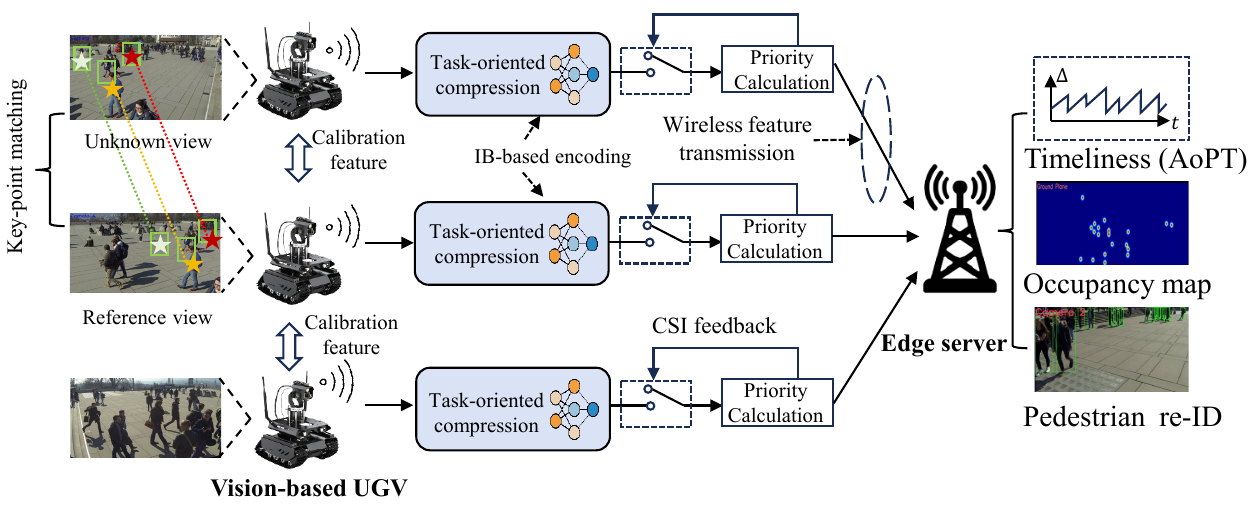}
  \caption{The system consists of several UGVs equipped with cameras, collaboratively tracking pedestrians. }
  \label{fig:system model}
\end{figure}

\subsection{Scenario Description}
\label{Sec: Scenario Description}
As illustrated in Fig.~\ref{fig:system model}, our system consists of multiple vision-based UGVs equipped with edge cameras, denoted as \( \mathcal{K} = \{1, 2, \ldots, K\} \), that collaboratively track mobile targets, such as pedestrians, within their FOVs. The UGVs are responsible for transmitting decoded features to edge server through wireless channel, which generate a \textit{pedestrian occupancy map} and conducts \textit{pedestrian re-identification (Re-ID)} tasks. Edge servers have more powerful computing and storage capability for DNN-based downstream tasks\cite{9043503}. 
However, UGVs encounter several challenges in dynamic environments. First, unpredictable factors such as terrain changes and obstacles cause sudden variations in camera extrinsic parameters, leading to degraded perception accuracy over time. Traditional methods like Kalman filtering struggle to handle these rapid, non-linear variations due to their reliance on accurate initial states and linearity assumptions. To address this, we introduce a \textit{Re-ID-based collaborative perception} mechanism, where nearby UGVs share perceptual information, allowing real-time calibration of extrinsic parameters without the need of precise initial settings or additional sensors. Another challenge is to ensure the timeliness of the high-quality data being collected, especially in high-mobility scenarios. To address this, we introduce the \textit{Age of Perceived Targets (AoPT)} metric and formulate a new optimization problem. By adjusting the frame rate and applying \textit{Information Bottleneck (IB)-based encoding}, we reduce spatiotemporal redundancy and improve the perceptual data timeliness. The proposed approach is also extensible to other robotic platforms such as UAVs and autonomous robots. Furthermore, unpredictable channel impairments can lead to packet errors or loss during transmission. To mitigate this, we design a priority-aware network, which selectively fuses data from multiple UGVs based on channel conditions, filtering out erroneous information to ensure robust perception performance.

\subsection{Communication Model}
\label{Sec: Communication Model}
To manage communications between UGVs and an edge server, we adopt a Frequency Division Multiple Access (FDMA) scheme. The transmission capacity \( C_k \) for each UGV \( k \) is determined by the Shannon capacity formula, which depends on the signal-to-noise ratio (SNR) at the receiver:
\begin{equation}\label{eq:c_k}
C_k = B_k \log_2 \left(1 + \text{SNR}_k\right),
\end{equation}
where \( B_k \) is the bandwidth allocated to the link between UGV \( k \) and the edge server, and SNR is given by:
\begin{equation}
\text{SNR}_k = \frac{P_t G_k}{N_0 B_k},
\end{equation}
where \( P_t \) is the transmission power, \( G_k \) is the channel gain for UGV \( k \), \( N_0 \) is the noise power spectral density, and \( B_k \) is the bandwidth allocated to UGV \( k \). Besides, the transmission delay \( d_k^T \) for each camera-server connection is then determined by the amount of data to be transmitted \( D \) and the capacity \( C_k \):
\begin{equation}\label{eq:transmission_delay}
d_{k}^{T}=\frac{D}{C_k}=D\left[ B_k\log _2\left( 1+\frac{P_tG_k}{N_0B_k} \right) \right] ^{-1}.
\end{equation}

Thus, the total delay for each UGV \( k \), which includes the inference delay \( d_k^I \) at the edge server, is given by:
\begin{equation}\label{eq:total_delay}
d_k^{\text{total}} = d_k^T + d_k^I.
\end{equation}

\subsection{Camera Calibration and Multi-view Fusion}
\label{Sec: Camera Calibration and Multiview Fusion}
\begin{figure}[t]
  \centering
  \includegraphics[width=0.50\textwidth]{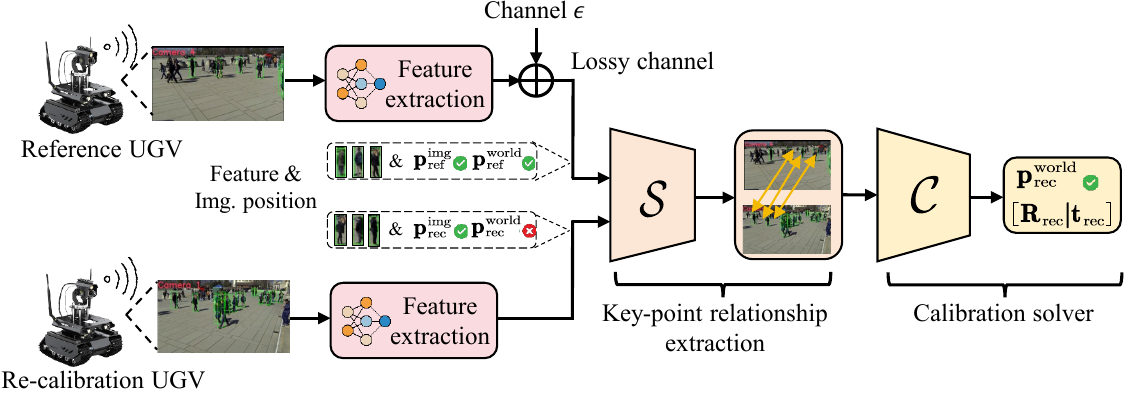}
  \caption{The flow of the self-calibration method using multiview feature sharing. }
  \label{fig:calibration_flow}
\end{figure}
Our multi-UGV collaborative perception system operates in three phases: \textbf{Idle} (Phase 0), \textbf{Calibration} (Phase 1), and \textbf{Streaming} (Phase 2). During Phase 0, UGVs perform object detection without transmitting data. Phase 1 occurs when new UGVs are deployed or when existing UGVs require recalibration to improve tracking accuracy. Phase 2 begins when targets are detected, prompting real-time data transmission to the fusion node $s$.

\textbf{1) Calibration (Phase 1)}: Calibration involves estimating intrinsic and extrinsic parameters for each UGV's camera\cite{9495137}. Intrinsic parameters are defined by the intrinsic matrix \( \mathbf{K} \), which includes focal lengths \( f_x, f_y \) and principal point \( (c_x, c_y) \): 
\begin{equation}
  \mathbf{K} = \begin{pmatrix} 
  f_x & 0 & c_x \\
  0 & f_y & c_y \\
  0 & 0 & 1
  \end{pmatrix}.
\end{equation}
Extrinsic parameters represent the camera's orientation and position, encapsulated in the rotation matrix \( \mathbf{R} \) and translation vector \( \mathbf{t} \), forming the extrinsic matrix \( [\mathbf{R} | \mathbf{t}] \):
\begin{equation}
  [\mathbf{R} | \mathbf{t}] = \begin{pmatrix} 
  r_{11} & r_{12} & r_{13} & t_x \\
  r_{21} & r_{22} & r_{23} & t_y \\
  r_{31} & r_{32} & r_{33} & t_z 
  \end{pmatrix}.
\end{equation}
The full perspective transformation matrix is then:
\begin{equation}
\mathbf{P} = \mathbf{K} [\mathbf{R} | \mathbf{t}].
\end{equation}
Given a 3D point \( \mathbf{P}^{\text{world}} = [x, y, z, 1]^T \), its 2D image projection \( \mathbf{p}^{\text{img}} = [u, v, 1]^T \) is calculated by \( \mathbf{p}^{\text{img}} = \mathbf{P} \mathbf{P}^{\text{world}} \). As shown in Fig.~\ref{fig:calibration_flow}, calibration depends on sharing detected feature points between the reference camera and the re-calibration UGV through a wireless channel. When the re-calibration UGV requires external calibration, it broadcasts a request to nearby UGVs, requesting them to share their extracted features along with the corresponding image coordinates \( \mathbf{p}^{\text{img}}_{\text{ref}} \) and world coordinates \( \mathbf{P}^{\text{world}}_{\text{ref}} \). The re-calibration UGV then exploits a Key-point Relationship Extraction (KRE) network \( \mathcal{S} \) to select the UGV with the most matched points as the reference UGV. The reference UGV transmits multi-frame feature and position information through a lossy channel. Finally, the re-calibration UGV uses this data in the Calibration Solver \( \mathcal{C} \) to solve the linear equations and determine its extrinsic parameters \([ \mathbf{R}_{\text{rec}}| \mathbf{t}_{\text{rec}} ]\)\footnote{Intrinsic parameters are relatively simple to calculate, while extrinsic parameters, which define camera position and orientation, require more intricate computations. Hence, we focus on calibrating extrinsic parameters in our multi-camera network.}.



\textbf{2) Streaming (Phase 2)}: For tasks like pedestrian detection, objects are often assumed to lie on the ground plane \( z = 0 \). This assumption simplifies the projection to a 2D-to-2D transformation between views. For a ground plane point \( \mathbf{P}_{\text{ground}} = [x, y, 0, 1]^T \), the image projection becomes:
\begin{equation}
\mathbf{p}^{\text{img}} = \mathbf{P}_0 \mathbf{P}_{\text{ground}},
\end{equation}
where \( \mathbf{P}_0 \) is the simplified 3x3 perspective matrix obtained by eliminating the third column from the extrinsic matrix.

To evaluate data timeliness in multi-UGV collaborative perception, we calculate the proportion of time the system spends in Phases 0 (Idle) and 2 (Streaming). We assume target occurrences are independent with a constant rate \(\lambda\), forming a Poisson process. Pedestrian dwell times are modeled as a Log-normal distribution \(S \sim \text{LogNormal}(\mu_S, \sigma_S^2)\) \cite{peng2009walking}. We adopt the log-normal distribution because pedestrian dwell times are inherently non-negative. It also captures the right-skewed nature of dwell times: most are short, but only some pedestrians stay longer, reflecting real-world behavior. The Exponential distribution's memoryless property is unsuitable since pedestrian leaving probability depends on the time already spent. Since cameras must capture targets before they leave, we consider infinite servers, so each target is served immediately without queuing. Therefore, we model these phases as an \(M/G/\infty\) queue with periodic Calibration (Phase 1), where \(M\) denotes Poisson arrivals, \(G\) is a general service time distribution, and servers are infinite.


Let \( L(t) \) represent the number of active targets in the system (i.e., the number of targets currently being captured by cameras, corresponding to Phase 2). The steady-state distribution of \( L(t) \) is Poisson with mean \( \rho = \lambda \mathbb{E}[S] =\lambda \exp\left(\mu_S + \frac{\sigma_S^2}{2}\right)\). Thus, the probability that there are \( n \) targets being captured (Phase 2) is $P(L = n) = \frac{\rho^n e^{-\rho}}{n!}$,
where \( \rho \) represents the expected number of active targets. Then, the probability that there are no targets being captured (Phase 0, Idle) is $P(L = 0) = e^{-\rho}$. Phase 1 (Calibration) is deterministic, and we assume it occurs with a fixed probability \( p_1 \). 
Let \( T^{\text{total}} \) be the total cycle time, including time spent in Phase 1. The average time spent in Phase 1 is \( T_1 = p_1 T^{\text{total}}  \). The remaining time is split between Phase 0 and Phase 2. 
If we let \( \pi_0^{(2)} \) and \( \pi_2^{(2)} \) denote the relative time spent in Phases 0 and 2 within the non-calibration portion of the cycle (i.e., after Phase 1), we have $\pi_0^{(2)} = e^{-\rho}$ and $ \pi_2^{(2)} = 1 - e^{-\rho}$. Thus, the steady-state probabilities for the three phases can be given by:
\begin{equation}
    \begin{cases}
    \pi _1 = p_1,\\
	\pi _2=(1-p_1)\cdot \pi _{2}^{(2)}=(1-p_1)\cdot (1-e^{-\rho}),\\
	\pi _0=(1-p_1)\cdot \pi _{0}^{(2)}=(1-p_1)\cdot e^{-\rho}.\\
\end{cases}
\end{equation}
Therefore, the average communication cost $\overline {C}$ can then be calculated as:
\begin{equation}
    \overline {C} = \pi_0 C_0 + \pi_1 C_1 + \pi_2 C_2,
\end{equation}
where \( C_0 \), \( C_1 \), and \( C_2 \) represent the communication costs for the Idle, Calibration, and Streaming phases, respectively. Additionally, the values of the communication costs are determined by different features in associated phases. In Sec. \ref{sec:Performance Evaluation}, the features can be transmitted successfully only when $\overline {C}$ does not exceed the communication bottleneck.

\subsection{Data Timeliness Analysis}
\label{Sec: Data Timeliness Analysis}

In this section, we first derive the classical average Age of Information (AoI), which is not sufficient for evaluating the data timeliness of a multi-source system with multiple perceived targets. Therefore, we propose a new metric, namely the Age of Perceived Targets (AoPT), in a multi-camera collaborative perception system.

\textbf{1) Age of Information (AoI) for a single UGV}: Let \( \Delta_k \) be the average sampling interval of the \( k \)th UGV's camera, and \( d_k^{\text{total}} = d_k^T + d_k^I \) be the average total delay, where \( d_k^T \) is the average processing delay and \( d_k^I \) is the average transmission delay. For a multi-source system, the average AoI for the \( k \)th UGV is then given in Proposition \ref{proposition:individual_aoi}.

\begin{proposition}
\label{proposition:individual_aoi}
The AoI for UGV \( k \) under deterministic sampling and transmission delays is given by:
\begin{equation}
\Delta_{\text{AoI},k}=\frac{\Delta _k}{2}+d_{k}^{T}+d_{k}^{I},
\end{equation}
where \( d_k^T = \frac{D}{C_k} = D \left[ B_k \log_2\left( 1 + \frac{P_t G_k}{N_0 B_k} \right) \right]^{-1} \),\( B_k \) represents the bandwidth allocated to the link between UGV \( k \) and the edge server, \( P_t \) is the transmission power, \( G_k \) is the channel gain for UGV \( k \), \( N_0 \) is the noise power spectral density, and \( D \) is the data packet size.
\end{proposition}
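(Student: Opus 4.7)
The plan is to derive the time-average AoI via its standard sawtooth interpretation, specialized to the deterministic-sampling, deterministic-delay regime assumed in the proposition.

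First, I would fix the definition $\Delta_k(t) = t - U_k(t)$, where $U_k(t)$ is the generation timestamp of the freshest packet from UGV $k$ that has arrived at the edge server by time $t$. The time-average AoI over a horizon $T$ is $\bar{\Delta}_{\text{AoI},k} = \lim_{T\to\infty}\frac{1}{T}\int_0^T \Delta_k(t)\,dt$, and it is the limit of this Ces\`aro average that I would compute.

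Second, I would exploit determinism to make the sample path explicit. UGV $k$ generates its $n$th packet at $t_n = n\Delta_k$, and the server makes it available at $t_n + d_k^T + d_k^I$. Writing $d := d_k^T + d_k^I$, the sample path on the interval $[t_n + d,\, t_{n+1} + d)$ is the straight line $\Delta_k(t) = t - t_n$, which grows from $d$ just after the $n$th delivery to $d + \Delta_k$ just before the $(n{+}1)$st. Because both the sampling interval and the end-to-end delay are constants, no packet overtaking can occur and every arriving packet is strictly fresher than the previous one, so the path is genuinely periodic with period $\Delta_k$.

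Third, I would compute the integral over a single period as the area of a trapezoid whose parallel sides have lengths $d$ and $d + \Delta_k$ and whose width is $\Delta_k$; this area is $\Delta_k d + \Delta_k^2/2$. Dividing by the period $\Delta_k$ yields $d + \Delta_k/2$, and substituting $d = d_k^T + d_k^I$ together with the closed-form for $d_k^T$ in \eqref{eq:transmission_delay} produces the claimed identity.

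The calculation itself is elementary; the only subtle step is ruling out packet reordering and stale deliveries so that the steady-state AoI really reduces to the clean periodic sawtooth described above. With both the sampling period and the service time held constant, this is automatic: the $(n{+}1)$st packet cannot arrive before the $n$th, and each arrival strictly resets the age. I would state this monotonicity observation explicitly before invoking the trapezoidal-area step, so that the passage from the pathwise integral to the time average is fully rigorous and the proof does not implicitly rely on a queueing discipline that has not been introduced.
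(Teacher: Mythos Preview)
Your proposal is correct and follows essentially the same sawtooth/trapezoid-area computation as the paper's proof. The only cosmetic difference is that the paper integrates over the interval between consecutive \emph{sampling} instants with integrand $t - t_{n-1} + d_k^{\text{total}}$, whereas you (more naturally) integrate between consecutive \emph{delivery} instants; by periodicity with constant delay the two windows yield the same average, and your explicit monotonicity remark ruling out packet reordering is a nice touch the paper omits.
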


\begin{Proof}
The AoI at time \( t \) for UGV \( k \), denoted as \( \Delta_{\text{AoI},k}(t) \), increases linearly between updates and resets to the transmission delay \( d_k^{\text{total}} \) upon each update. Given the average sampling interval \( \Delta_k \), the AoI is \( \Delta_{\text{AoI},k} = \frac{1}{\Delta_k} \int_{t_{n-1}}^{t_n} \Delta_{\text{AoI},k}(t) \, dt \), where \( t_{n-1} \) is the time of the \( (n-1) \)th update, and \( t_n \) is the time of the \( n \)th update. Substituting \( \Delta_{\text{AoI},k}(t) = t - t_{n-1} + d_k^{\text{total}} \), we get:
\begin{equation}\label{eq: aoi_2}
\begin{aligned}
\Delta_{\text{AoI},k} &= \frac{1}{\Delta_k} \int_{t_{n-1}}^{t_n} \left( t - t_{n-1} + d_k^{\text{total}} \right) dt= \frac{1}{\Delta_k} \left[ \frac{\Delta_k^2}{2} + d_k^{\text{total}} \Delta_k \right].
\end{aligned}
\end{equation}
According to Eq. (\ref{eq:total_delay}), we have $\Delta_{\text{AoI},k} = \frac{\Delta_k}{2}  + d_k^T + d_k^I $, where \( \Delta_k \), \( d_k^T \), and \( d_k^I \) are time-averaged values representing the average sampling interval, average processing delay, and average transmission delay, respectively.
{\hfill $\blacksquare$\par}
\end{Proof}

\begin{table}[t]
\centering
\caption{{\color{black}{Age metrics: mathematical definition and key feature captured}}}
\label{tab:aoi_aopt_cmp}
\renewcommand{\arraystretch}{1.15}
\begin{tabular}{|c|c|c|}
\hline
{\color{black}{\textbf{Metric}}} & {\color{black}{\textbf{Definition}}} & {\color{black}{\textbf{Key Feature}}}\\
\hline
{\color{black}{AoI}} &
{\color{black}{$\displaystyle
\Delta_{\mathrm{AoI},k}
=\frac{\Delta_k}{2}+d_k^{T}+d_k^{I}$}} &
{\color{black}{Vanilla freshness}} \\
\hline
{\color{black}{AoII}} &
{\color{black}{$\displaystyle
\Delta_{\mathrm{AoII},k}
=\bigl(\tfrac{\Delta_k}{2}+d_k^{\text{total}}\bigr)\,
\Pr\{\hat{X}_k\!\neq\!X_k\}$}} &
\begin{tabular}[c]{@{}c@{}}{\color{black}{Freshness weighted by}}\\ {\color{black}{correctness penalty}}\end{tabular} \\
\hline
{\color{black}{AoPT}} &
{\color{black}{$\displaystyle
\Delta_{\mathrm{AoPT},k}^{\mathrm{st}}
=\mathbbm{1}_{\{g_k\ge\varepsilon_g\}}\,
g_k\!\bigl(\tfrac{\Delta_k}{2}+d_k^{\text{total}}\bigr)$}} &
\begin{tabular}[c]{@{}c@{}}{\color{black}{Freshness weighted by}}\\ {\color{black}{target relevance}}\end{tabular} \\
\hline
\end{tabular}
\vspace{-3mm}
\end{table}

{\color{black}{While the AoI effectively quantifies data timeliness in traditional sensor networks, it exhibits significant limitations when applied to multi-view collaborative perception systems:}}
\begin{itemize}
    \item[1)] {\color{black}{\textbf{AoI assumes uniform sensor contributions and identical fields of view (FoVs).} In multi-UGV systems, each camera covers different areas and contributes unevenly to global perception. Updates from less critical views are treated equally, leading to inefficiencies.}}
    \item[2)] {\color{black}{\textbf{AoI neglects perception quality, such as target visibility or occlusions.} A timely but low-quality update may reset AoI while providing little meaningful information.}}
\end{itemize}

{\color{black}{Moreover, although the Age of Incorrect Information (AoII)~\cite{aoii,10143537} extends AoI by penalizing incorrect updates, it still presents limitations in collaborative perception:}}
\begin{itemize}
    \item[1)] {\color{black}{\textbf{AoII emphasizes correctness over perceptual value.} It targets estimation errors but cannot distinguish semantically uninformative frames from valuable observations.}}
    \item[2)] {\color{black}{\textbf{AoII lacks task-driven prioritization.} It does not account for the number or relevance of perceived targets, which are crucial in multi-view perception.}}
\end{itemize}

{\color{black}{To overcome these deficiencies, we propose the \emph{Age of Perceived Targets (AoPT)}, which integrates both freshness and perceptual relevance by weighting updates according to detected target counts.}}
{\color{black}{The differences between AoI, AoII, and AoPT are summarized in Table~\ref{tab:aoi_aopt_cmp}.}}

\vspace{2mm}

\textbf{2) Definition of AoPT}: The Age of Perceived Targets (AoPT) quantifies the freshness and relevance of perception data from each UGV. Specifically, the AoPT of the \( k \)th UGV is defined as  $\Delta_{\mathrm{AoPT},k}^{\mathrm{st}} = \mathbbm{1}_{\left\{ g_k \geq \boldsymbol{\varepsilon}_{\boldsymbol{g}} \right\}} \left[ g_k \cdot \left(\frac{ \Delta_k }{2}+ d_k^{\text{total}}\right) \right]$, where \( X_t^{(k)} \) denotes the perception data frame of the \( k \)th UGV at time \( t \), \( g_k(\cdot) \) is the object recognition network outputting the number of objects in a frame\footnote{To minimize computational costs, we calculate \( g_k(\cdot) \) only at regular time intervals since the count of targets remains constant within small time slot $\tau$. Moreover, we assume that $\tau$ is comparatively longer than both $\Delta_k$ and $\Delta_T$.}, \( \Delta_k \) represents the sampling interval, \( d_k^I \) is the inference delay, and \( \boldsymbol{\varepsilon}_{\boldsymbol{g}} \) is a threshold filtering out low-quality data. This equation accounts for both the data freshness and its informational value based on target count. For simplicity, we abbreviate \( g_{{k}} = g_{{k}}\left( X_t^{({k})} \right) \). As illustrated in Fig.~\ref{fig:diff_penalty}(a), the AoI function increases linearly between updates and resets upon receiving new data. Fig.~\ref{fig:diff_penalty}(b) shows how \( g_k \) varies over time. Frames with \( g_k < \boldsymbol{\varepsilon}_{\boldsymbol{g}} \) are discarded, while those with \( g_k \geq \boldsymbol{\varepsilon}_{\boldsymbol{g}} \) are retained, contributing to the AoPT based on target count and motion dynamics, as depicted in Fig.~\ref{fig:diff_penalty}(c). In a multi-UGV collaborative perception system, it is crucial to consider the worst-case AoPT to ensure no UGV significantly lags behind. Therefore, the AoPT during the streaming phase is formulated by taking the supremum over all UGVs:
\begin{equation}\label{eq:AoPT1}
\Delta_{\mathrm{AoPT}}^{\mathrm{st}}  = \sup_{k \in \mathcal{K}} \left\{ \mathbbm{1}_{\left\{ g_k \geq \boldsymbol{\varepsilon}_{\boldsymbol{g}} \right\}} \left[ g_k \cdot \left(\frac{ \Delta_k}{2}  + d_k^{\text{total}}\right) \right] \right\} ,
\end{equation}
where \( d_k^{\text{total}} \) includes all delays such as inference and transmission. This expression reflects the system's aim to prioritize the freshest and most informative data by optimizing the worst-case AoPT scenario.
\begin{figure}[t]
  \centering
  \includegraphics[width=0.47\textwidth]{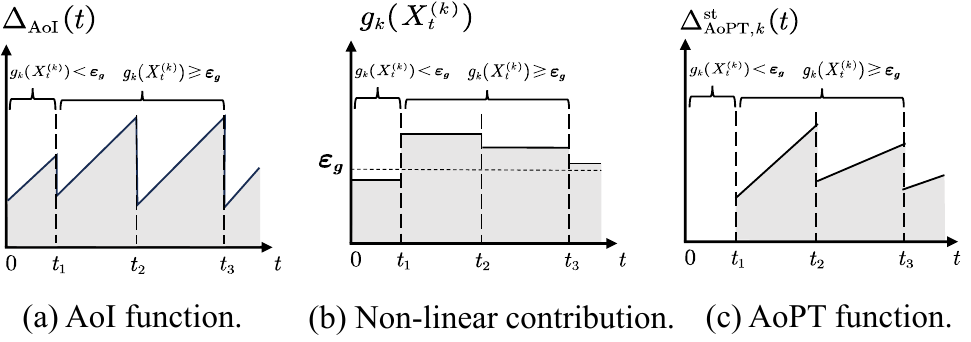}
  \caption{Illustrations of different age-based functions.}
  \label{fig:diff_penalty}
  \vspace{-3mm}
\end{figure}

\textbf{3) Impact of Calibration on AoPT}: Calibration introduces delays due to data suspension. With probability \( p_1 \), the system enters the Calibration phase (Phase 1), which increases the AoPT by a fixed duration \( T_1 \). To derive the AoPT during the Calibration phase, \( \Delta_{\mathrm{AoPT}}^{\mathrm{ca}} \), we first compute the average AoI \( \Delta_{k}^{\text{Ca}} \) during this phase. Within one calibration phase at times \( t_{n-1} \) and \( t_n \), the data update interval is the sum of the calibration time \( \Delta_T \) and the total delay \( d_{k}^{\text{total}} \). Therefore, we have $t_n - t_{n-1} = \Delta_T + d_{k}^{\text{total}}$.
The average AoI during the Calibration phase for device \( k \) is calculated as:
\begin{equation}\label{eq: aoi_calibration}
\begin{aligned}
\Delta _{k}^{\mathrm{Ca}}& =\frac{1}{t_n-t_{n-1}}\int_{t_{n-1}}^{t_n}{\left( t-t_{n-1}+d_{k}^{\mathrm{total}} \right)}dt
\\
&= \frac{1}{\Delta _T+d_{k}^{\mathrm{total}}}\left[ \frac{\left( \Delta _T+d_{k}^{\mathrm{total}} \right) ^2}{2}+d_{k}^{\mathrm{total}}\left( \Delta _T+d_{k}^{\mathrm{total}} \right) \right]
\\
&= \frac{\Delta _T+3d_{k}^{\mathrm{total}}}{2}.
\end{aligned}
\end{equation}

Therefore, the AoPT during the Calibration phase is then expressed as:
\begin{equation}\label{eq:aopt_ca}
\begin{aligned}
\Delta_{\mathrm{AoPT}}^{\mathrm{ca}}
&=\mathbbm{1}_{\left\{ g_{{k}} \geqslant \boldsymbol{\varepsilon}_{\boldsymbol{g}} \right\}} \left[ g_{{k}} \cdot \Delta_{{k}}^{\text{Ca}} \right]  
&= \frac{1}{2}  g_{\widehat{k}}\cdot \left( 3d_{\widehat{k}}^{\text{total}} + \Delta_T \right) ,
\end{aligned}
\end{equation}
where \( \widehat{k} = \mathop{\arg\max}\limits_{k \in \mathcal{K}} \left\{ \mathbbm{1}_{\left\{ g_k\geqslant \boldsymbol{\varepsilon}_{\boldsymbol{g}} \right\}} \left[ g_k\left( X_t^{(k)} \right) \cdot \left( \frac{ \Delta_k}{2}  + d_k^{\text{total}} \right) \right] \right\} \) replaces the original indicate function $\mathbbm{1}_{\left\{ g_k\geqslant \boldsymbol{\varepsilon}_{\boldsymbol{g}} \right\}}$ in the following sections. 

\begin{definition}
\label{definition:calibration_aopt}
Combing Eq. (\ref{eq:AoPT1}) and Eq. (\ref{eq:aopt_ca}), AoPT over the entire cycle (including three phases) is formulated as :
\begin{equation}\label{eq:aopt_cy}
\begin{aligned}
\Delta_{\mathrm{AoPT}}^{\mathrm{cy}} &= p_1 \Delta_{\mathrm{AoPT}}^{\mathrm{ca}} + (1 - p_1) \Delta_{\mathrm{AoPT}}^{\mathrm{st}}\\
&=\frac{g_{\widehat{k}}}{2}  \left[ p_1 \Delta_T + (1 - p_1) \Delta_{\widehat{k}} + (p_1 + 2) d_{\widehat{k}}^{\text{total}} \right],
\end{aligned}
\end{equation}
where \( p_1 \) is the probability that the system enters the Calibration phase. 
\end{definition}

According to Definition \ref{definition:calibration_aopt}, the AoPT over the entire cycle is a weighted sum of the calibration and non-calibration phases.


\section{Problem Formulation}
\label{sec:Problem Formulation}

In this section, real-time multi-UGV system operate in three main phases: Idle (Phase 0), Calibration (Phase 1), and Streaming (Phase 2). The goal of optimizing these networks is to minimize AoPT during the entire cycle, ensuring that the system provides the fresh data for target perception. The objective is to reduce AoPT across all UGVs, which directly affects the real-time accuracy of multi-target detection. The optimization problem can be expressed as follows:
{\color{black}
\begin{equation}\label{OP:AoPT_minimization}
\begin{aligned}
  \mathbf{P}_1:\ & \min_{\left\{ \boldsymbol{B}, \boldsymbol{\Delta}, \Delta_T, \boldsymbol{D}, \mathbf{\Theta } \right\}} \Delta _{\mathrm{AoPT}}^{\mathrm{cy}} \\
  \quad \text{s.t.} \quad
  &(\ref{OP:AoPT_minimization}\mathrm{a})\quad \boldsymbol{\gamma}_{\mathrm{Ca}}(\mathbf{\Theta}_{\mathrm{Ca}})\succeq\gamma_{\mathrm{Ca},0}\mathbf{1}, \\
  &(\ref{OP:AoPT_minimization}\mathrm{b})\quad \boldsymbol{\gamma}_{\mathrm{St}}(\mathbf{\Theta}_{\mathrm{St}})\succeq\gamma_{\mathrm{St},0}\mathbf{1}, \\
  &(\ref{OP:AoPT_minimization}\mathrm{c})\quad \boldsymbol{B}_{\min} \preceq \boldsymbol{B} \preceq \boldsymbol{B}_{\max}, \\
  &(\ref{OP:AoPT_minimization}\mathrm{d})\quad \boldsymbol{\Delta}_{\min} \preceq \boldsymbol{\Delta} \preceq \boldsymbol{\Delta}_{\max}, \\
  &(\ref{OP:AoPT_minimization}\mathrm{e})\quad \boldsymbol{\Delta}_{T,\min} \preceq \boldsymbol{\Delta}_T \preceq \boldsymbol{\Delta}_{T,\max}, \\
  &(\ref{OP:AoPT_minimization}\mathrm{f})\quad \boldsymbol{D}_{\min} \preceq \boldsymbol{D} \preceq \boldsymbol{D}_{\max}, \\
\end{aligned}
\end{equation}
where \(\boldsymbol{B}\), \(\boldsymbol{\Delta}\), \( \boldsymbol{\Delta}_T\) and \(\boldsymbol{D}\) are K-dimensional vectors corresponding to the bandwidth, sampling intervals, calibration intervals, and data packet sizes, respectively, for each UGV \(k \in \mathcal{K}\). \(\mathbf{1}\) is an all-ones vector.} Let $\mathbf{\Theta }=\left[ \mathbf{\Theta }_{\mathrm{Ca}},\mathbf{\Theta }_{\mathrm{St}} \right]$ be the set of model parameters, where $\mathbf{\Theta }_{\mathrm{Ca}}$ represents the model parameters for feature extraction in the calibration phase, and $\mathbf{\Theta }_{\mathrm{St}}$ denotes the parameters for task-specific feature extraction in the streaming generation. Ineqs. (\ref{OP:AoPT_minimization}a) and (\ref{OP:AoPT_minimization}b) are the constraint on calibration and streaming task accuracy, respectively. Ineq. (\ref{OP:AoPT_minimization}c) is the bandwidth constraint, while Ineqs. (\ref{OP:AoPT_minimization}d), Ineq. (\ref{OP:AoPT_minimization}e) and (\ref{OP:AoPT_minimization}f) are the constraints on the sampling intervals, calibration intervals and data packet size, respectively. According to Proposition \ref{proposition:decomposition}, the original optimization problem (\ref{OP:AoPT_minimization}) can be decomposed into two subproblems in
 the following subsections.

\begin{proposition}
\label{proposition:decomposition}
The original problem of minimizing AoPT $\mathbf{P}_1$ can be decomposed into two subproblems $\mathbf{P}_2$ and $\mathbf{P}_3$, corresponding to the calibration phase and the streaming phase, respectively. $\mathbf{P}_1$ can be solved independently by different phases.
\end{proposition}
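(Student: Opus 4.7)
The plan is to exploit the fact that the objective $\Delta_{\mathrm{AoPT}}^{\mathrm{cy}}$ is already written in Definition~\ref{definition:calibration_aopt} as a convex combination of a calibration-only term and a streaming-only term, and that the remaining design variables partition cleanly along the same lines. Concretely, I would begin by rewriting the cycle-averaged AoPT from Eq.~(\ref{eq:aopt_cy}) as $\Delta_{\mathrm{AoPT}}^{\mathrm{cy}}=p_1\,\Delta_{\mathrm{AoPT}}^{\mathrm{ca}}(\mathbf{\Theta}_{\mathrm{Ca}},\Delta_T,\boldsymbol{B}^{\mathrm{Ca}},\boldsymbol{D}^{\mathrm{Ca}})+(1-p_1)\,\Delta_{\mathrm{AoPT}}^{\mathrm{st}}(\mathbf{\Theta}_{\mathrm{St}},\boldsymbol{\Delta},\boldsymbol{B}^{\mathrm{St}},\boldsymbol{D}^{\mathrm{St}})$, where the bandwidth and payload vectors are split into the portions active in each phase (the phases are temporally disjoint by the $M/G/\infty$ model of Sec.~\ref{Sec: Camera Calibration and Multiview Fusion}, so resources used in Phase~1 do not couple with those used in Phase~2).

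Next, I would verify the variable/constraint separability. Inspecting Ineqs.~(\ref{OP:AoPT_minimization}a)--(\ref{OP:AoPT_minimization}f), the calibration-accuracy constraint $\boldsymbol{\gamma}_{\mathrm{Ca}}(\mathbf{\Theta}_{\mathrm{Ca}})\succeq\gamma_{\mathrm{Ca},0}\mathbf{1}$ involves only $\mathbf{\Theta}_{\mathrm{Ca}}$, while $\boldsymbol{\gamma}_{\mathrm{St}}(\mathbf{\Theta}_{\mathrm{St}})\succeq\gamma_{\mathrm{St},0}\mathbf{1}$ involves only $\mathbf{\Theta}_{\mathrm{St}}$. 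The box constraints (\ref{OP:AoPT_minimization}c)--(\ref{OP:AoPT_minimization}f) are Cartesian products that can likewise be split into a calibration block $(\boldsymbol{B}^{\mathrm{Ca}},\boldsymbol{D}^{\mathrm{Ca}},\Delta_T)$ and a streaming block $(\boldsymbol{B}^{\mathrm{St}},\boldsymbol{D}^{\mathrm{St}},\boldsymbol{\Delta})$. With the variable space written as a product $\mathcal{F}=\mathcal{F}_{\mathrm{Ca}}\times\mathcal{F}_{\mathrm{St}}$ and the objective written as a sum of two functions each depending on only one factor, I would invoke the standard separability lemma
\begin{equation}
\min_{(x,y)\in\mathcal{F}_{\mathrm{Ca}}\times\mathcal{F}_{\mathrm{St}}}\bigl[f(x)+g(y)\bigr]=\min_{x\in\mathcal{F}_{\mathrm{Ca}}}f(x)+\min_{y\in\mathcal{F}_{\mathrm{St}}}g(y),
\end{equation}
which yields the two subproblems
\begin{align}
\mathbf{P}_2:\ &\min\ p_1\,\Delta_{\mathrm{AoPT}}^{\mathrm{ca}}\quad\text{s.t.\ (\ref{OP:AoPT_minimization}a) and the Ca-blocks of (\ref{OP:AoPT_minimization}c)--(\ref{OP:AoPT_minimization}f),}\\
\mathbf{P}_3:\ &\min\ (1-p_1)\,\Delta_{\mathrm{AoPT}}^{\mathrm{st}}\quad\text{s.t.\ (\ref{OP:AoPT_minimization}b) and the St-blocks.}
\end{align}

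I expect the main obstacle to be the subtle coupling introduced by the $\arg\max$ index $\widehat{k}$, which, as written after Eq.~(\ref{eq:aopt_ca}), is selected by a criterion that mixes streaming-phase quantities ($\Delta_k$, $d_k^{\text{total}}$ in the streaming regime) yet also appears inside $\Delta_{\mathrm{AoPT}}^{\mathrm{ca}}$ through $g_{\widehat{k}}$ and $d_{\widehat{k}}^{\text{total}}$. To dispose of this, I would argue that $\widehat{k}$ is a discrete selection variable independent of the continuous parameters being optimized: for any fixed candidate index $k^{\star}\in\mathcal{K}$, both phase objectives decompose as above, and the outer minimization over $k^{\star}$ preserves separability because the selection is common to both terms. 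Alternatively, under the practical assumption in Sec.~\ref{Sec: Data Timeliness Analysis} that the target count $g_k$ is piecewise constant on the optimization horizon, $\widehat{k}$ can be treated as an exogenous constant during each cycle, and the decomposition becomes immediate. The remainder is bookkeeping: the factors $p_1$ and $1-p_1$ are positive constants and can be absorbed without affecting the minimizers, giving $\mathbf{P}_2$ and $\mathbf{P}_3$ that may be optimized independently.
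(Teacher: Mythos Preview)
Your high-level strategy—exploit the additive structure of $\Delta_{\mathrm{AoPT}}^{\mathrm{cy}}$ and then invoke a product-space separability lemma—is the same as the paper's, but the partition of variables you propose is not the one the paper uses, and the one you propose does not actually decouple.

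The issue is the shared delay term $d_{\widehat{k}}^{\text{total}}$. You write $\Delta_{\mathrm{AoPT}}^{\mathrm{ca}}$ as a function of calibration-only variables $(\mathbf{\Theta}_{\mathrm{Ca}},\Delta_T,\boldsymbol{B}^{\mathrm{Ca}},\boldsymbol{D}^{\mathrm{Ca}})$, but from Eq.~(\ref{eq:aopt_ca}) one has $\Delta_{\mathrm{AoPT}}^{\mathrm{ca}}=\tfrac{1}{2}g_{\widehat{k}}(\Delta_T+3d_{\widehat{k}}^{\text{total}})$, and the $d_{\widehat{k}}^{\text{total}}$ appearing here is the \emph{streaming} transmission/inference delay—the delay of the first perception update that is delivered after calibration ends (see the derivation of Eq.~(\ref{eq: aoi_calibration}), where the update interval is $\Delta_T+d_k^{\text{total}}$). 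That quantity is determined by the streaming-phase packet size and bandwidth, not by any calibration-phase resources you may introduce. Your temporal-disjointness argument justifies allocating $\boldsymbol{B},\boldsymbol{D}$ independently across phases, but it does not change which phase's resources govern the staleness of perception data \emph{during} calibration. Hence your $\mathbf{P}_2$ as stated still depends on streaming-phase variables, and the separability lemma does not apply to your split.

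The paper resolves this by first substituting Eqs.~(\ref{eq:aopt_ca_final})--(\ref{eq:aopt_st_final}) into $\Delta_{\mathrm{AoPT}}^{\mathrm{cy}}$ and expanding to obtain
\[
\Delta_{\mathrm{AoPT}}^{\mathrm{cy}}=g_{\widehat{k}}\Bigl[\tfrac{p_1}{2}\Delta_T+\bigl(\tfrac{p_1}{2}+1\bigr)d_{\widehat{k}}^{\text{total}}+\tfrac{1-p_1}{2}\Delta_{\widehat{k}}\Bigr],
\]
and then regrouping by \emph{variable} rather than by \emph{phase label}: the lone $\Delta_T$ term becomes the calibration subproblem $\mathbf{P}_2$, while all of the $d_{\widehat{k}}^{\text{total}}$ and $\Delta_{\widehat{k}}$ terms—including the $\tfrac{3p_1}{2}d_{\widehat{k}}^{\text{total}}$ contribution originating from the calibration AoPT—are absorbed into the streaming subproblem $\mathbf{P}_3$. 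After this regrouping the separability is genuine. The paper also folds the hard accuracy constraints (\ref{OP:AoPT_minimization}a)--(\ref{OP:AoPT_minimization}b) into the subproblem objectives via Lagrange multipliers $\lambda_{\mathrm{Ca}},\lambda_{\widehat{k}}$, which your proposal leaves as side constraints. Your handling of the $\widehat{k}$ coupling, on the other hand, is more explicit than the paper's (which simply solves per-$k$ and then selects the bottleneck index), and that part of your argument is fine.
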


\begin{proof}
According to Eq.~(\ref{eq:aopt_cy}), the original optimization problem $\mathbf{P}_1$ aims to minimize the AoPT over the entire cycle is given by $\Delta_{\mathrm{AoPT}}^{\mathrm{cy}} = p_1 \Delta_{\mathrm{AoPT}}^{\mathrm{ca}} + (1 - p_1) \Delta_{\mathrm{AoPT}}^{\mathrm{st}}$, where
\begin{subequations}\label{eq:aopt_final}
\begin{align}
	\Delta _{\text{AoPT}}^{\text{ca}} &= \frac{1}{2}\,g_{\widehat{k}}\left( \Delta _T + 3d_{\widehat{k}}^{\text{total}} \right), \label{eq:aopt_ca_final}\\
	\Delta _{\text{AoPT}}^{\text{st}} &= g_{\widehat{k}}\left( \frac{\Delta _{\widehat{k}}}{2} + d_{\widehat{k}}^{\text{total}} \right). \label{eq:aopt_st_final}
\end{align}
\end{subequations}
It is noted that $\widehat{k}\;=\;\arg\max_{k\in\mathcal{K}}
        \Bigl\{
            \mathbbm{1}_{\{g_k\ge\varepsilon_g\}}\,
            g_k\Bigl(\tfrac{\Delta_k}{2}+d_k^{\text{total}}\Bigr)
        \Bigr\}$,
which denotes the ``bottleneck'' UGV that dominates the AoPT. By substituting Eqs.~(\ref{eq:aopt_ca_final}) and (\ref{eq:aopt_st_final}) into $\Delta_{\mathrm{AoPT}}^{\mathrm{cy}} $, we obtain:
\begin{small}
\begin{equation}
\begin{aligned}\label{eq:aopt_cy_2}
\Delta_{\mathrm{AoPT}}^{\mathrm{cy}} &= p_1 \left[ \frac{1}{2} \, g_{\widehat{k}} \left( \Delta_T + 3 d_{\widehat{k}}^{\mathrm{total}} \right) \right] + (1 - p_1) \left[ g_{\widehat{k}} \left( \frac{ \Delta_{\widehat{k}} }{2} + d_{\widehat{k}}^{\mathrm{total}} \right) \right] \\
&= g_{\widehat{k}} \left[ \frac{p_1}{2} \Delta_T + \left( \frac{3 p_1}{2} + 1 - p_1 \right) d_{\widehat{k}}^{\mathrm{total}} + \frac{ (1 - p_1) }{2} \Delta_{\widehat{k}} \right] \\
&= g_{\widehat{k}}\Bigg[ \underset{\text{Calibration\ rate}}{\underbrace{\frac{p_1}{2}\Delta _T}}+\underset{\text{Transmission\ performance}}{\underbrace{\left( \frac{p_1}{2}+1 \right) d_{\widehat{k}}^{\text{total}}+\frac{\left( 1-p_1 \right)}{2}\Delta _{\widehat{k}}}} \Bigg] .
\end{aligned}
\end{equation}
\end{small}
We observe that the variables $\Delta_T$ in the calibration phase are independent of the variables $\Delta_{\widehat{k}}$ and $d_{\widehat{k}}^{\mathrm{total}}$ in the streaming phase. Furthermore, the resource allocations in each phase are independent. Therefore, the original problem $\mathbf{P}_1$ can be decomposed into two independent subproblems $\mathbf{P}_2$ and $\mathbf{P}_3$ regarding \textit{calibration rate} and \textit{transmission performance}, respectively. 

1) For the calibration phase, we integrate the calibration rate in Eq. (\ref{eq:aopt_cy_2}) and the constraint of accuracy in Ineqs. (\ref{OP:AoPT_minimization}a) using Lagrangian multipliers. Therefore, $\mathbf{P}_2$ is given by:
\begin{equation}
\begin{aligned}\label{OP:calibration_P2}
&\mathbf{P}_2:\  \min_{\left\{ B_{k},\Delta _T,D_{k},\mathbf{\Theta }_{\mathrm{Ca}} \right\} } \underset{\text{Calibration rate}}{\underbrace{{p_1}  \cdot \Delta _T}} \ -\ \underset{\text{Calibration accuracy}}{\underbrace{\lambda _{\mathrm{Ca}}\left( \boldsymbol{\gamma }_{\mathrm{Ca,k}}(\mathbf{\Theta }_{\mathrm{Ca}})-\gamma _{\mathrm{Ca},0} \right) }}, \\
&\text{s.t.} \quad
B_{\min}\leq B_{k}\leq B_{\max}, \quad D_{\min}\leq D_{k}\leq D_{\max}, \\
&\quad \quad \max \left( \Delta _{T,\min},\frac{D_{k}}{C_{k}} \right) \leqslant \Delta _{T} \leqslant \Delta _{T,\max},
\end{aligned}
\end{equation}
where $\lambda_{\mathrm{Ca}}$ represents the weight of calibration accuracy (Lagrange weight), and $\boldsymbol{\gamma }_{\mathrm{Ca}}(\mathbf{\Theta }_{\mathrm{Ca}})$ denotes the calibration accuracy function. {\color{black}The final constraint in (\ref{OP:calibration_P2}) ensures that the calibration interval \(\Delta_T\) is long enough to accommodate both data transmission and feature extraction, while remaining within an upper limit to preserve freshness. Specifically, \(D_k/C_k\) represents the minimal time needed to transmit a calibration packet of size \(D_k\) over a link of capacity \(C_k = B_k \log_2(1+\mathrm{SNR})\), and \(\Delta_{T,\min}\) accounts for the minimum required processing time; thus, the lower bound in (\ref{OP:calibration_P2}) reflects the maximum of these two factors. The upper bound \(\Delta_{T,\max}\) prevents excessive delays that would degrade calibration quality and perception freshness. $\mathbf{P}_2$ is solved for each \(k\in\mathcal{K}\); afterwards the worst-case index \(\widehat{k}\) is identified and substituted back into Eq. (\ref{eq:aopt_cy_2}).} In order to solve the problem $\mathbf{P}_2$, we need to design efficient feature matching algorithm in Sec. \ref{sec:m-Calibration}.

2) For the streaming phase, we integrate the transmission performance in Eq. (\ref{eq:aopt_cy_2}) and the constraint of accuracy in Ineqs. (\ref{OP:AoPT_minimization}b) using Lagrangian multipliers. Therefore, the subproblem $\mathbf{P}_3$ is given by:
\begin{equation}
\begin{aligned}\label{OP:transmission}
&\mathbf{P}_3:\ \min_{\left\{ B_{\widehat{k}},D_{\widehat{k}},\Delta _{\widehat{k}},\mathbf{\Theta }_{\mathrm{St}} \right\}} \,\,\underset{\mathrm{Transmission}\ \mathrm{performance}}{\underbrace{\left( \frac{p_1}{2} + 1 \right) d_{\widehat{k}}^{\mathrm{total}} + \frac{1 - p_1}{2}  \Delta_{\widehat{k}}}}\\
& \quad \quad \quad \quad \quad \quad \quad \quad \underset{\mathrm{Inference\ } \mathrm{performance}}{-\underbrace{{\lambda }_{\widehat{k}}\left( \boldsymbol{\gamma }_{\mathrm{St},\widehat{k}}(\mathbf{\Theta }_{\mathrm{St}})-\boldsymbol{\gamma }_{\mathrm{MOD},0} \right) }},\\
  \quad &\text{s.t.} \quad
   B_{\min}\leq B_{\widehat{k}}\leq B_{\max}\ ,
  \ D_{\min}\leq D_{\widehat{k}}\leq D_{\max}, \\
  & \quad \quad \max \left( \Delta _{\min},D_{\widehat{k}}\cdot C_{k}^{-1} \right) \leqslant \Delta _{\widehat{k}}\leqslant \Delta _{\max},
  \end{aligned}
\end{equation}
where ${\lambda }_{\widehat{k}}$ is the weight balancing transmission and inference performance\footnote{{\color{black}\(\lambda_{\mathrm{Ca}}\) and \(\lambda_{\hat{k}}\) are manually adapted according to the prevailing channel conditions. Under good channel quality (high \(C_{\widehat{k}}\)), a smaller \(\lambda\) is used to prioritize faster updates, whereas under poor channel quality (low \(C_{\widehat{k}}\)), a larger \(\lambda\) is selected to ensure sufficient accuracy despite limited transmission resources.}}. Since $\mathbf{P}_3$ aims to balance the trade-off between transmission budget and inference performance, we can address this subproblem by utilizing an IB-based theoretical framework, which offers the task-specific feature encoder/decoder as discussed in Section \ref{sec: Task-Oriented Encoding with Information Bottleneck}.
{\hfill $\blacksquare$}
\end{proof}

\section{Methodology}
\label{sec:Methodology} 
\begin{figure}[t]
  \centering
  \includegraphics[width=0.48\textwidth]{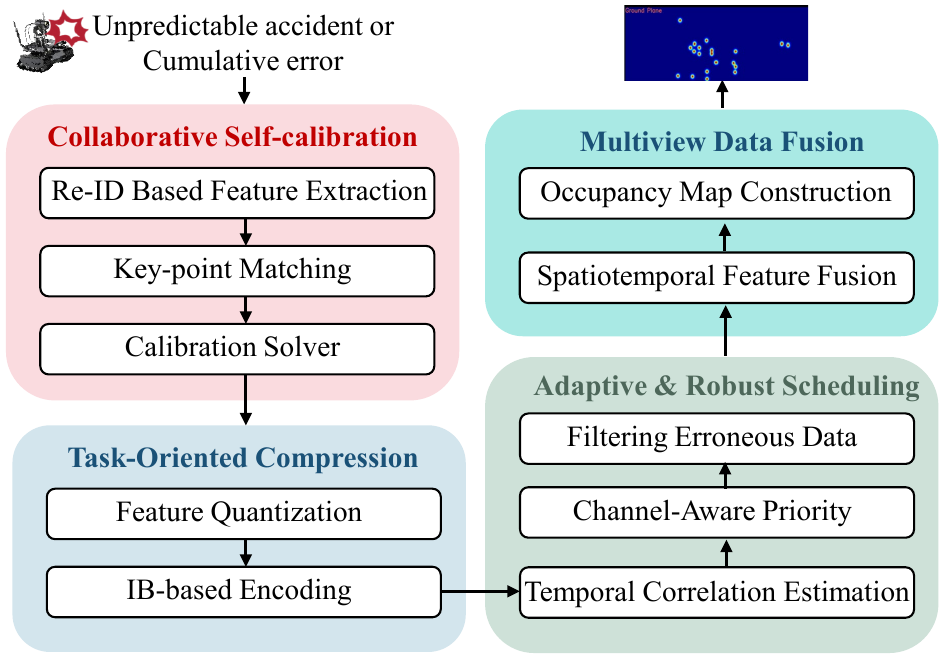}
  \caption{Framework of R-ACP.}
  \label{fig:R-ACP_flow}
  \vspace{-3mm}
\end{figure}

This section elaborates R-ACP's design: 1) Collaborative Self-calibration: R-ACP uses Re-ID to share perception data for real-time extrinsic calibration. 2) Task-Oriented Compression: After calibration, visual features are compressed for pedestrian tracking and Re-ID. 3) Adaptive \& Robust Scheduling: Features are further compressed by temporal correlation. Considering the varied packet loss rate, we calculate the channel-aware priorities and filter out erroneous data. 4) Multiview Data Fusion: It generates the occupancy map. The framework of R-ACP is shown in Fig.~\ref{fig:R-ACP_flow}.

\subsection{Collaborative Self-calibration}
\label{sec:m-Calibration}

This section addresses subproblem \(\mathbf{P}_2\) from Sec. \ref{sec:Problem Formulation}, focusing on minimizing calibration transmission rate while maximizing calibration accuracy through Re-ID based method.
\subsubsection{Comparative Analysis of Various Matching Techniques}  
\begin{figure}[t]
  \centering
  \subfigure[Rotation error vs comm. cost.]{
    \includegraphics[width=4.3 cm]{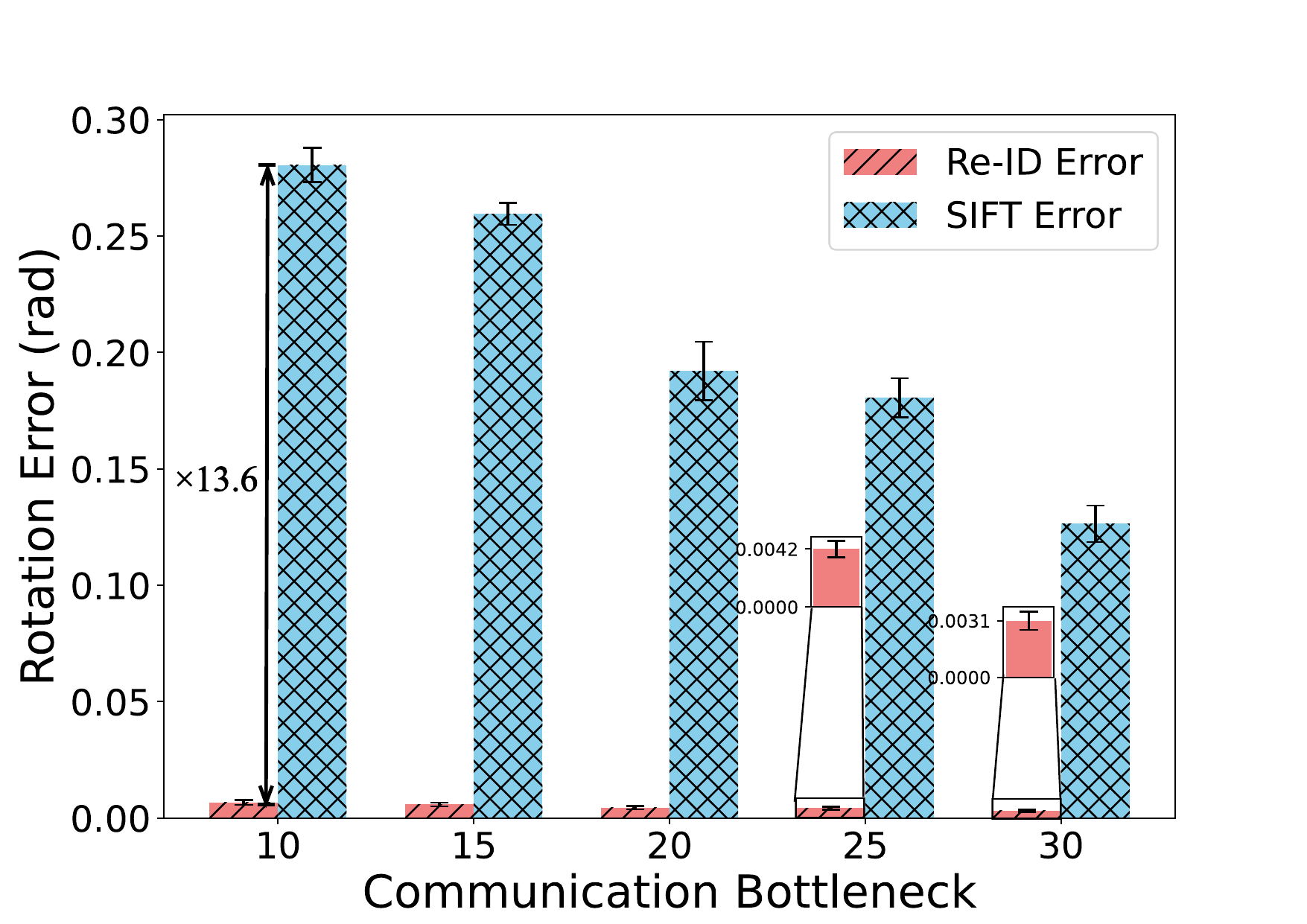}
    \label{fig:rotation_error_vs_cost}
  }
  \hspace{-0.5cm}  
  \subfigure[Translation error vs comm. cost.]{
    \includegraphics[width=4.3 cm]{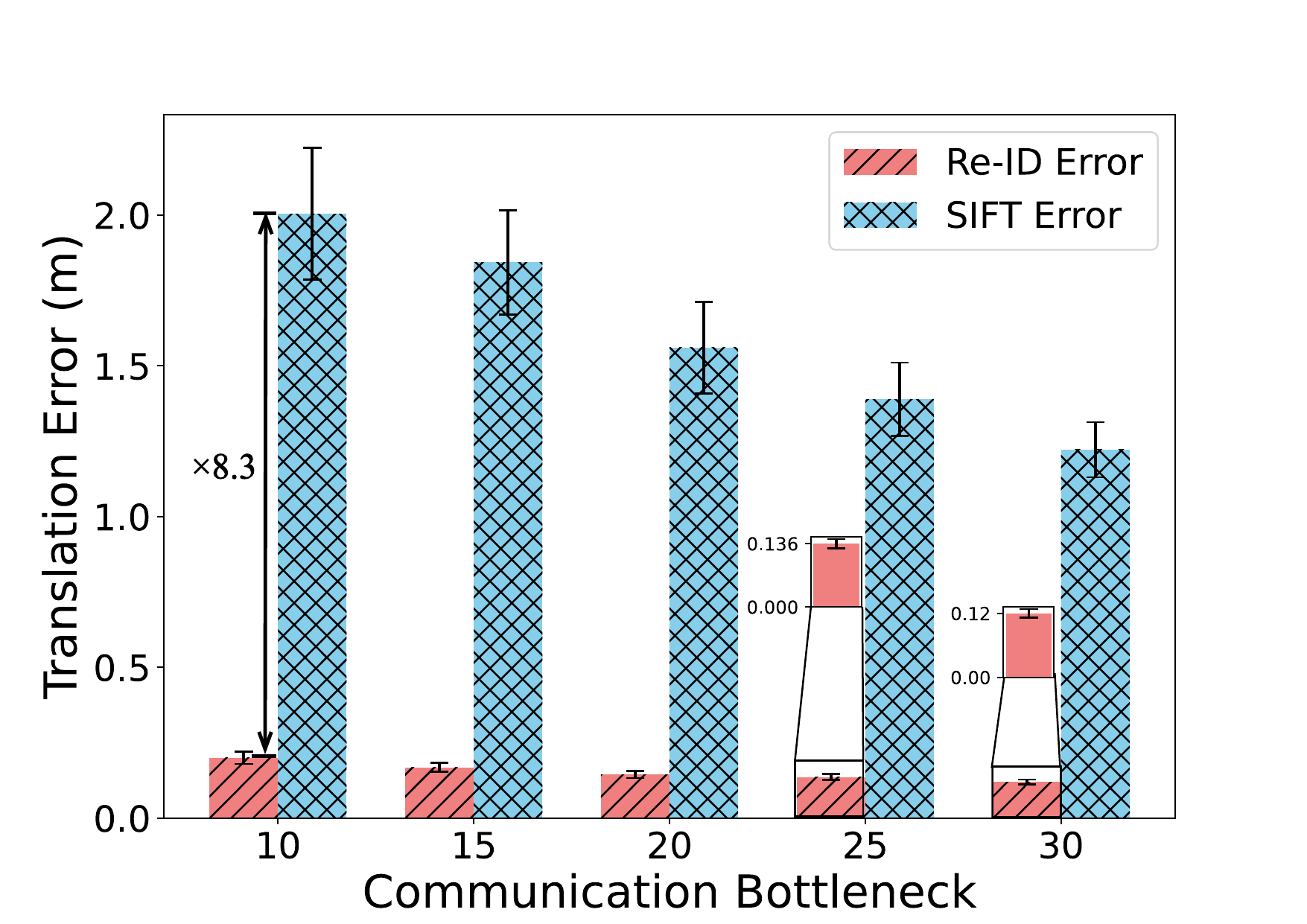}
    \label{fig:translation_error_vs_cost}
  }
  \caption{Calibration accuracy vs communication bottleneck for different errors. Fig. \ref{fig:rotation_error_vs_cost} shows the rotation error, while \mbox{Fig. \ref{fig:translation_error_vs_cost}} demonstrates the translation error.}
  \label{fig:calibration_accuracy_vs_communication_bottleneck}
  \vspace{-3mm}
\end{figure}

\begin{figure}[t]
  \centering
  \subfigure[Calibration errors using Re-ID.]{
    \includegraphics[width=4.2 cm]{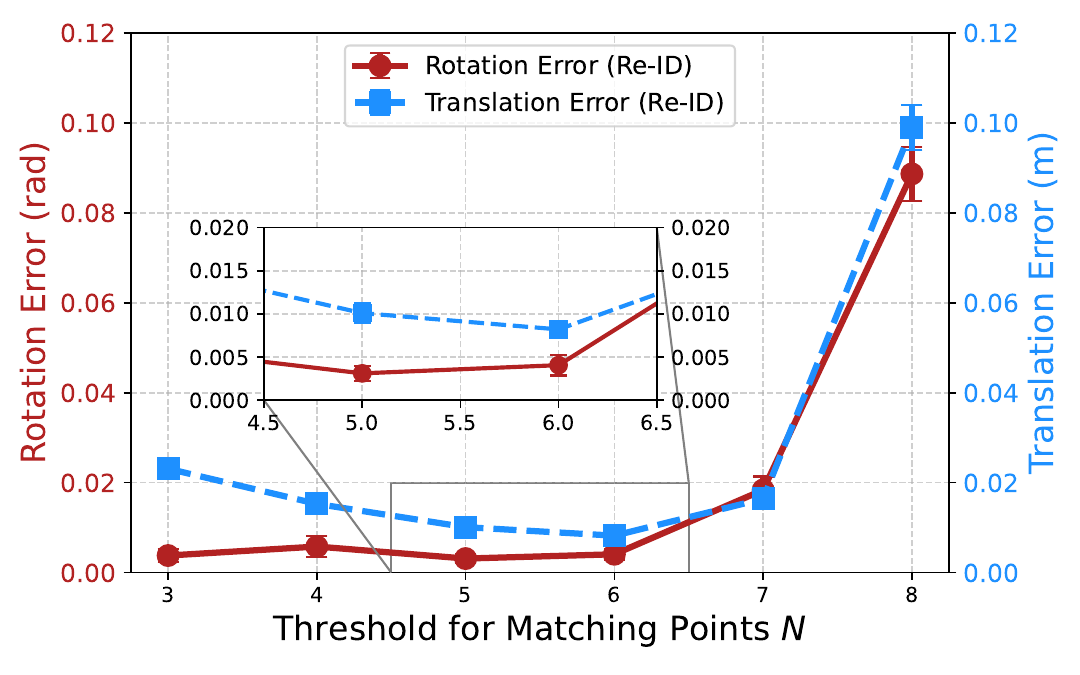}
    \label{fig:reid_errors}
  }
  \hspace{-0.5cm}  
  \subfigure[Calibration errors using SIFT.]{
    \includegraphics[width=4.2 cm]{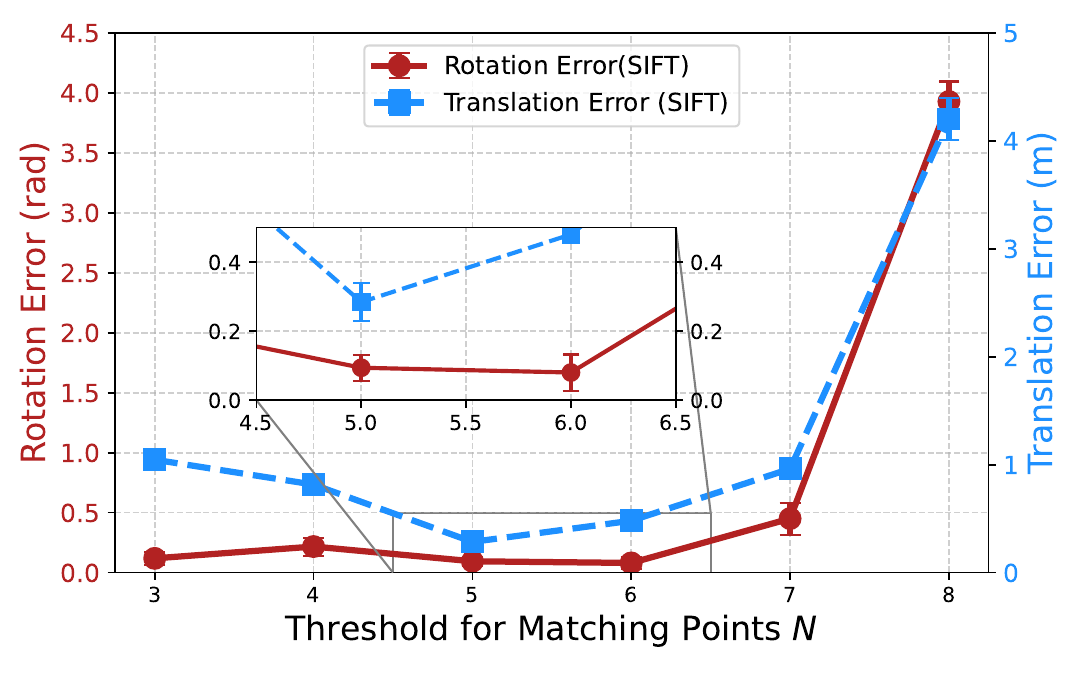}
    \label{fig:sift_errors}
  }
  \caption{Comparison of rotation and translation errors using Re-ID and SIFT methods for different thresholds $N$.}
  \label{fig:rotation_translation_comparison}
  \vspace{-3mm}
\end{figure}

As shown in Fig. \ref{fig:calibration_flow}, we need to choose a suitable algorithm to obtain the relevant key-points for the calibration solver.
Facial recognition extracts unique features to match individuals across camera views~\cite{du2022elements}. SIFT extracts distinctive, scale- and rotation-invariant key-points suitable for viewpoint matching~\cite{lowe2004distinctive}. Both are common in camera calibration. We compare them with our Re-ID-based approach, evaluating calibration accuracy using the extrinsic error \( e_{\text{extrinsic}} \) between the recalibrated extrinsic matrix \( [\mathbf{R}_{\text{rec}} | \mathbf{t}_{\text{rec}}] \) and the ground truth \( [\mathbf{R}_{\text{gt}} | \mathbf{t}_{\text{gt}}] \) by
$e_{\text{extrinsic}} = \frac{\| [\mathbf{R}_{\text{rec}} | \mathbf{t}_{\text{rec}}] - [\mathbf{R}_{\text{gt}} | \mathbf{t}_{\text{gt}}] \|_F}{\| [\mathbf{R}_{\text{gt}} | \mathbf{t}_{\text{gt}}] \|_F} \times 100\%.$

Experiments show facial recognition performs poorly, with high data transmission (856MB for two matches) and a 24.3\% extrinsic error. SIFT yields more matches but has higher errors (42.5\%). Re-ID, leveraging pedestrian attributes, reduces extrinsic error to under 0.6\%. We evaluate Re-ID and SIFT under varying communication constraints. In Fig.~\ref{fig:calibration_accuracy_vs_communication_bottleneck}, errors decrease as communication constraints ease. Re-ID consistently outperforms SIFT, with rotation error 13.6 times lower and translation error 8.3 times lower. Fig.~\ref{fig:rotation_translation_comparison} shows a matching threshold of 5 key-points yields optimal calibration, while incorrect matches reduce performance.

\subsubsection{Feature Extraction and Similarity Matching}  
\label{sec:Feature Extraction and Similarity Matching}

To improve the calibration accuracy in \(\mathbf{P}_2\), Re-ID treats pedestrians as key-points. Given bounding boxes \( \mathcal{B}_k = \{b_1, b_2, \dots, b_N\} \) from a pretrained model (YOLOV5~\cite{9780215}), the Re-ID network \( \mathcal{F}(\cdot; \mathbf{\Theta}) \) extracts feature vectors \( \mathbf{f}_i = \mathcal{F}(b_i; \mathbf{\Theta}) \in \mathbb{R}^d \). Similarity between features across views \(i\) and \(j\) is computed via Euclidean distance: $d_{\text{sim}}\left(\mathbf{f}_i, \mathbf{f}_j\right) = \| \mathbf{f}_i - \mathbf{f}_j \|_2$. To optimize calibration, distances are ranked, and the top \(N\) matches are selected to minimize matching error:
\begin{equation}
\mathcal{M}_{ij} = \underset{M \subset \mathcal{B}_i \times \mathcal{B}_j}{\mathrm{argmin}} \sum_{(b_i, b_j) \in M} d_{\text{sim}}\left( \mathcal{F}\left(b_i\right), \mathcal{F}\left(b_j\right) \right),
\end{equation}
where $\mathcal{M}_{ij}$ represents key-point matching, focusing on the most similar features to minimize extrinsic error \( e_{\text{extrinsic}} \).

\subsubsection{Quantization and Communication Cost}  
\label{sec:Quantization}

To reduce calibration rate in \(\mathbf{P}_2\), feature vectors \( \mathbf{f}_i \) are quantized based on channel quality. The communication cost of calibration is $C_{1} = \sum_{i=1}^{N} q_i \cdot d$, where \( q_i \) is the quantization level and \( d \) is dimensionality. Quantization adapts to SNR to meet available capacity \( C_{\text{ava}} \)\cite{10368103}:
\begin{equation}
q_i = \arg\min_{q_i} \left\{ q_i \cdot d \mid C_{1} \leq C_{\text{ava}} \right\}.
\end{equation}
Higher SNR allows finer quantization; lower SNR uses coarser quantization to reduce overhead. Combining Re-ID and adaptive quantization, we optimize calibration accuracy and rates, efficiently addressing \(\mathbf{P}_2\) in multi-UGV systems.

\subsection{Task-Oriented Compression}
\label{sec: Task-Oriented Encoding with Information Bottleneck}

The Information Bottleneck (IB) method offers a principled framework to balance the compression of input camera data \(X\) and the task-relevant information for inference about \(Y\)\cite{fang2024prioritized}. The IB objective is formulated as:
\begin{equation}\label{eq:IB-def}
\max_{\Theta} \quad I(Z; Y) - \lambda I(X; Z),
\end{equation}
where \(I(Z; Y)\) is the mutual information between the compressed representation \(Z\) and the target variable \(Y\), capturing inference performance, and \(I(X; Z)\) is the mutual information between \(X\) and \(Z\), representing the amount of information retained for transmission. The parameter \(\lambda\) controls the trade-off between the compression and inference accuracy.

\begin{proposition}
\label{proposition: P3-is-IB}
The optimization problem \(\mathbf{P}_3\) is equivalent to the Information Bottleneck (IB) problem defined in Eq.~(\ref{eq:IB-def}); specifically, minimizing transmission delay and sampling interval corresponds to minimizing \(I(X; Z)\), while maximizing inference performance corresponds to maximizing \(I(Z; Y)\).
\end{proposition}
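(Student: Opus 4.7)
The plan is to establish the claimed equivalence by mapping each physical quantity in $\mathbf{P}_3$ to its information-theoretic counterpart and then rearranging the objective into the canonical IB Lagrangian. I first identify $X$ with the raw camera observation at UGV $\widehat{k}$, $Z$ with the task-relevant feature produced by the encoder parameterized by $\mathbf{\Theta}_{\mathrm{St}}$ and transmitted over the channel, and $Y$ with the downstream task target (the ground-plane occupancy map and Re-ID label). Under this mapping, the whole proposition reduces to two monotone correspondences, which I will prove in turn.

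Next, I would handle the transmission-performance terms. By Shannon's source coding theorem, the minimum rate at which $Z$ can be reliably represented in bits is proportional to $I(X;Z)$, so the operational packet size satisfies $D_{\widehat{k}} = \kappa\, I(X;Z)$ for some coding-efficiency constant $\kappa>0$. Substituting into $d_{\widehat{k}}^{T} = D_{\widehat{k}}/C_{\widehat{k}}$ shows that $d_{\widehat{k}}^{T}$, and therefore $d_{\widehat{k}}^{\mathrm{total}}=d_{\widehat{k}}^{T}+d_{\widehat{k}}^{I}$, is affine and strictly increasing in $I(X;Z)$. Moreover, the lower-bound constraint $\max(\Delta_{\min}, D_{\widehat{k}}/C_{\widehat{k}}) \leq \Delta_{\widehat{k}}$ in $\mathbf{P}_3$ becomes binding in the bandwidth-limited regime, so $\Delta_{\widehat{k}}$ is also monotone non-decreasing in $I(X;Z)$. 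Consequently, the weighted sum $\bigl(\tfrac{p_1}{2}+1\bigr) d_{\widehat{k}}^{\mathrm{total}} + \tfrac{1-p_1}{2}\Delta_{\widehat{k}}$ can be written as $\alpha\, I(X;Z) + \text{const}$ for some $\alpha>0$ that collects the channel- and weight-dependent coefficients.

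Third, I would relate the inference-performance term to $I(Z;Y)$. Invoking Fano's inequality (or its continuous analogue), the task error is lower bounded by a strictly decreasing function of $I(Z;Y)$, which implies that $\boldsymbol{\gamma}_{\mathrm{St},\widehat{k}}(\mathbf{\Theta}_{\mathrm{St}})$ is a strictly increasing function of $I(Z;Y)$. Hence $-\lambda_{\widehat{k}}\bigl(\boldsymbol{\gamma}_{\mathrm{St},\widehat{k}}(\mathbf{\Theta}_{\mathrm{St}})-\boldsymbol{\gamma}_{\mathrm{MOD},0}\bigr)$ can be rewritten, up to a monotone transformation and an additive constant, as $-\beta\, I(Z;Y)$ for some $\beta>0$. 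Combining the two correspondences, the objective of $\mathbf{P}_3$ is equivalent to
\begin{equation}
\min_{\mathbf{\Theta}_{\mathrm{St}}} \ \alpha\, I(X;Z) - \beta\, I(Z;Y) \;\equiv\; \max_{\mathbf{\Theta}_{\mathrm{St}}} \ I(Z;Y) - \lambda\, I(X;Z),
\end{equation}
after dividing by $\beta$ and setting $\lambda = \alpha/\beta$, which is exactly the IB objective in Eq.~(\ref{eq:IB-def}).

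The main obstacle will be the rigorous step of identifying the \emph{operational} packet size $D_{\widehat{k}}$ with the \emph{information-theoretic} quantity $I(X;Z)$ and the inference accuracy $\boldsymbol{\gamma}_{\mathrm{St},\widehat{k}}$ with $I(Z;Y)$, since both require the encoder/decoder pair $\mathbf{\Theta}_{\mathrm{St}}$ to approach the Shannon–Fano limits. To make the argument self-contained in the realistic non-asymptotic regime, I plan to invoke a variational surrogate of the mutual-information terms (as in the Deep Variational IB framework), so that the monotone correspondences hold for the neural encoder used in practice rather than only in the idealized Shannon-optimal limit; the remaining algebra is then a straightforward rearrangement.
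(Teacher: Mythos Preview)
Your proposal is correct and follows essentially the same approach as the paper: both establish $D_{\widehat{k}} \propto I(X;Z)$ (the paper via the deterministic-encoder identity $H(Z|X)=0 \Rightarrow H(Z)=I(X;Z)$, you via Shannon source coding), use the binding lower-bound constraint to tie $\Delta_{\widehat{k}}$ to $I(X;Z)$, and then argue that inference accuracy tracks $I(Z;Y)$ to arrive at $\min[\alpha\,I(X;Z)-\beta\,I(Z;Y)]$. Your version is somewhat more careful---invoking Fano's inequality explicitly and flagging the asymptotic-versus-operational gap that you propose to close with a variational surrogate---whereas the paper's argument is more heuristic on both points, but the logical skeleton is identical.
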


\begin{proof}
First, the total delay \(d_{\widehat{k}}^{\text{total}} = d_{\widehat{k}}^T + d_{\widehat{k}}^I\) includes the transmission delay \(d_{\widehat{k}}^T = \frac{D_{\widehat{k}}}{C_{\text{ava}}}\), where \(D_{\widehat{k}}\) is the data packet size and \(C_{\text{ava}}\) is the available channel capacity.
Since \(D_{\widehat{k}}\) is proportional to the entropy \(H(Z)\) of the transmitted feature \(Z\), and for deterministic encoders \(H(Z|X) = 0\), we have \(D_{\widehat{k}} = H(Z) = I(X; Z)\). Therefore, we have \(d_{\widehat{k}}^{\text{total}}\propto I(X; Z)\).
Given \(C_{\text{ava}}\), the sampling interval \(\Delta_{\widehat{k}}\) satisfies \(\Delta_{\widehat{k}} \geq \frac{D_{\widehat{k}}}{C_{\text{ava}}} = \frac{I(X; Z)}{C_{\text{ava}}}\), which implies \(\frac{I(X; Z)}{\Delta_{\widehat{k}}} \leq C_{\text{ava}}\), showing that \(\Delta_{\widehat{k}} \propto I(X; Z)\) if \(C_{\text{ava}}\) is fixed. Thus, minimizing \(d_{\widehat{k}}^{\text{total}}\) and \(\Delta_{\widehat{k}}\) under communication constraints corresponds to minimizing \(I(X; Z)\).
On the other hand, inference performance depends on how much information \(Z\) retains about \(Y\), quantified by \(I(Z; Y)\). Maximizing inference performance corresponds to maximizing \(I(Z; Y)\).
Therefore, the objective of \(\mathbf{P}_3\) can be reformulated as $\min \left[ \alpha \cdot I(X; Z) - \beta \cdot I(Z; Y) \right]$, where \(\alpha\) and \(\beta\) are positive constants derived from the weights and coefficients in \(\mathbf{P}_3\), which aligns with the IB problem defined in Eq.~(\ref{eq:IB-def}).
{\hfill $\blacksquare$\par}
\end{proof}

\subsubsection{Variational Approximation for IB}

Due to the computational complexity of directly estimating mutual information, we employ a variational approximation method to derive a lower bound for \(I(Z; Y)\) and an upper bound for \(I(X; Z)\). The variational approach is based on approximating the conditional distribution \(p(Y|Z)\) with a simpler distribution \(q(Y|Z)\), parameterized by \(\Theta_d\), and approximating \(p(Z|X)\) with \(q(Z|X)\), parameterized by \(\Theta_{con}\). Using the standard definition of mutual information, we start with:
\[
I(Z; Y) = \mathbb{E}_{p(Y,Z)} \left[ \log \frac{p(Y|Z)}{p(Y)} \right],
\]
and introduce the KL-divergence between \(p(Y|Z)\) and \(q(Y|Z)\):
\begin{equation}\label{eq:KL}
D_{KL} \left[ p(Y|Z) || q(Y|Z) \right] = \mathbb{E}_{p(Y|Z)} \left[ \log \frac{p(Y|Z)}{q(Y|Z)} \right] \geq 0.
\end{equation}
This leads to the inequality $\mathbb{E}_{p(Y|Z)} \left[ \log p(Y|Z) \right] \geq \mathbb{E}_{p(Y|Z)} \left[ \log q(Y|Z) \right]$. Therefore, the lower bound of mutual information:
\begin{equation}\label{eq: IZY}
I(Z; Y) \geq \mathbb{E}_{p(Y,Z)} [\log q(Y | Z)] + H(Y),
\end{equation}
where \(H(Y)\) is the entropy of \(Y\). For the second part \( I(X; Z) \), we derive an upper bound using variational approximations due to the complexity of directly minimizing the term. Since entropy \( H(Z | X) \geq 0 \), we can establish the following inequality $\lambda \sum_{k=1}^{K} I(X^{(k)}; Z^{(k)}) \leq \lambda \sum_{k=1}^{K} H(Z^{(k)})$, where \( H(Z^{(k)}) \) is the entropy of the compressed feature \( Z^{(k)} \). To refine this upper bound, we incorporate latent variables \( V^{(k)} \) as side information to encode the quantized features. Thus, we obtain:
\begin{equation}\label{eq:joint_entropy}
\lambda \sum_{k=1}^{K} H(Z^{(k)}) \leq \lambda \sum_{k=1}^{K} H(Z^{(k)}, V^{(k)}),
\end{equation}
where the joint entropy \( H(Z^{(k)}, V^{(k)}) \) represents the communication cost. Moreover, we apply the non-negativity property of KL-divergence to establish a tighter upper bound. The joint entropy \( H(Z^{(k)}, V^{(k)}) \) can be bounded using variational distributions \( q(Z^{(k)} | V^{(k)}; \Theta_{con}^{(k)}) \) and \( q(V^{(k)}; \Theta_{l}^{(k)}) \). Specifically, we have:
\begin{equation}\label{ineq:joint_entropy}
\begin{aligned}
H(Z^{(k)}, V^{(k)}) \leq & \mathbb{E}_{p(Z^{(k)}, V^{(k)})} \left[ -\log q(Z^{(k)}|V^{(k)}; \Theta_{con}^{(k)}) \right. \\
& \left. \times q\left(V^{(k)}; \Theta_{l}^{(k)}\right) \right].
\end{aligned}
\end{equation}
where \( \Theta_{con}^{(k)} \) and \( \Theta_{l}^{(k)} \) are the learnable parameters of the variational distributions \( q(Z^{(k)} | V^{(k)}) \) and \( q(V^{(k)}) \), respectively. These parameters are optimized to approximate the true distributions, minimizing the communication cost while preserving essential feature relations for inference. Substituting the result from Eq. (\ref{ineq:joint_entropy}) into Eq. (\ref{eq:joint_entropy}), we derive the final upper bound for \( I(X^{(k)}; Z^{(k)}) \) as follows:
\begin{equation}\label{ineq:final_bound}
\begin{aligned}
I\left(X^{(k)};Z^{(k)}\right) \leq & \mathbb{E}_{p(Z^{(k)}, V^{(k)})} \left[ -\log q\left(Z^{(k)}|V^{(k)}; \Theta_{con}^{(k)}\right) \right. \\
& \left. \times q(V^{(k)}; \Theta_{l}^{(k)}) \right].
\end{aligned}
\end{equation}

This upper bound simplifies the minimization process of \( I(X; Z) \), providing a feasible method for reducing transmission cost during network training.

\subsubsection{Loss Function Design}

We design the loss function to optimize the IB objective. The loss function \(\mathcal{L}_{1}\) is constructed to minimize the upper bound of \(I(X; Z)\) and maximize the lower bound of \(I(Z; Y)\):
\begin{equation}
\begin{aligned}\label{eq:L1}
\mathcal{L}_{1} &= \sum_{k=1}^{K} \underset{\mathrm{The}\ \mathrm{upper} \ \mathrm{bound} \ \mathrm{of} \ -I\left( Z^{(k)};Y^{(k)} \right)}{\underbrace{\mathbb{E} \left[ -\log q(Y^{(k)}|Z^{(k)}; \Theta_d^{(k)}) \right]} }+ \\
&\lambda \sum_{k=1}^{K} \underset{\mathrm{The} \ \mathrm{upper} \ \mathrm{bound} \ \mathrm{of} \ I\left( X^{(k)};Z^{(k)} \right)}{\underbrace{\mathbb{E} \left[ -\log q(Z^{(k)} | X^{(k)}; \Theta_{con}^{(k)}) \right]}}.
\end{aligned}
\end{equation}
This loss function optimizes both compression and inference accuracy by minimizing \(I(X; Z)\) while maximizing \(I(Z; Y)\), achieving an optimal balance between transmission cost and task performance.

\subsection{Adaptive and Robust Streaming Scheduling}
\label{sec:rate_control}

In this section, we develop an adaptive streaming scheduling framework to efficiently manage dynamic packet loss rates in multi-view environments. We first reduce temporal redundancy by estimating correlations across multiple frames and then implement a prioritization strategy for robust feature transmission under varying packet loss conditions.

\subsubsection{Correlation Estimation by Multiple Frames}

To optimize the transmission efficiency, we leverage the temporal dependencies between consecutive frames. The feature representation at time \( t \), denoted by \( \hat{z}_t^{(k)} \), is estimated using the preceding frames as side information. This estimation is modeled by the variational distribution \( q(\hat{z}_t^{(k)} | \hat{z}_{t-1}^{(k)}, \dots, \hat{z}_{t-\tau}^{(k)}; {\Theta}_p^{(k)}) \), where \( {\Theta}_p^{(k)} \) represents the parameters of the network in the $k$th UGV. 

We assume that this conditional distribution follows a Gaussian distribution \( q(\hat{z}_t^{(k)} | \hat{z}_{t-1}^{(k)}, \dots, \hat{z}_{t-\tau}^{(k)}; {\Theta}_p^{(k)}) = \mathcal{N}(\mu_t^{(k)}, \sigma_t^{(k)}) \ast \mathcal{U} \), where \( \mu_t^{(k)} \) and \( \sigma_t^{(k)} \) are the predicted mean and variance, respectively, and \( \mathcal{U} \) models the quantization noise added during transmission. By exploiting temporal correlations across frames, this model reduces the entropy of \( \hat{z}_t^{(k)} \), thereby minimizing the required transmission bitrate. To further reduce communication overhead, we align the predicted distribution \( q(\hat{z}_t^{(k)} | \hat{z}_{t-1}^{(k)}, \dots, \hat{z}_{t-\tau}^{(k)}; {\Theta}_p^{(k)}) \) with the true conditional distribution \( p(\hat{z}_t^{(k)} | \hat{z}_{t-1}^{(k)}, \dots, \hat{z}_{t-\tau}^{(k)}) \) by minimizing the cross-entropy loss between them. The loss function is given by:

\begin{small}
\begin{equation*}
\begin{aligned}\label{eq:L2}
\mathcal{L}_{2} = \sum_{k=1}^{K}\sum_{t=1}^{N} \left( p(\hat{z}_t^{(k)} | \hat{z}_{t-1}^{(k)}, \dots, \hat{z}_{t-\tau}^{(k)}) \log \frac{p(\hat{z}_t^{(k)} | \hat{z}_{t-1}^{(k)}, \dots, \hat{z}_{t-\tau}^{(k)})}{q(\hat{z}_t^{(k)} | \hat{z}_{t-1}^{(k)}, \dots, \hat{z}_{t-\tau}^{(k)}; {\Theta}_p^{(k)})} \right),
\end{aligned}
\end{equation*}
\end{small}
which quantifies the KL divergence between the true distribution and the variational approximation. By minimizing this divergence, we exploit the temporal redundancy, thereby reducing the amount of data that needs to be transmitted.

\subsubsection{Robust Multi-View Fusion under Dynamic Packet Loss}

In multi-UGV sensing systems, fluctuating communication capacity leads to unpredictable packet loss, dropping critical data and harming inference accuracy. We propose a robust multi-view fusion method that adapts to dynamic packet loss by prioritizing important features and assigning potential losses to lower-priority data. Firstly, we compute the priority of each feature map using average pooling. For feature maps \( \hat{z}_{t}^{(k)} \) from camera \( k \) at time \( t \), the priority \( p_{t}^{(k)} \) is calculated as:
\begin{equation}
    p_{t}^{(k)} = \frac{1}{C H W} \sum_{c=1}^C \sum_{h=1}^H \sum_{w=1}^W \hat{z}_{t,c,h,w}^{(k)},
\end{equation}
where \( \hat{z}_{t,c,h,w}^{(k)} \) represents the element value of \( \hat{z}_{t}^{(k)} \) at time $t$, channel \( c \), height \( h \), and width \( w \). Besides, \( C \), \( H \), and \( W \) are the dimensions of the feature map, and \( \hat{z}_{t}^{(k)} \) has dimensions \( T \times C \times H \times W \). Higher average values \( p_{t}^{(k)} \) indicate more important features. Based on the dynamic packet loss rate, we allocate potential losses to features with the lowest priorities. We sort the feature maps \( \hat{z}_{t}^{(k)} \) by their priorities \( p_{t}^{(k)} \) and assign a mask \( m_{t}^{(k)} \in \{0,1\} \), where \( m_{t}^{(k)} = 1 \) if the feature is likely to be received (higher priority) and \( m_{t}^{(k)} = 0 \) otherwise. The masked features are then defined as $\tilde{z}_{t}^{(k)} = m_{t}^{(k)} \hat{z}_{t}^{(k)}$, where \( \tilde{z}_{t}^{(k)} \) represents the masked feature map corresponding to \( \hat{z}_{t}^{(k)} \). The fusion function \( \mathcal{G} \) incorporates both the masked features \( \tilde{z}_{t}^{(k)} \) and the masks \( m_{t}^{(k)} \), allowing the network to adjust for missing data due to packet loss. The fusion is formulated as:
\begin{equation}
    \hat{y}_t = \mathcal{G}\left( \left\{ \tilde{z}_{t}^{(k)},\ m_{t}^{(k)} \right\}_{k=1}^{K};\ {\Theta_r} \right),
\end{equation}
where \( {\Theta_r} \) represents the fusion parameters, and \( \hat{y}_t \) is the fused output (occupancy map) at time \( t \). Since our goal is to minimize the inference error while accounting for dynamic packet loss rates, the total loss function integrates inference accuracy and a penalty for losing important features:
\begin{equation}
    \mathcal{L}_3 = \mathbb{E} \left[ \sum_{t=1}^{N} d\left( y_t, \hat{y}_t \right) \right] + \alpha_d \sum_{t=1}^{N} \sum_{k=1}^{K} \left( 1 - m_{t}^{(k)} \right),
\end{equation}
where \( y_t \) is the ground truth at time \( t \), \( d(\cdot) \) measures inference error, and the second term penalizes the loss of important features. The weight \( \alpha_d \) balances robustness and accuracy. Therefore, the total loss is given by:
\begin{equation}
    \mathcal{L} = \mathcal{L}_1 + \alpha_2 \mathcal{L}_2 + \alpha_3 \mathcal{L}_3,
\end{equation}
where \( \mathcal{L}_1 \) and \( \mathcal{L}_2 \) are other loss terms related to bitrate minimization and inference accuracy, and \( \alpha_2 \) and \( \alpha_3 \) are weights for balancing these components.

\section{Performance Evaluation}
\label{sec:Performance Evaluation}

\begin{figure}[t]
  \centering
  \includegraphics[width=1.0\linewidth]{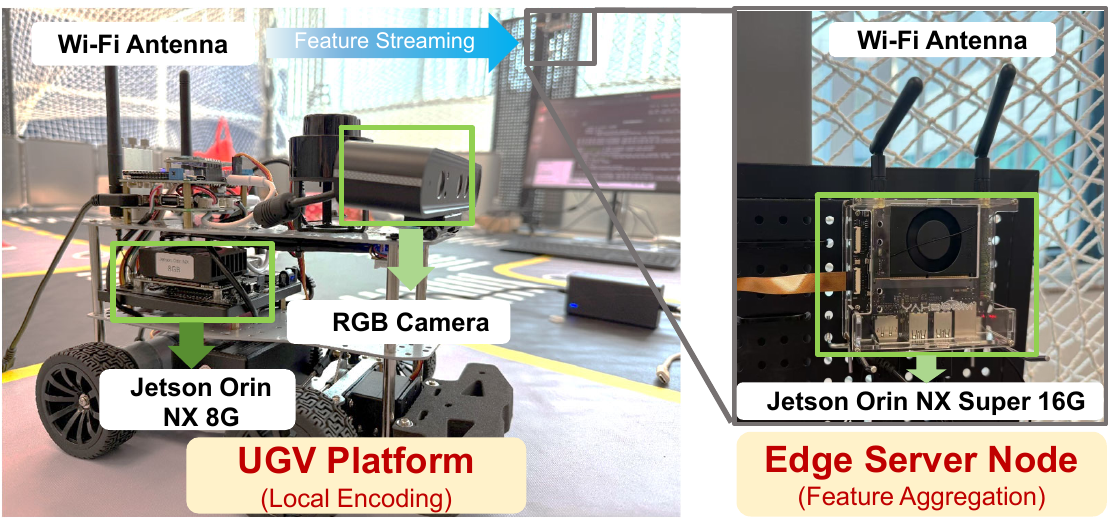}
  \caption{{\color{black}Our algorithm is deployed on a UGV-edge server platform. The UGV node captures RGB images and performs local encoding, transmitting features via Wi-Fi to the edge server node for aggregation and further processing.}}
  \label{fig:hardware_platform}
  \vspace{-3mm}
\end{figure}

\begin{figure}[t]
  \centering
  \subfigure[Error vs. $\Delta _{{{k}}}^{\text{Ca}}$ for 10kb.]{
    \includegraphics[width=4.2 cm]{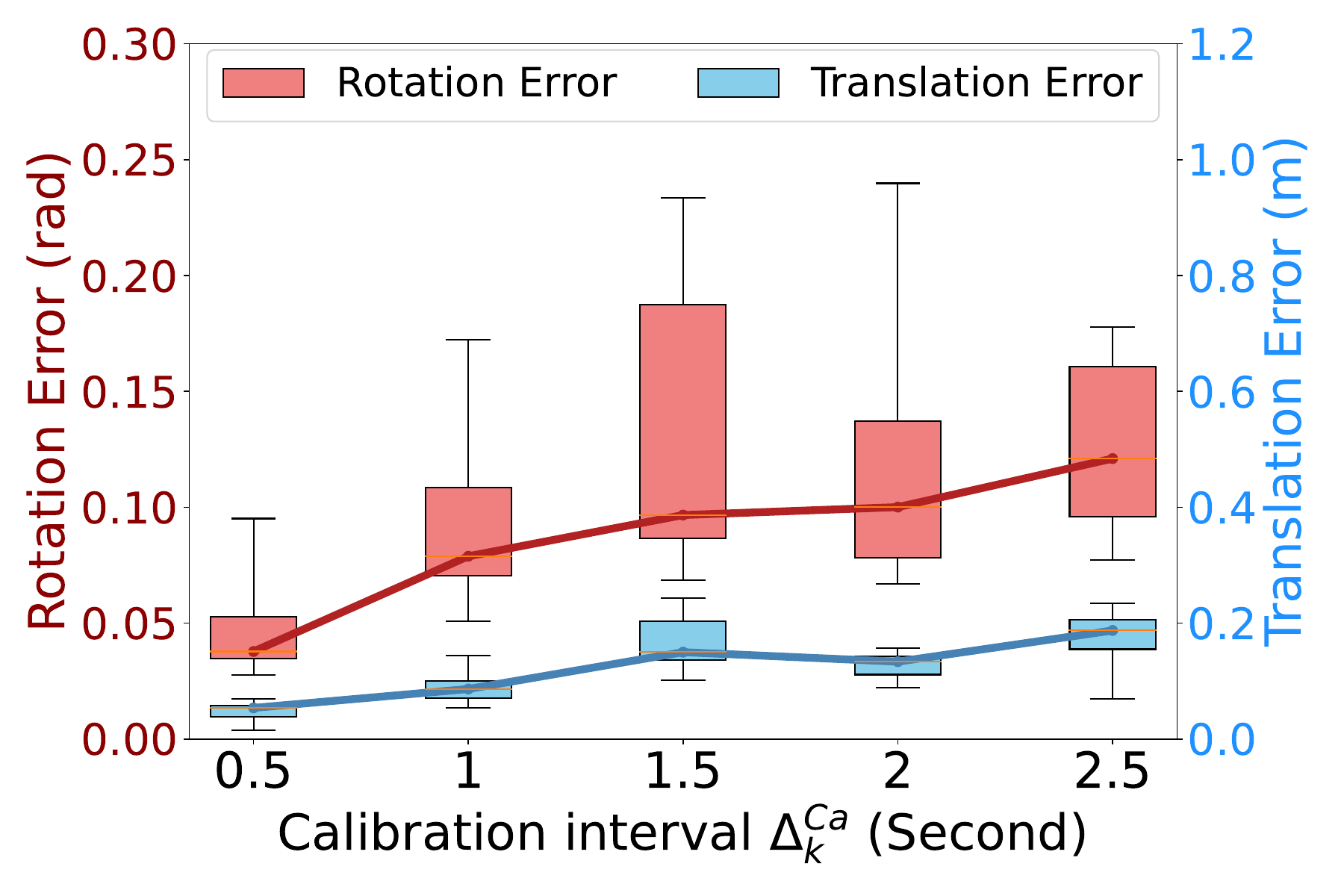}\label{fig:error_10kb}
  }
  \hspace{-0.5cm}  
  \subfigure[Error vs. $\Delta _{{{k}}}^{\text{Ca}}$ for 30kb.]{
    \includegraphics[width=4.3 cm]{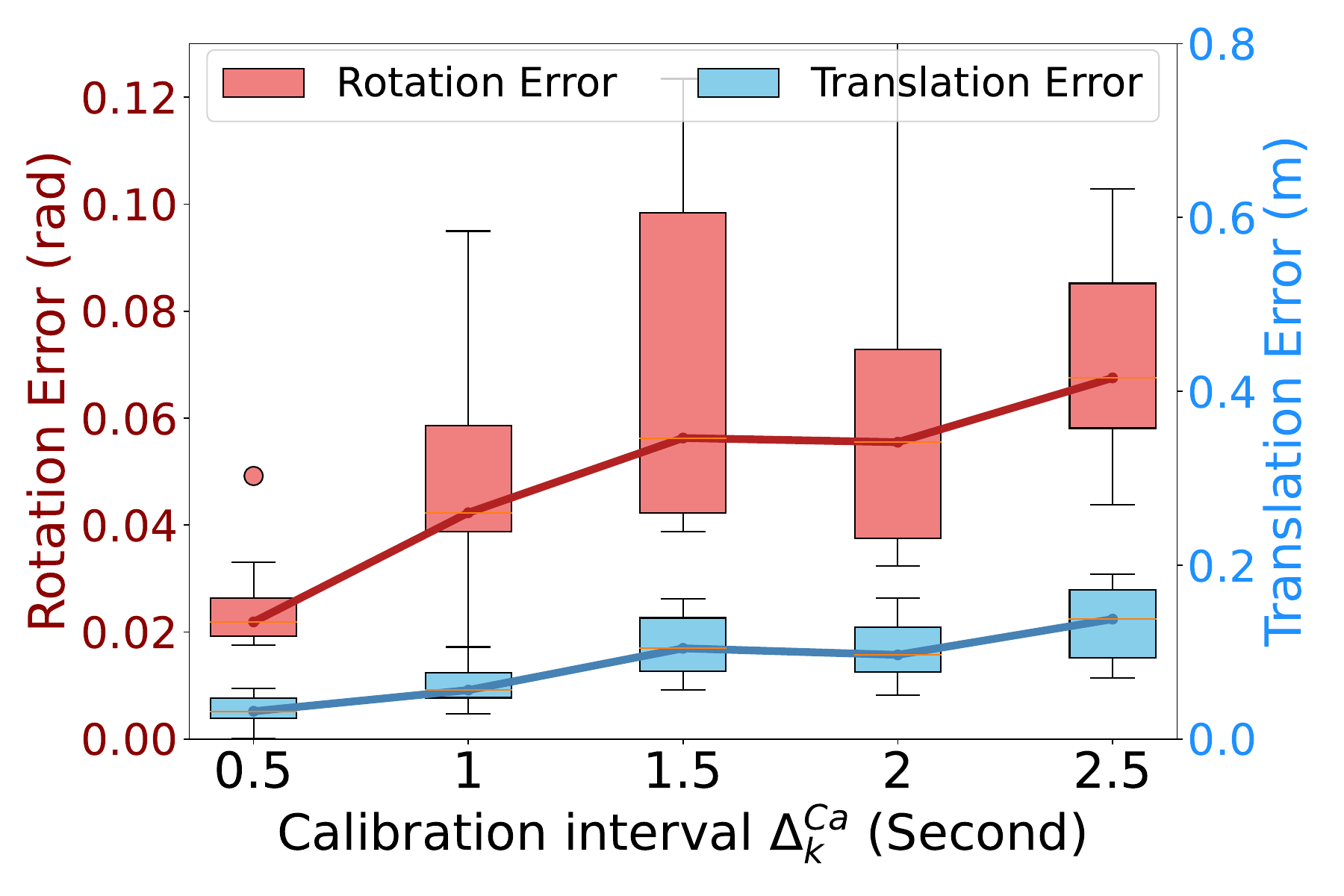}\label{fig:error_30kb}
  }
  \caption{Error vs. calibration interval $\Delta _{{{k}}}^{\text{Ca}}$ for different data sizes.}
  \label{fig:error_comparison}
  \vspace{-3mm}
\end{figure}

\begin{table}[t]
\centering
\caption{Latency of Re-ID calibration pipeline on the UGV platform with Nvidia Jetson unit.}
\label{tab:latency_analysis}
\renewcommand{\arraystretch}{1.2}
\begin{tabular}{|c|c|c|}
\hline
\textbf{Operation} & \textbf{Description} & \textbf{Latency} \\
\hline
Pedestrian Detection & YOLOv5s inference & $10\pm2$ ms \\
\hline
Feature Extraction & OSNet-x1.0  & $60\pm15$ ms\\
\hline
Feature Matching & Euclidean distance matching & $3\pm1$ ms\\
\hline
\textbf{Total Latency} & Detection + Extraction + Matching & ${73\pm18}$ ms\\
\hline
\end{tabular}
\end{table}

\begin{figure}[t]
  \centering
  \subfigure[Rotation Error vs. MODA.]{
    \includegraphics[width=4.2 cm]{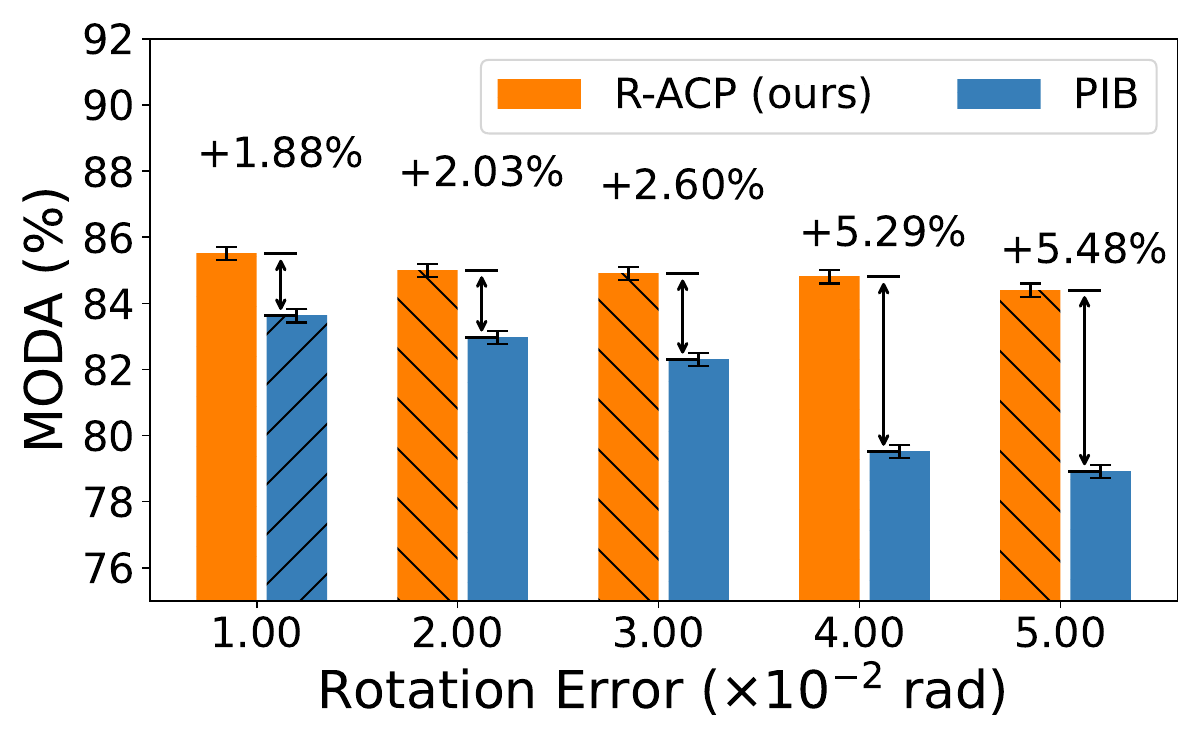}
    \label{fig:rotation_error_moda}
  }
  \hspace{-0.5cm}  
  \subfigure[Translation Error vs. MODA.]{
    \includegraphics[width=4.3 cm]{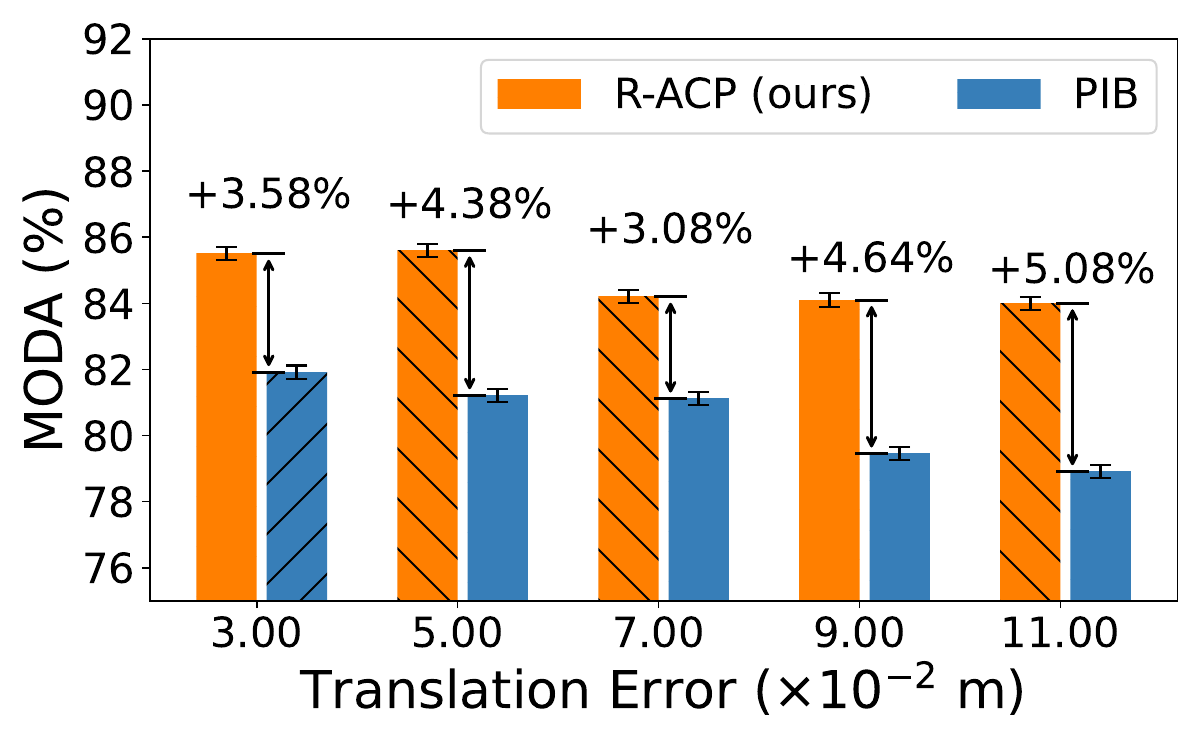}
    \label{fig:translation_error_moda}
  }
  \caption{Comparison of MODA performance with respect to rotation and translation errors.}
  \label{fig:error_comparison}
  \vspace{-3mm}
\end{figure}

\begin{figure}[t]
  \centering
  \subfigure[Cameras 0-3.]{
    \includegraphics[width=4.3cm]{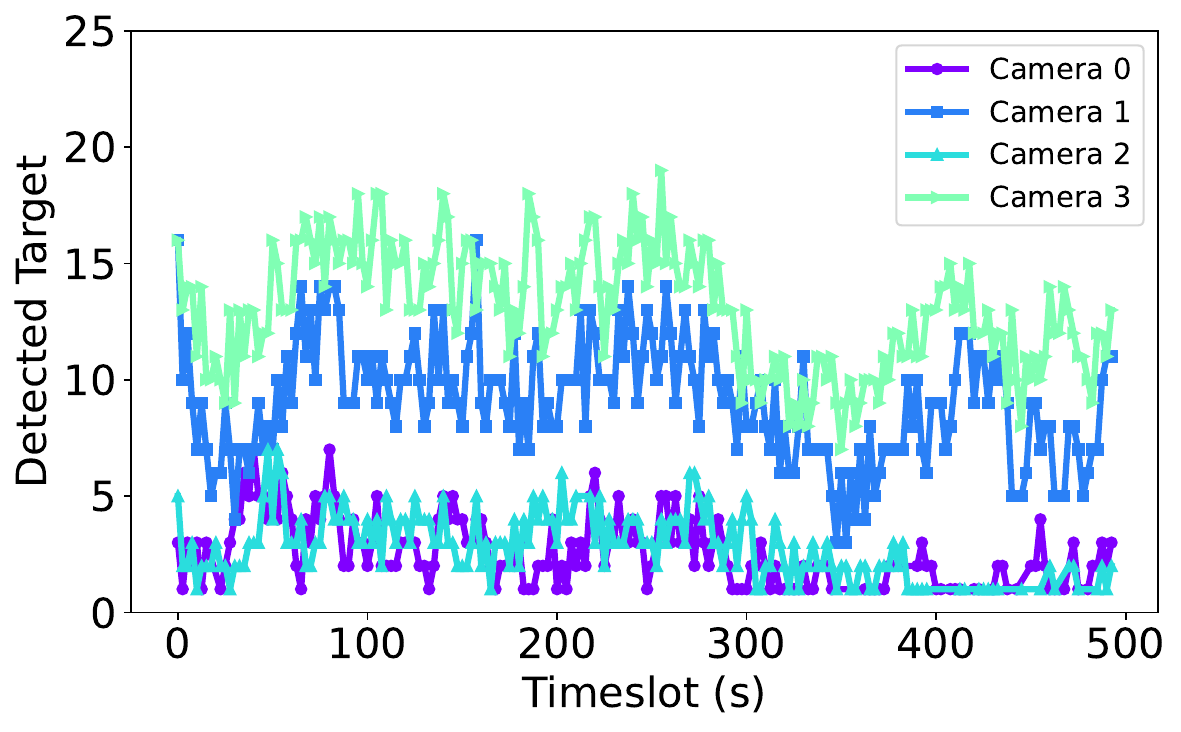}\label{fig:detected_targets_camera_0_1_2_3}
  }
  \hspace{-0.5cm}  
  \subfigure[Cameras 4-6.]{
    \includegraphics[width=4.3cm]{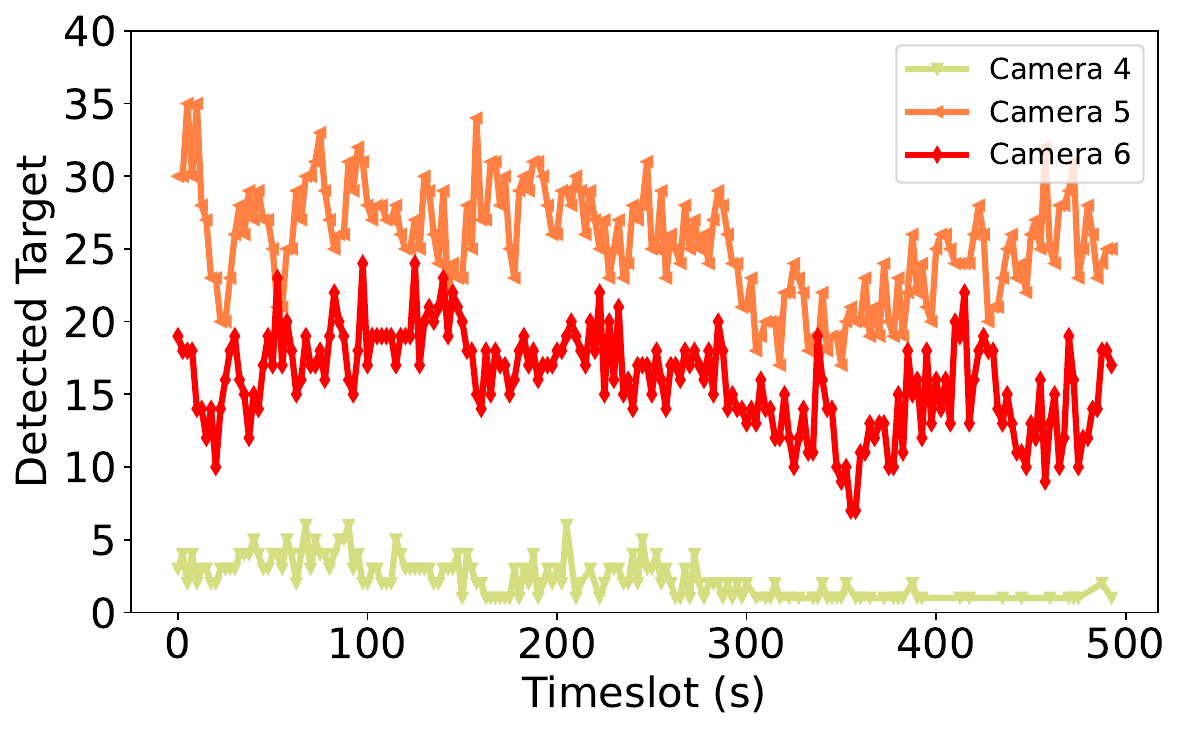}\label{fig:detected_targets_camera_4_5_6}
  }
  \caption{Detected targets over time for different cameras.}
  \label{fig:detected_targets_comparison}
  \vspace{-3mm}
\end{figure}

\begin{figure}[t]
  \centering
  \includegraphics[width=1.0\linewidth]{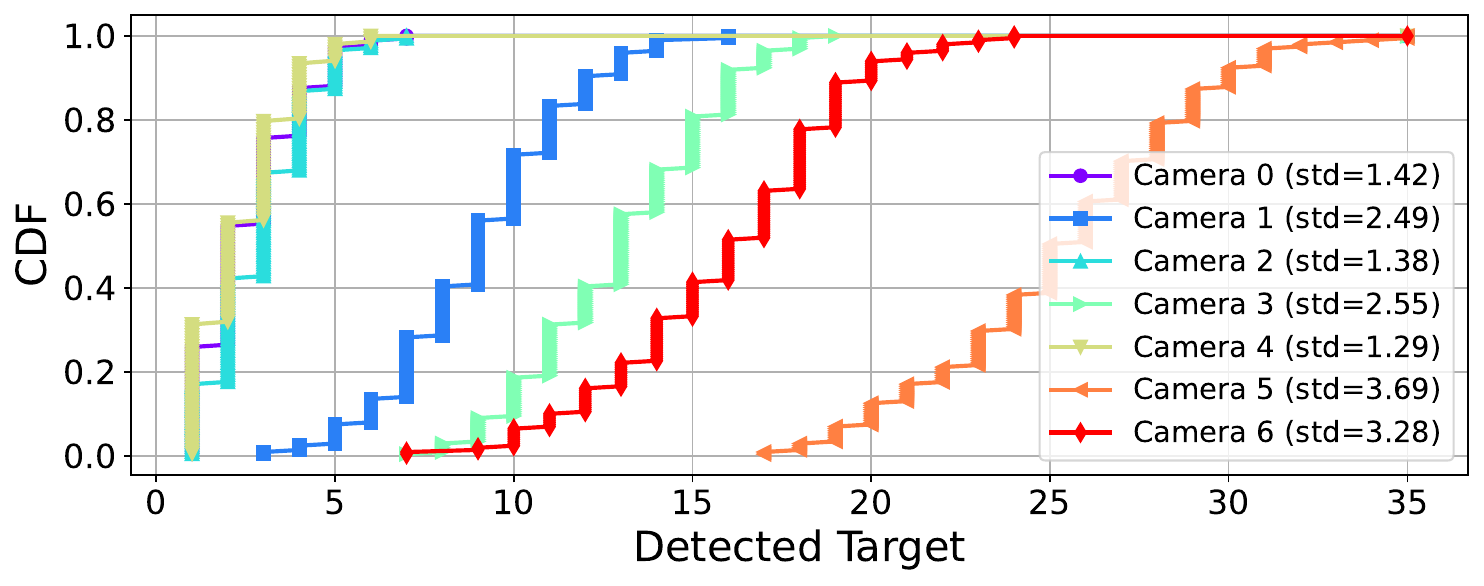}
  \caption{Cumulative distribution function (CDF) of detected targets per camera.}
  \label{fig:cdf_detected_target_per_camera}
  \vspace{-3mm}
\end{figure}

\begin{figure}[t]
  \centering
  \subfigure[Camera 2]{
    \includegraphics[width=4.2cm]{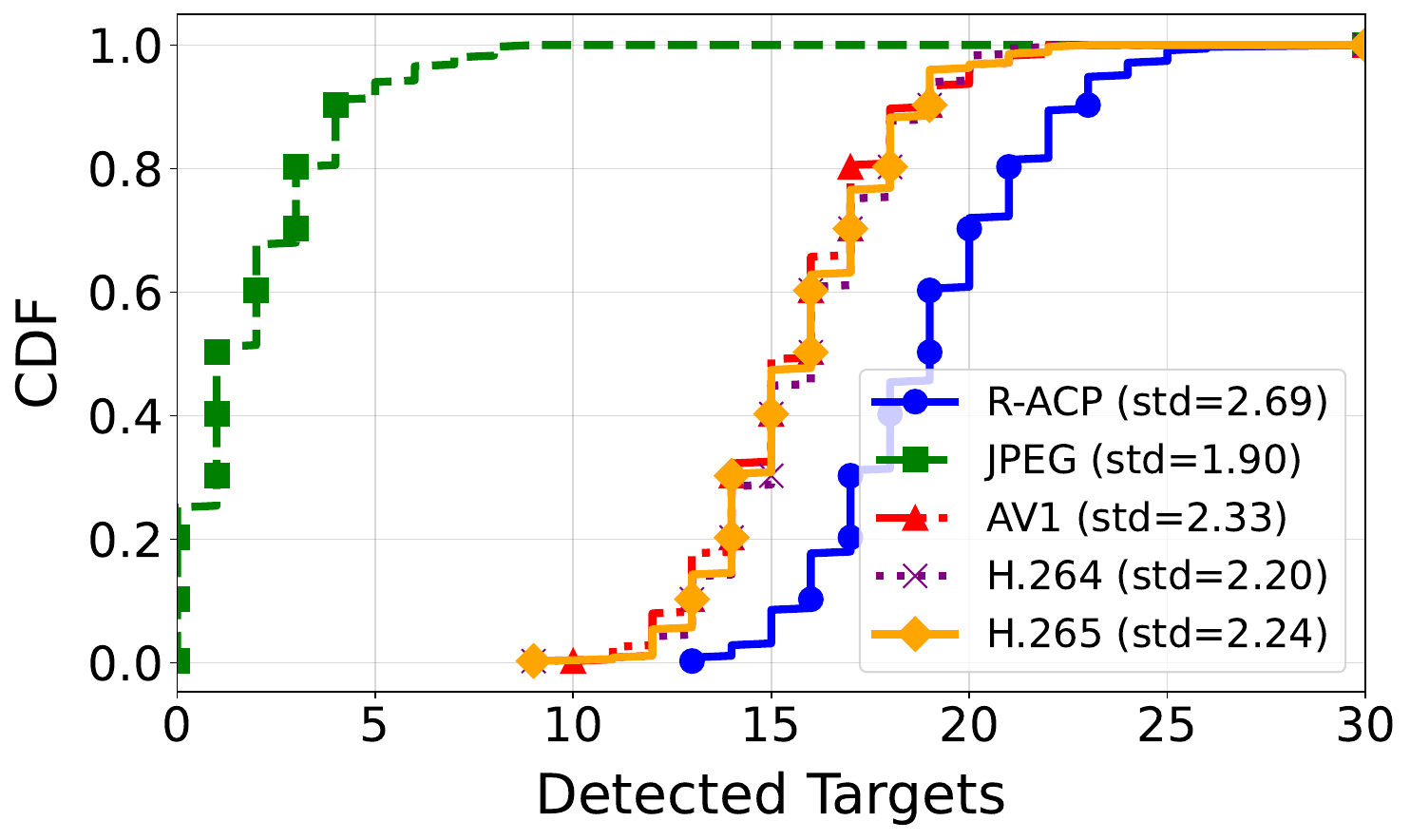}
    \label{fig:cdf_c2}
  }
  \hspace{-0.3cm}
  \subfigure[Camera 5]{
    \includegraphics[width=4.2cm]{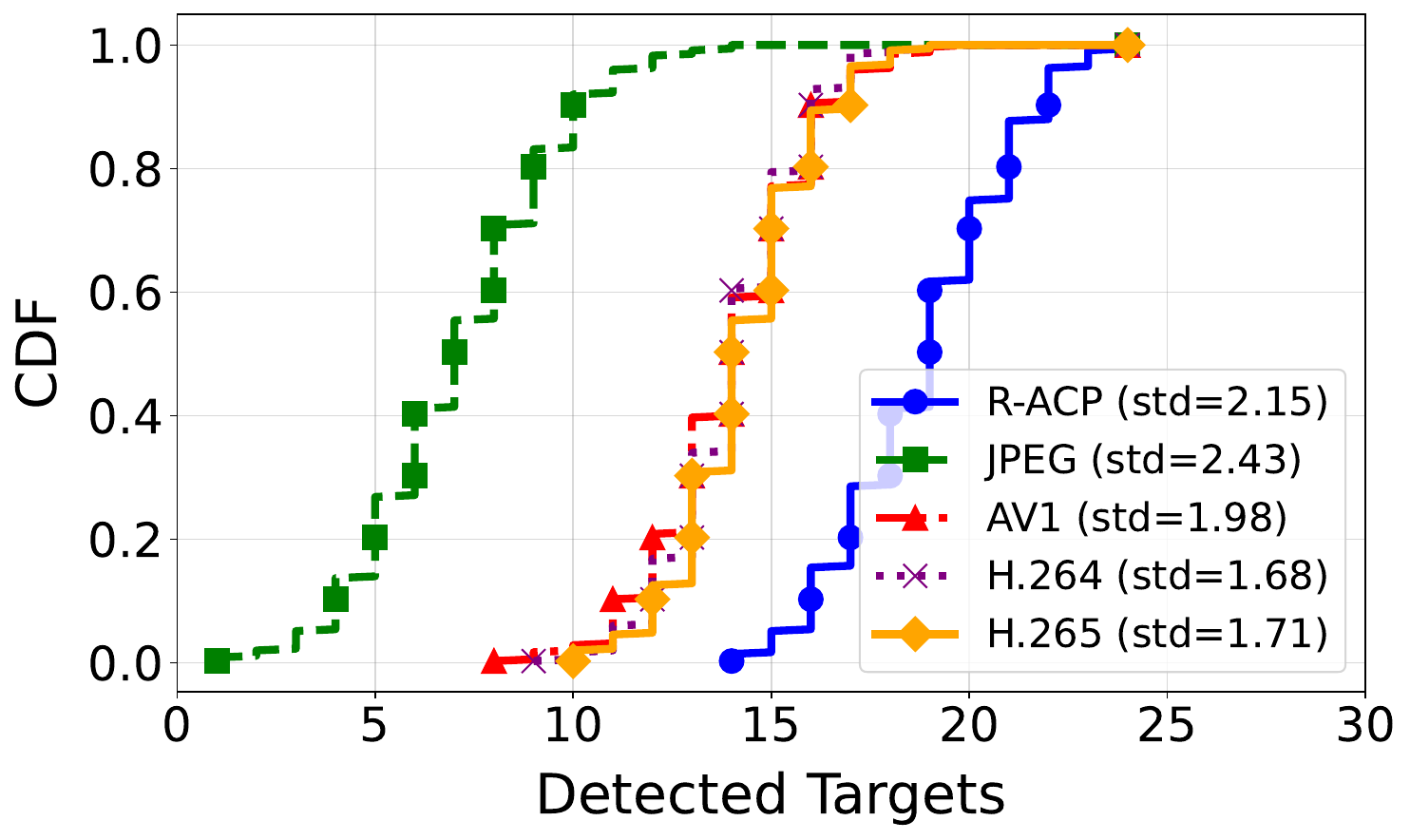}
    \label{fig:cdf_c5}
  }
  \vspace{-3mm}
  
  \subfigure[Camera 6]{
    \includegraphics[width=4.2cm]{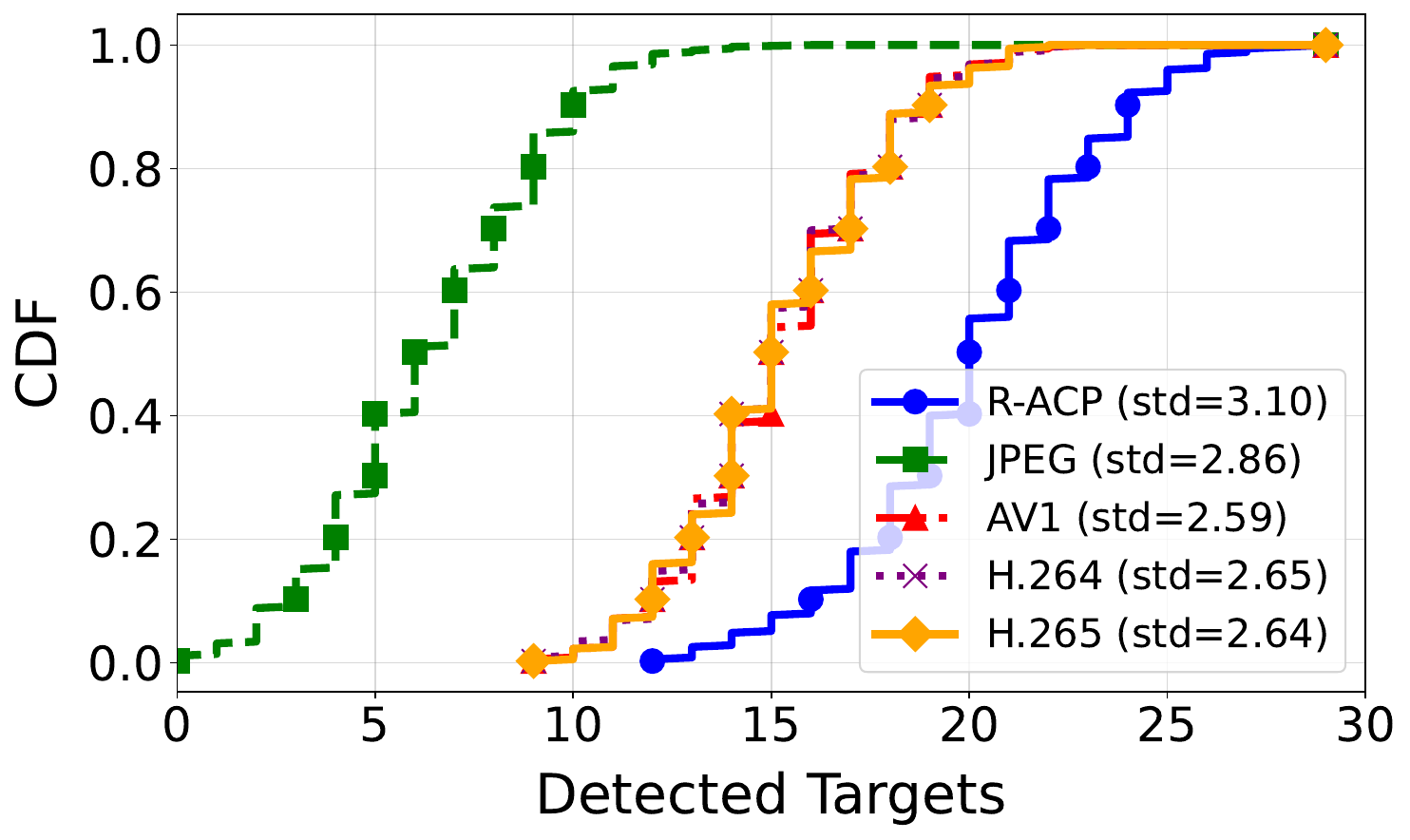}
    \label{fig:cdf_c6}
  }
  \hspace{-0.3cm}
  \subfigure[Camera 7]{
    \includegraphics[width=4.2cm]{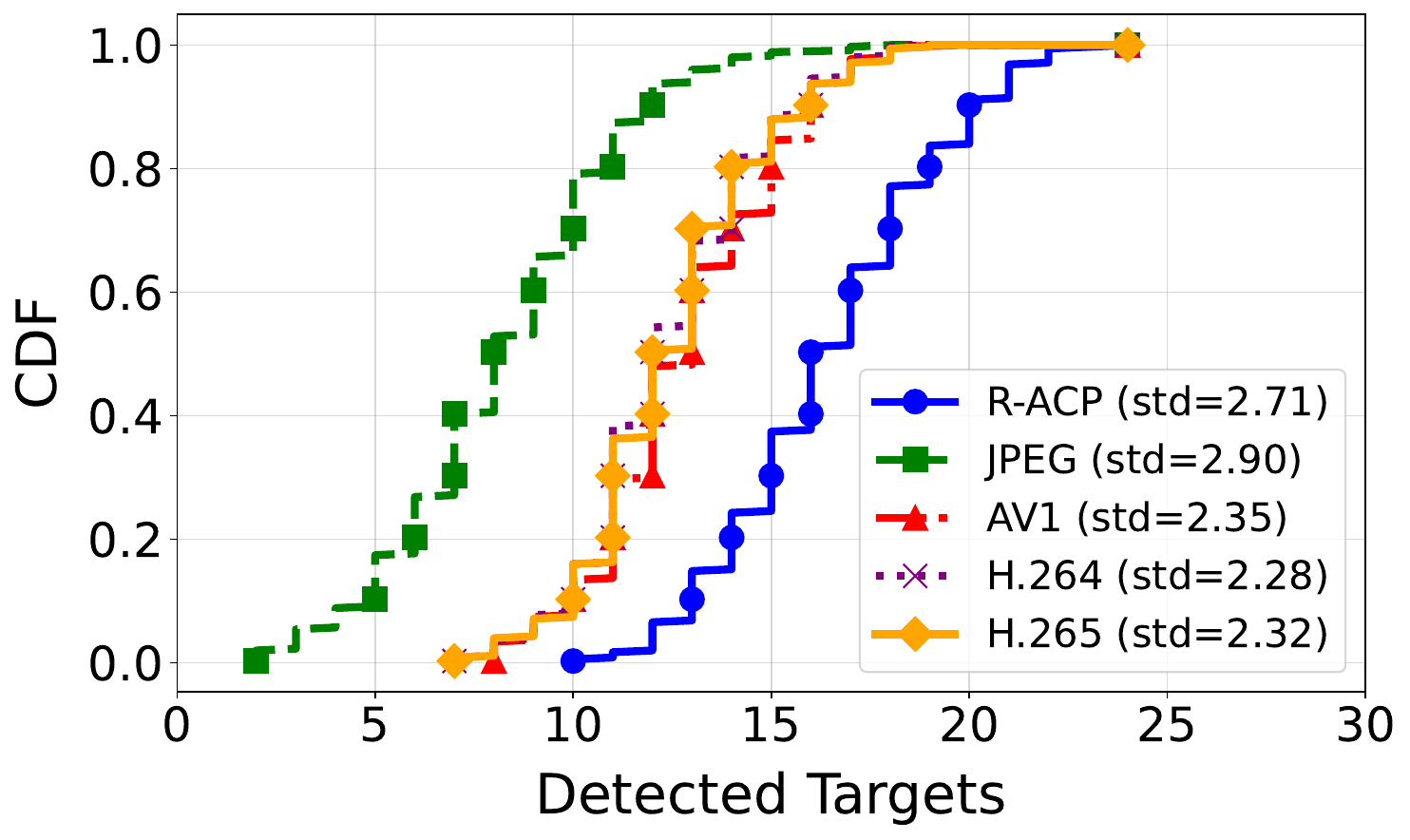}
    \label{fig:cdf_c7}
  }
  \caption{{\color{black}CDF of detected targets under different compression methods for cameras 2, 5, 6, and 7.}}
  \label{fig:cdf_all_cams}
  \vspace{-3mm}
\end{figure}

\subsection{Simulation Setup}
We set up simulations to evaluate our R-ACP framework, aimed at predicting pedestrian occupancy in urban settings using multiple cameras. These simulations replicate a city environment with variables like signal frequency and device density affecting the outcomes. We simulate a communication system operating at a 2.4 GHz frequency with path loss exponent of 3.5, and an 8 dB shadowing deviation to model wireless conditions. To assess congestion levels, devices emitting 0.1 Watts interfere at densities of 10 to 100 per 100 square meters. The bandwidth is set to 2 MHz. We use the \textit{Wildtrack} dataset from EPFL, featuring high-resolution images from seven cameras capturing pedestrian movements in a public area\cite{chavdarova2018wildtrack}. Each camera provides 400 frames at 2 frames per second, totaling over 40,000 bounding boxes for more than 300 pedestrians. In our simulations, the positions of these cameras correspond to the positions of UGVs. {\color{black}Moreover, we also evaluate the R-ACP framework in our hardware platform. As shown in Fig.~\ref{fig:hardware_platform}, the platform consists of a UGV node equipped with an RGB camera and a Jetson Orin NX 8G module for local feature encoding. The extracted features are transmitted over Wi-Fi to an edge server node with a Jetson Orin NX Super 16G for feature aggregation and inference. This setup enables efficient multi-agent collaborative perception under communication constraints.}

The primary metric of inference performance is MODA, which assesses the system’s ability to accurately detect pedestrians based on missed and false detections. We also examine the rate-performance tradeoff to understand how communication overhead affects calibration and multi-view perception. For comparative analysis, we consider five baselines, including video coding and image coding, as follows. 
\begin{itemize}
    \item \textbf{PIB}\cite{fang2025ton}:  A collaborative perception framework that enhances detection accuracy and reduces communication costs by prioritizing and transmitting only useful features.
    \item \textbf{JPEG}\cite{wallace1992jpeg}: A widely used image compression standard employing lossy compression algorithms to reduce image data size, commonly used to decrease communication load in networked camera systems.
    \item \textbf{H.264}\cite{H264}: Known as Advanced Video Coding (AVC) or MPEG-4 Part 10, the standard that significantly enhances video compression efficiency, allowing high-quality video transmission at lower bit rates.
    \item \textbf{H.265}\cite{bossen2012hevc}: Also known as High Efficiency Video Coding (HEVC), the standard that offers up to 50\% better data compression than its predecessor H.264 (MPEG-4 Part 10), while maintaining the same video quality.
    \item \textbf{AV1}\cite{han2021technical}: AOMedia Video 1 (AV1), an open, royalty-free video coding format developed by the Alliance for Open Media (AOMedia), and designed to succeed VP9 with improved compression efficiency. 
\end{itemize}

In Fig. \ref{fig:error_10kb}, the effect of varying calibration intervals \( \Delta _{\mathrm{AoPT}}^{\mathrm{ca}} \) on rotation and translation errors under a 10KB key-point feature constraint is shown. As the calibration interval increases, both errors rise. Similarly, Fig. \ref{fig:error_30kb} shows error trends for a 30KB key-point feature size, where increasing \( \Delta _{\mathrm{AoPT}}^{\mathrm{ca}} \) also results in higher errors. Comparing the two figures, a larger communication bottleneck (30KB) yields more granular features and significantly improves calibration accuracy compared to the smaller 10KB bottleneck. {\color{black}As shown in Table \ref{tab:latency_analysis}, our numerical results show that the average per-frame latency is approximately $73\pm18$ ms when processing frames with detected pedestrians. Specifically, pedestrian detection consumes $10\pm2$ ms, total feature extraction takes $60\pm15$ ms, and feature matching adds $3\pm1$ ms overhead. These results demonstrate that our Re-ID module is capable of operating at approximately $14$ frames per second (FPS) on the Jetson Orin, even without aggressive optimization (e.g., TensorRT or quantization). Therefore, the proposed method remains feasible for real-time deployment in dynamic edge environments.}

In Fig. \ref{fig:rotation_error_moda}, the relationship between rotation error and MODA is presented, showing that as rotation error increases, the accuracy of multi-UGV collaborative perception improves. R-ACP achieves up to 5.48\% better MODA than the PIB baseline for the highest rotation error. Similarly, Fig. \ref{fig:translation_error_moda} shows that larger translation errors result in higher MODA scores, with R-ACP outperforming PIB by up to 5.08\%. Both figures indicate that R-ACP maintains superior calibration and perception accuracy under larger errors, enhancing collaborative multi-UGV perception.

In Fig. \ref{fig:detected_targets_comparison}, the detected target counts are plotted over time for different cameras. In Fig. \ref{fig:detected_targets_camera_0_1_2_3}, cameras 0 to 3 detect a moderate number of targets, with cameras 2 and 3 identifying significantly more targets than others. Fig. \ref{fig:detected_targets_camera_4_5_6} shows that cameras 4, 5, and 6 detect a higher number of targets overall, respectively, with camera 5 detecting the most, followed by camera 6. These variations highlight that UGVs equipped with cameras and with more detected targets, such as UGVs 3, 5, and 6, have a higher priority in calculating AoPT. Fig. \ref{fig:cdf_detected_target_per_camera} shows the cumulative distribution function (CDF) of detected targets across 500 time slots for each camera employing H.265 compression. The results indicate considerable variations in the number of detected targets across UGVs, with UGVs 5 and 6 capturing the most targets within their FOVs, further confirming their higher priority in terms of data timeliness when computing AoPT. {\color{black}Under a constrained channel condition of 30KB/s, as shown in Fig.~\ref{fig:cdf_all_cams}, R-ACP consistently yields higher detection rates than traditional codecs. 
This suggests that R-ACP is better suited for preserving occupancy-critical visual information in bandwidth-limited edge deployments.}

\begin{figure*}[t]
  \centering
  \subfigure[Perception result from \textbf{FOV 1}.]{
    \includegraphics[width=0.48\textwidth]{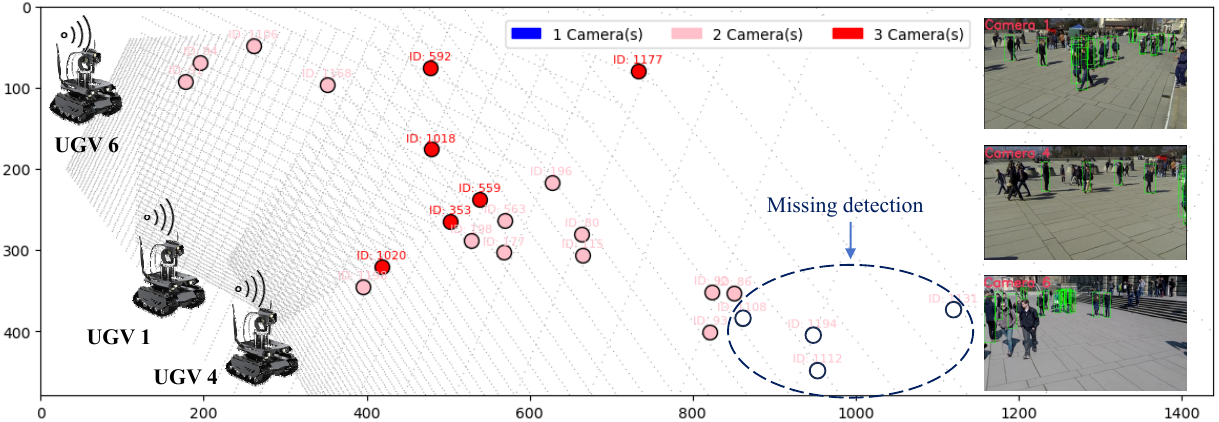}
    \label{fig:roi_1}
  }
  \hspace{-0.2cm}  
  \subfigure[Perception result from \textbf{FOV 2}.]{
    \includegraphics[width=0.48\textwidth]{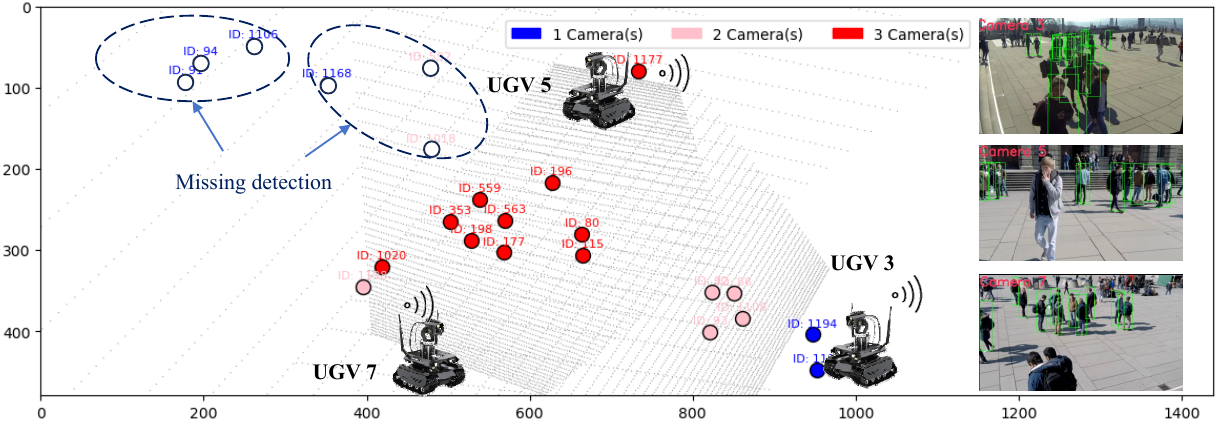}
    \label{fig:roi_2}
  }
  \subfigure[Perception result from \textbf{FOV 3}.]{
    \includegraphics[width=0.48\textwidth]{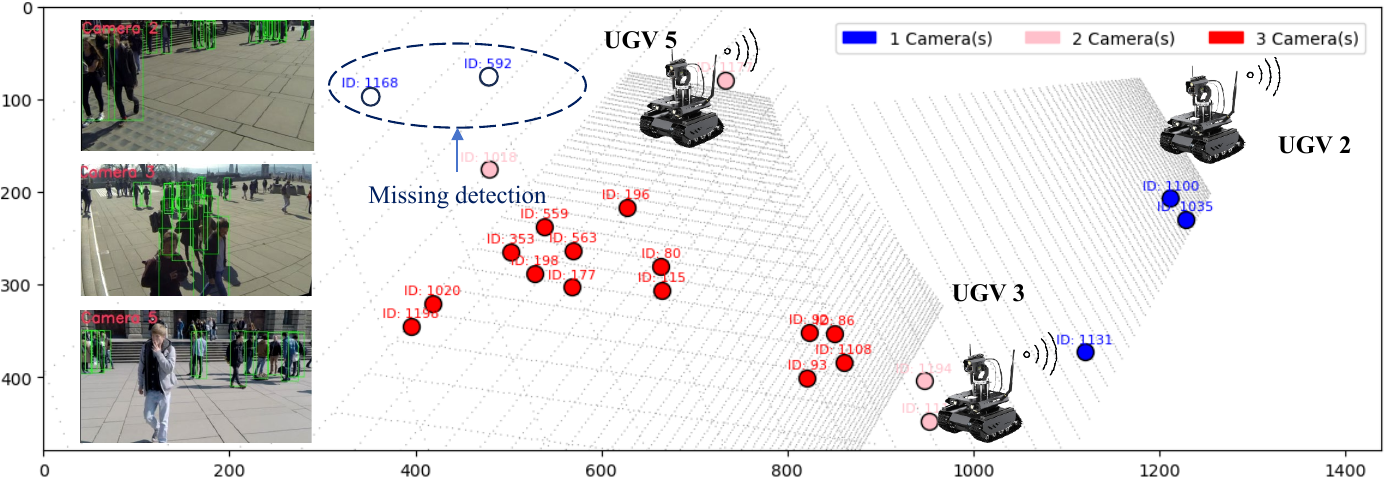}
    \label{fig:roi_3}
  }
  \hspace{-0.2cm}  
  \subfigure[Perception result from \textbf{all UGV cameras}.]{
    \includegraphics[width=0.48\textwidth]{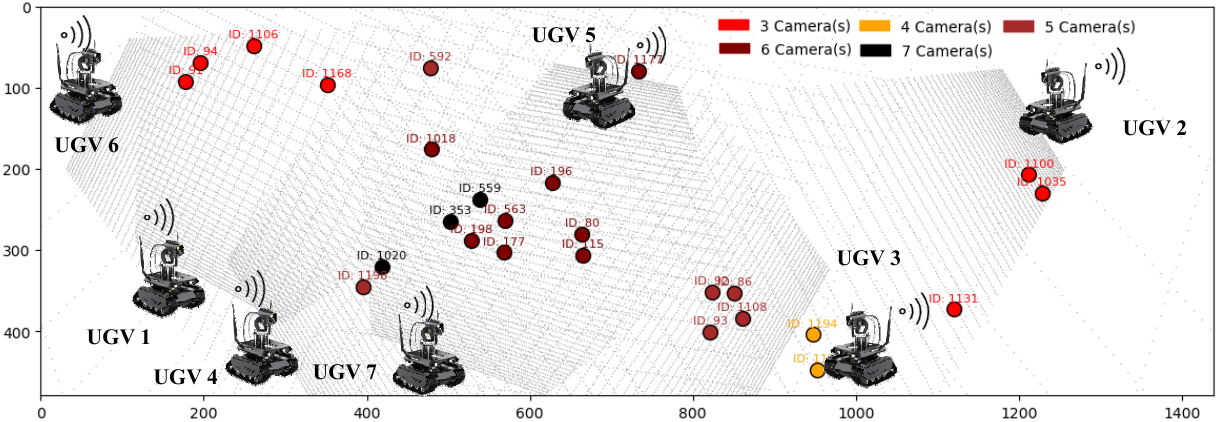}
    \label{fig:roi_all}
  }
  \caption{Comparison of perception results from different FOVs using all UGV cameras. Figs. \ref{fig:roi_1} to \ref{fig:roi_3} show the results for individual FOVs, while Fig. \ref{fig:roi_all} shows the result from using all UGV cameras collaboratively.}
  \label{fig:roi_comparison}
  \vspace{-3mm}
\end{figure*}

Fig. \ref{fig:roi_comparison} visualizes the results of collaborative perception by multiple UGVs within a 12m×36m area, represented as a 480×1440 grid with a resolution of 2.5 cm\(^2\). In this experimental setup, seven wireless edge cameras work together to perceive the area, and contour lines are used to represent the perception range of each camera. The denser the contour lines, the closer the target is to the camera, which correlates with higher perception accuracy. The comparison between individual FOVs in Figs. \ref{fig:roi_1}, \ref{fig:roi_2}, and \ref{fig:roi_3} illustrates the variability in pedestrian detection when only one or a few UGVs contribute to perception. In these figures, certain areas show missing detections due to limited coverage, with only one or two cameras detecting some targets. In contrast, Fig. \ref{fig:roi_all} demonstrates the advantage of using all UGVs collaboratively, covering a larger FOV and significantly reducing missed detections. Pedestrians are detected more accurately and with better resolution when multiple cameras provide complementary perspectives, leading to improved overall system performance. Additionally, the perception accuracy depends on the proximity of the cameras to the targets, as indicated by the density of the contour lines. The closer a target is to the cameras, the higher the resolution and accuracy of its detection, emphasizing the importance of strategic UGV positioning and multi-UGV collaboration for real-time monitoring tasks.

\begin{figure}[t]
  \centering
  \subfigure[Detected targets across FOVs.]{
    \includegraphics[width=4.2 cm]{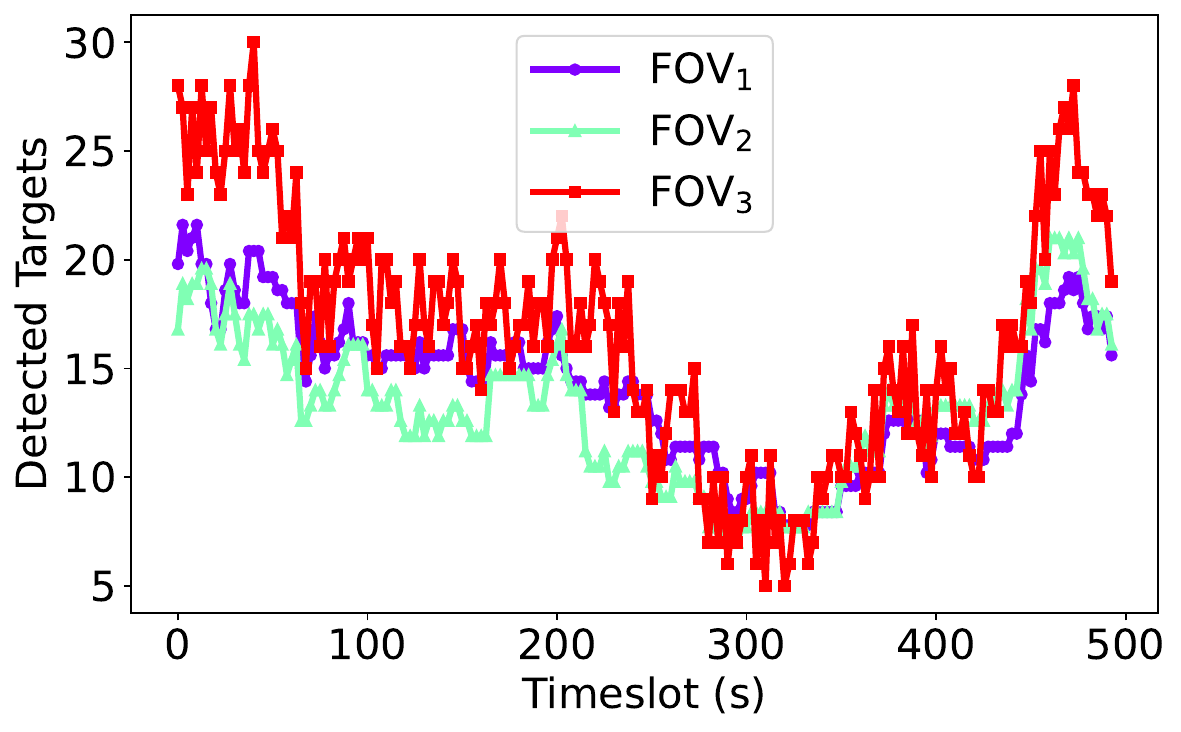}
    \label{fig:detected_targets}
  }
  \hspace{-0.5cm}  
  \subfigure[AoPT across FOVs.]{
    \includegraphics[width=4.3 cm]{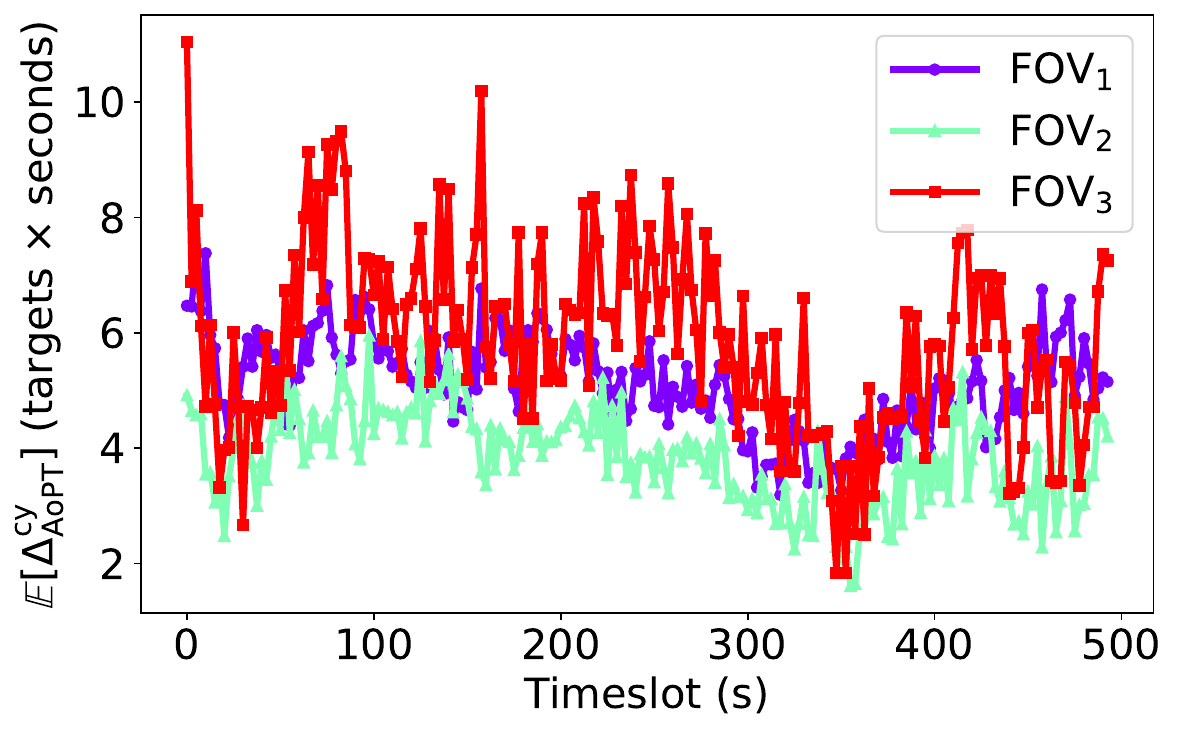}
    \label{fig:aopt_fovs}
  }
  \caption{Comparison of detected targets and AoPT for different FOVs over time. Fig. \ref{fig:detected_targets} shows the detected targets for FOVs 1-3, while Fig. \ref{fig:aopt_fovs} illustrates the AoPT values.}
  \label{fig:fov_comparison}
  \vspace{-3mm}
\end{figure}

\begin{table}[t]
    \centering
    \caption{Impact of Varying FOVs and Communication Costs on AoPT and Collaborative Perception Accuracy.}
    \begin{tabular}{@{}>{\centering\arraybackslash}p{1.5cm} >{\centering\arraybackslash}p{1.8cm} >{\centering\arraybackslash}p{2.3cm} >{\centering\arraybackslash}p{2.0cm}@{}}
        \toprule
        \textbf{FOV Num.} & \textbf{Comm. Cost} & \textbf{AoPT (Target $\times$ s)} & \textbf{MODA (\%)} \\ 
        \midrule
        \textbf{FOV 1} & 15.36 KB & 6.53±0.64  & 63.15 \\
        \textbf{FOV 1} & 18.69 KB & 8.13±1.02  & 64.90 \\
        \midrule
        \textbf{FOV 2} & 14.81 KB & 7.14±0.95 & 52.34\\
        \textbf{FOV 2} & 18.69 KB & 8.01±1.04 & 56.49 \\
        \midrule
        \textbf{FOV 3} & 15.47 KB & 6.91±0.86 & 66.46 \\
        \textbf{FOV 3} & 18.73 KB & 8.31±1.26 & 67.09 \\     
        \midrule
        \textbf{FOVs 1-3} & 17.07 KB & 9.27±1.15 & 84.14 \\
        \textbf{FOVs 1-3} & 26.62 KB & 10.13±1.40 & 85.86 \\       
        \bottomrule
    \end{tabular}
    \label{tab:fusion_camera}
    \vspace{-3mm}
\end{table}

Table \ref{tab:fusion_camera} illustrates the impact of different FOVs and communication costs on AoPT and collaborative perception accuracy. As communication costs increase, the amount of feature data transmitted between UGVs rises, leading to more accurate target detection and higher numbers of detected targets. Consequently, both AoPT and MODA values improve, highlighting that the network should allocate more resources to UGVs associated with these FOVs to enhance the timeliness and precision of perception data. The table also demonstrates that different FOVs capture varying numbers of targets, which suggests that optimizing resource distribution should prioritize nodes with higher AoPT values to maximize system performance. Additionally, when data from multiple FOVs is combined, the system achieves its highest MODA scores, but this also leads to increased communication costs and higher AoPT values. This indicates that while multi-FOV collaboration improves overall perception accuracy, it also necessitates a careful balance of resource allocations to manage the increased communication demands efficiently. 

Fig. \ref{fig:fov_comparison} illustrates the dynamic variations in detected targets and AoPT across different FOVs over time. In Fig. \ref{fig:detected_targets}, we observe the fluctuation of the detected target counts in FOVs 1, 2, and 3 as targets move within and out of the cameras' fields of view. Fig. \ref{fig:aopt_fovs} shows the corresponding AoPT values, where higher detected target counts result in increased AoPT, indicating the system’s need for more channel resources to maintain data timeliness. Since the targets are continuously moving, both the number of detected targets and AoPT are dynamic over time. As the communication constraint and camera sampling interval remain fixed, the larger the number of detected targets, the greater the AoPT becomes, indicating that the system will require more resources to ensure timely transmission. This illustrates the relationship between the dynamic nature of target movement and the system’s resource allocation strategy for maintaining efficient data timeliness.

\begin{figure}[t]
  \centering
  \subfigure[AoPT vs. Comm. Bottleneck.]{
    \includegraphics[width=4.3 cm]{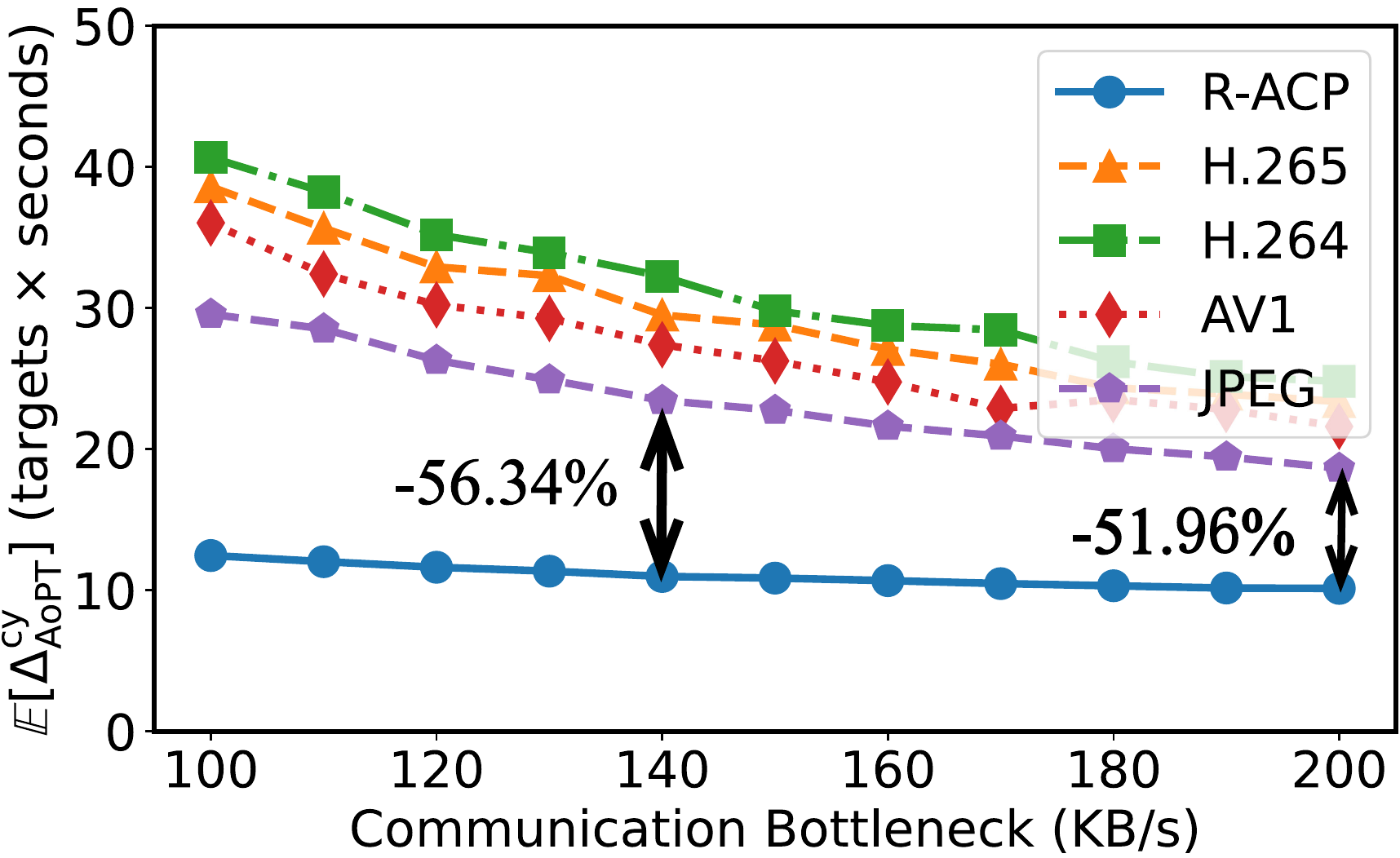}\label{fig:aopt_vs_capacity}
  }
  \hspace{-0.5cm}  
  \subfigure[AoPT vs. Sampling Interval.]{
    \includegraphics[width=4.3 cm]{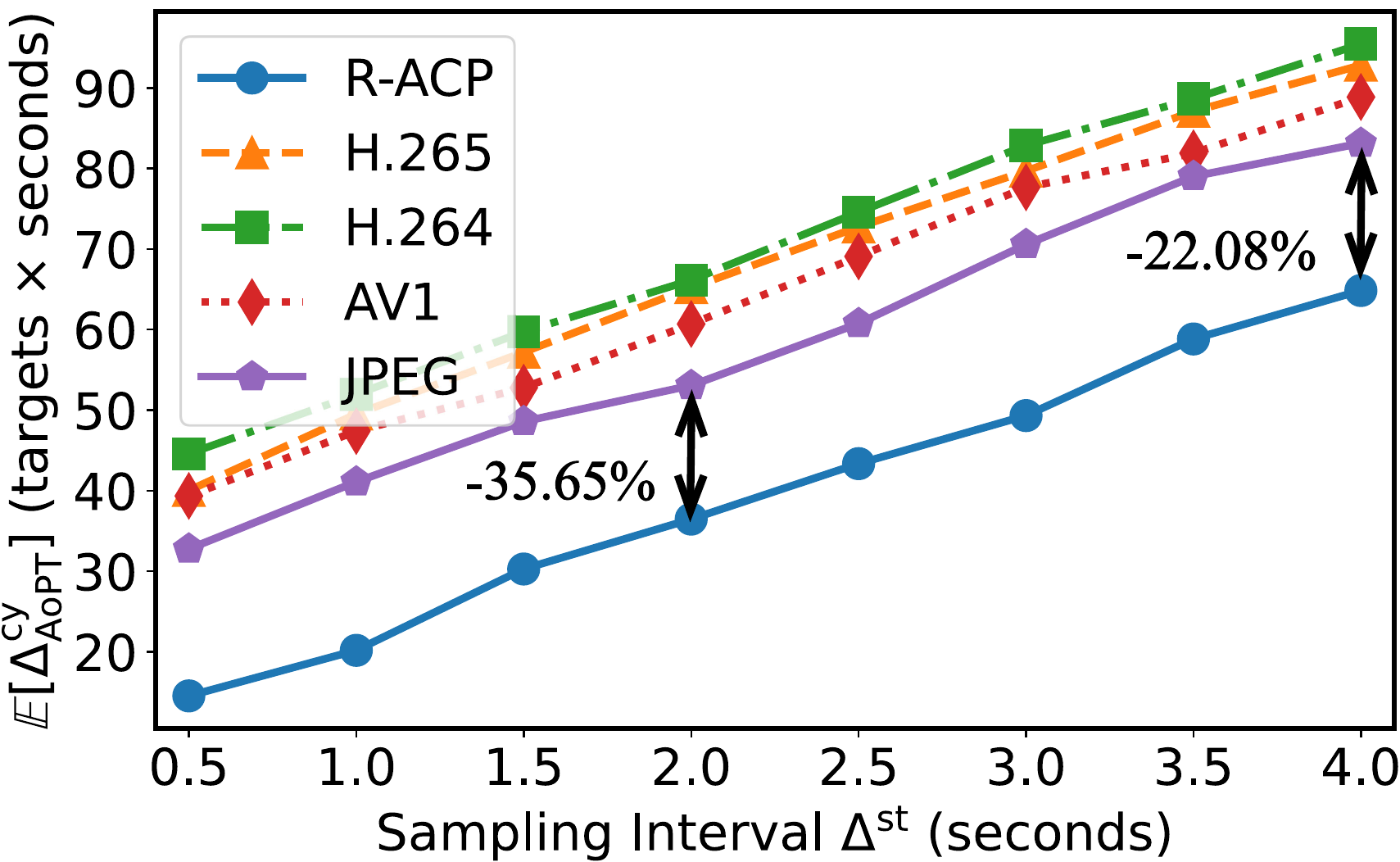}\label{fig:aopt_vs_sampling_interval}
  }
  \caption{AoPT vs. different parameters. Fig. \ref{fig:aopt_vs_capacity} shows the relationship between AoPT and capacity. Fig. \ref{fig:aopt_vs_sampling_interval} demonstrates the relationship between AoPT and sampling interval.}
  \label{fig:aopt_comparison}
  \vspace{-3mm}
\end{figure}

Fig. \ref{fig:aopt_comparison} shows the relationship between AoPT and different system parameters, i.e., communication bottleneck and sampling interval. In Fig. \ref{fig:aopt_vs_capacity}, as the communication bottleneck increases, AoPT decreases due to reduced transmission latency, improving data timeliness across all methods. R-ACP achieves up to 51.96\% lower AoPT compared to H.265, H.264, and AV1. With larger communication capacity, R-ACP enhances data freshness. Fig. \ref{fig:aopt_vs_sampling_interval} shows AoPT increasing with longer sampling intervals due to less frequent data updates. R-ACP consistently maintains lower AoPT across different sampling intervals, reducing it by up to 22.08\% compared to baselines, demonstrating its effectiveness in minimizing AoPT even with lower sampling frequencies.

\begin{figure}[t]
  \centering
  \subfigure[Packet loss rate 0.1.]{
    \includegraphics[width=4.3cm]{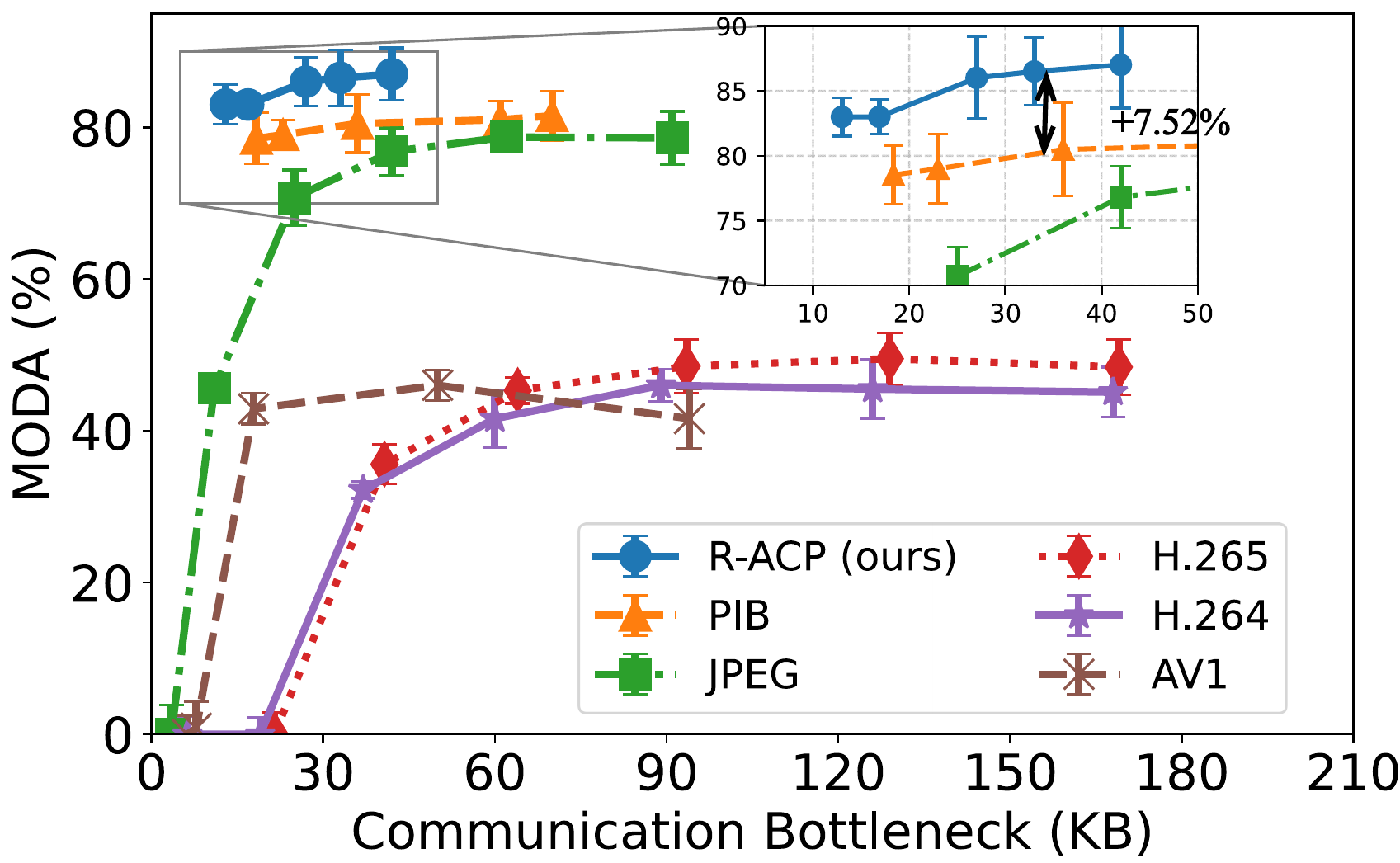}
    \label{fig:dropout_0_1}
  }
  \hspace{-0.5cm}  
  \subfigure[Packet loss rate 0.2.]{
    \includegraphics[width=4.3cm]{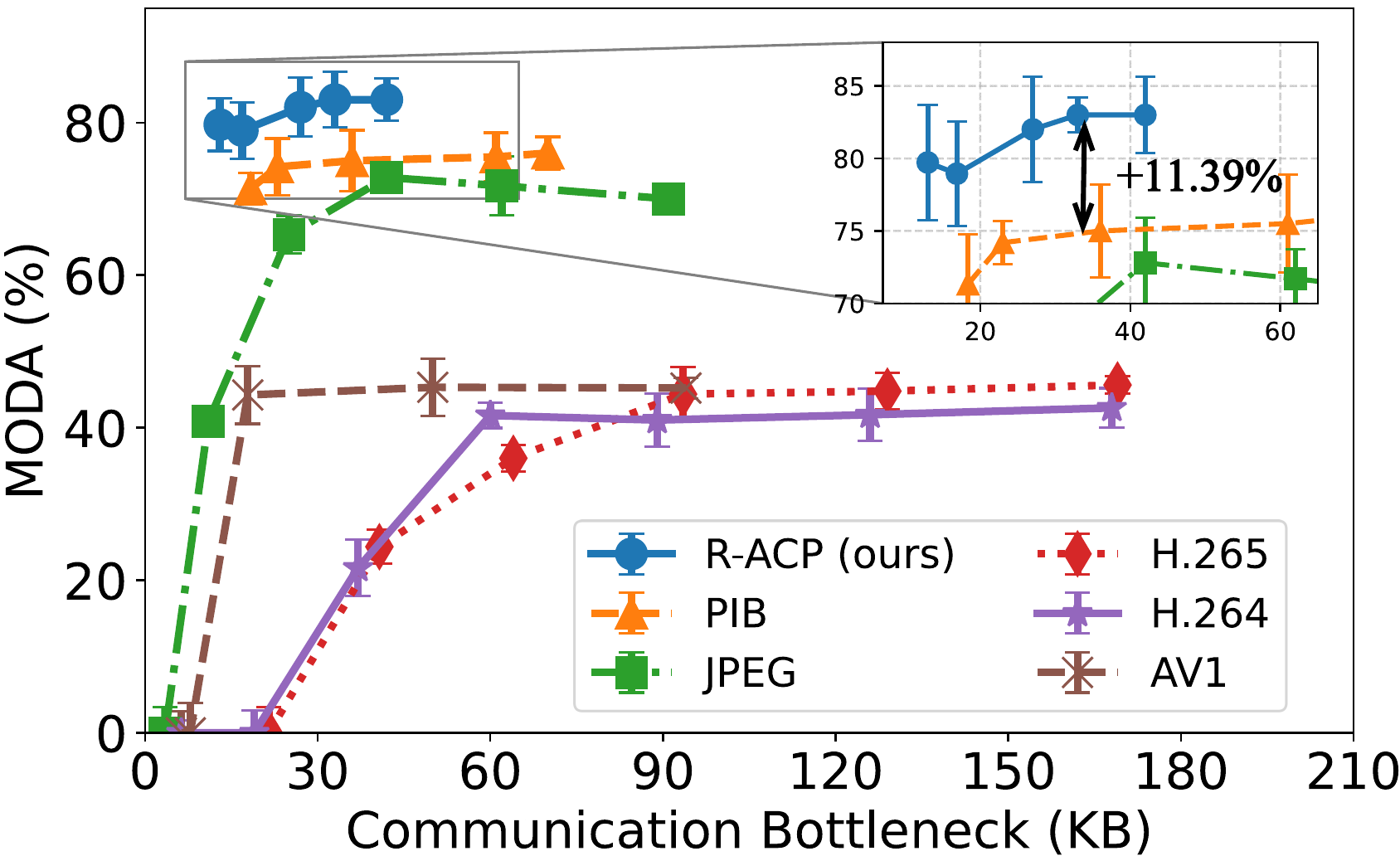}
    \label{fig:dropout_0_2}
  }
  \subfigure[Packet loss rate 0.3.]{
    \includegraphics[width=4.3cm]{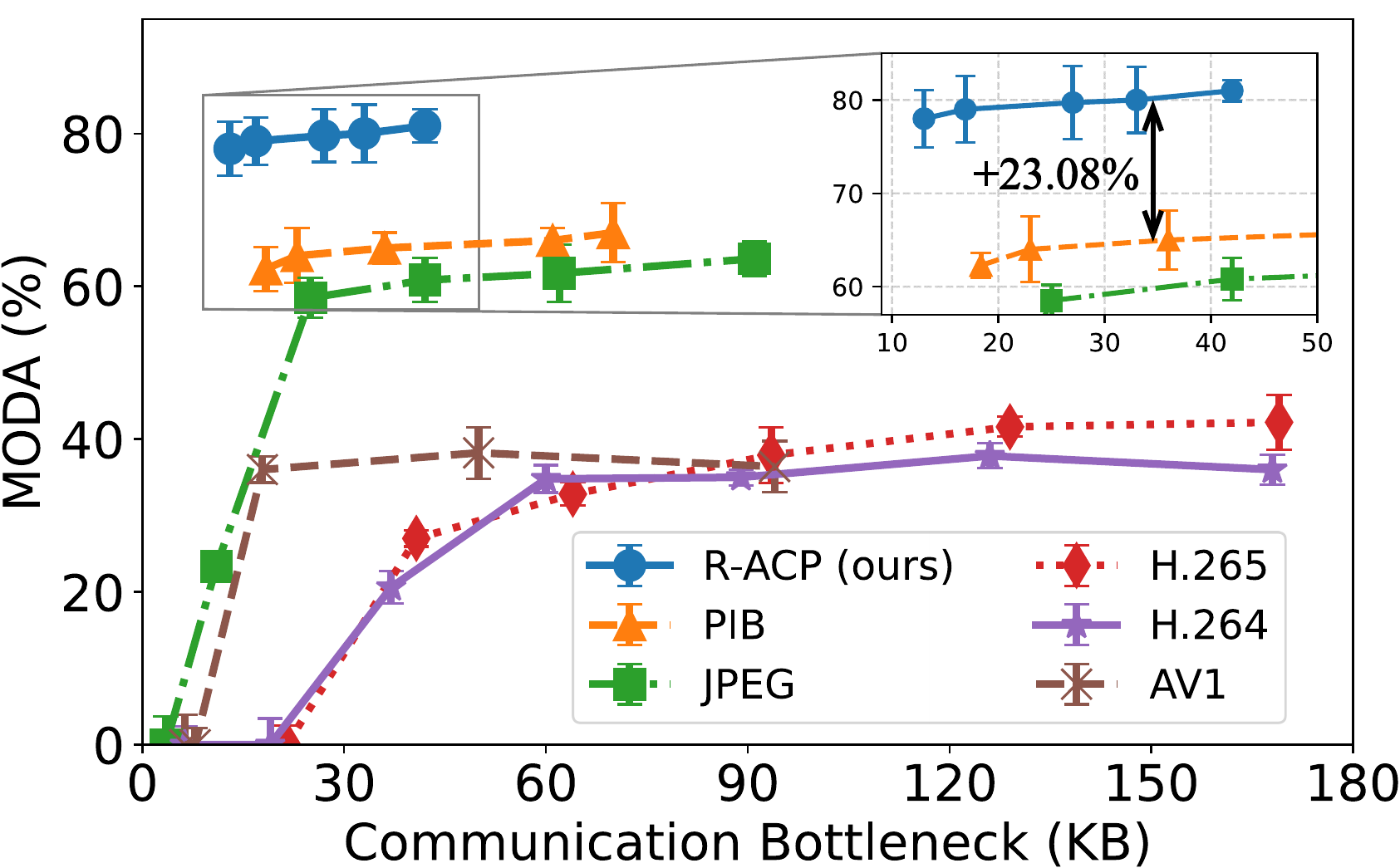}
    \label{fig:dropout_0_3}
  }
  \hspace{-0.5cm}  
  \subfigure[Packet loss rate 0.4.]{
    \includegraphics[width=4.3cm]{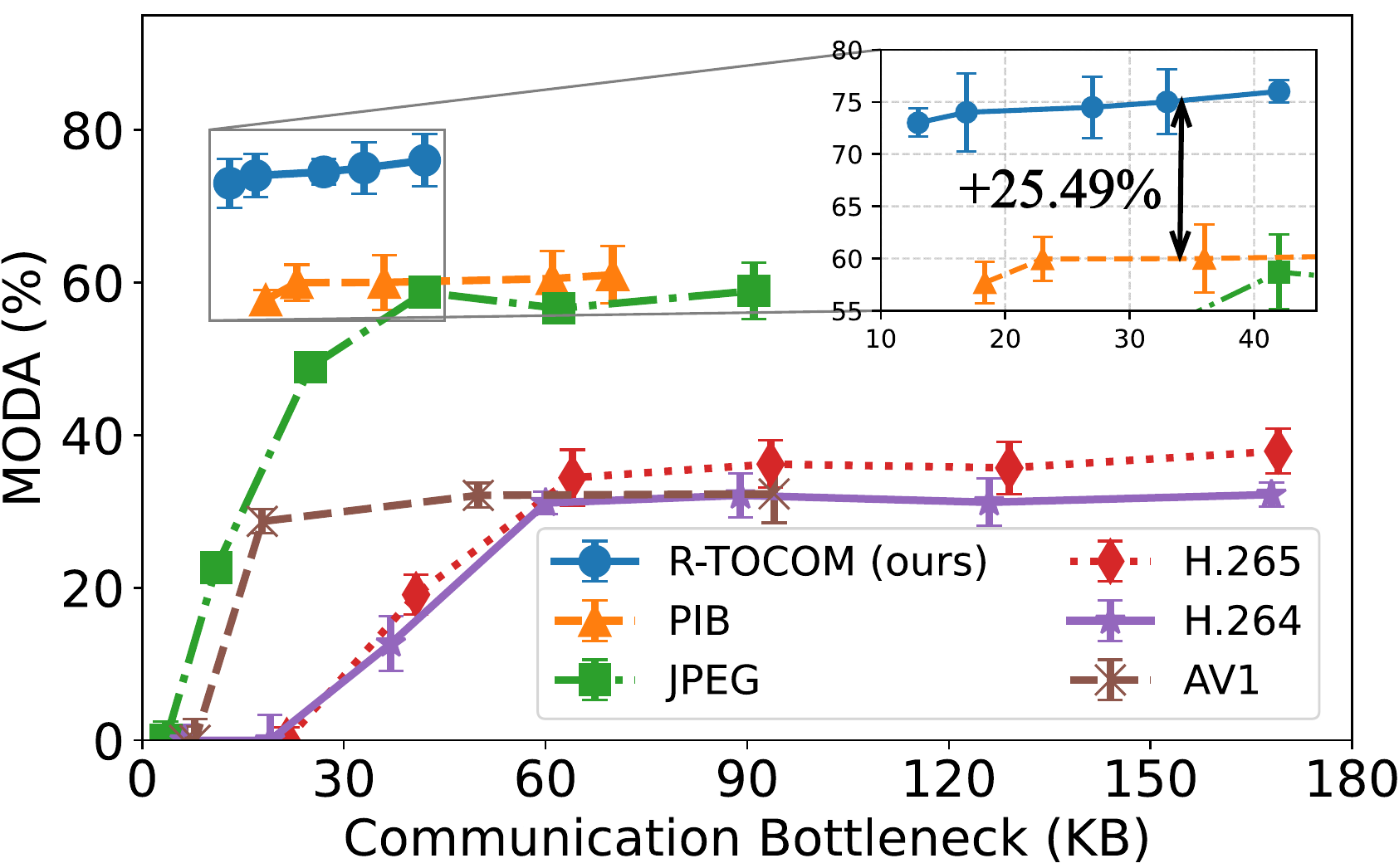}
    \label{fig:dropout_0_4}
  }
  \caption{Comparison of MODA vs. Communication Bottleneck across different packet loss rates. }
  \label{fig:dropout_comparison}
  \vspace{-3mm}
\end{figure}
Fig. \ref{fig:dropout_comparison} shows how packet loss and communication bottlenecks affect MODA. As the bottleneck increases, MODA improves across all packet loss rates, indicating higher transmission capacity enhances perception. For lower packet loss rates (0.1 and 0.2), the improvement is gradual, while at a packet loss rate of 0.3, R-ACP significantly outperforms other baselines like PIB, with at least a 23.08\% improvement. Even under severe packet loss (0.4), R-ACP maintains a notable advantage over H.265, H.264, AV1, and JPEG, achieving up to 25.49\% improvement in MODA, demonstrating robustness against high packet loss scenarios. These results highlight R-ACP's efficiency in maintaining data integrity and accuracy across various conditions.

\section{Conclusion}
In this paper, we have proposed a real-time adaptive collaborative perception (R-ACP) framework by leveraging a robust task-oriented communication strategy to enhance real-time multi-view collaborative perception under constrained network conditions. Our contributions of R-ACP are twofold. First, we have introduced a channel-aware self-calibration technique utilizing Re-ID-based feature extraction and adaptive key-point compression, which significantly improves extrinsic calibration accuracy by up to 89.39\%, even with limited FOV overlap. Second, we have leveraged an Information Bottleneck (IB)-based encoding method to optimize feature transmission and sharing, ensuring data timeliness while reducing communication overhead. By intelligently compressing data and employing a priority-based scheduling mechanism for severe packet loss, R-ACP can reduce AoPT and retain perception accuracy under various channel conditions. Extensive simulation results show that R-ACP significantly outperforms traditional methods like PIB, H.265, H.264, and AV1, improving multiple object detection accuracy (MODA) by 25.49\% and decreasing communication costs by 51.36\%, particularly in high packet loss scenarios (up to 40\% packet loss rate).


\ifCLASSOPTIONcaptionsoff
  \newpage
\fi

\bibliographystyle{IEEEtran}
\bibliography{ref,ref2}

\begin{thebibliography}{10}
\providecommand{\url}[1]{#1}
\csname url@samestyle\endcsname
\providecommand{\newblock}{\relax}
\providecommand{\bibinfo}[2]{#2}
\providecommand{\BIBentrySTDinterwordspacing}{\spaceskip=0pt\relax}
\providecommand{\BIBentryALTinterwordstretchfactor}{4}
\providecommand{\BIBentryALTinterwordspacing}{\spaceskip=\fontdimen2\font plus
\BIBentryALTinterwordstretchfactor\fontdimen3\font minus \fontdimen4\font\relax}
\providecommand{\BIBforeignlanguage}[2]{{%
\expandafter\ifx\csname l@#1\endcsname\relax
\typeout{** WARNING: IEEEtran.bst: No hyphenation pattern has been}%
\typeout{** loaded for the language `#1'. Using the pattern for}%
\typeout{** the default language instead.}%
\else
\language=\csname l@#1\endcsname
\fi
#2}}
\providecommand{\BIBdecl}{\relax}
\BIBdecl

\bibitem{wang2024generative}
\BIBentryALTinterwordspacing
J.~Wang, H.~Du, Y.~Liu, G.~Sun, D.~Niyato, S.~Mao, D.~I. Kim, and X.~Shen, ``Generative {AI} based secure wireless sensing for {ISAC} networks,'' 2024, arXiv preprint arXiv:2408.11398. [Online]. Available: \url{https://arxiv.org/abs/2408.11398}
\BIBentrySTDinterwordspacing

\bibitem{9735326}
M.~Tang, S.~Cai, and V.~K.~N. Lau, ``Radix-partition-based over-the-air aggregation and low-complexity state estimation for iot systems over wireless fading channels,'' \emph{IEEE Transactions on Signal Processing}, vol.~70, pp. 1464--1477, Mar. 2022.

\bibitem{10976336}
S.~Hu, Z.~Fang, Z.~Fang, Y.~Deng, X.~Chen, Y.~Fang, and S.~T.~W. Kwong, ``{AgentsCoMerge: L}arge language model empowered collaborative decision making for ramp merging,'' \emph{IEEE Transactions on Mobile Computing, DOI: 10.1109/TMC.2025.3564163}, pp. 1--15, Apr. 2025.

\bibitem{Chen2024}
X.~Chen, Y.~Deng, H.~Ding, G.~Qu, H.~Zhang, P.~Li, and Y.~Fang, ``{Vehicle as a Service (VaaS)}: Leverage vehicles to build service networks and capabilities for smart cities,'' \emph{IEEE Commun. Surv. Tutor.}, Feb. 2024, 42(3): 2048--2081.

\bibitem{hu2025cpguardnewparadigmmalicious}
\BIBentryALTinterwordspacing
S.~Hu, Y.~Tao, Z.~Fang, G.~Xu, Y.~Deng, S.~Kwong, and Y.~Fang, ``{CP-Guard+: A} new paradigm for malicious agent detection and defense in collaborative perception,'' 2025, arXiv preprint arXiv:2502.07807. [Online]. Available: \url{https://arxiv.org/abs/2502.07807}
\BIBentrySTDinterwordspacing

\bibitem{10937373}
J.~Wang, J.~Wang, Z.~Tong, Z.~Jiao, M.~Zhang, and C.~Jiang, ``{ACBFT: A}daptive chained byzantine fault-tolerant consensus protocol for {UAV Ad Hoc} networks,'' \emph{IEEE Transactions on Vehicular Technology, (DOI: 10.1109/TVT.2025.3548281)}, Mar. 2025.

\bibitem{fang2025taskoriented}
\BIBentryALTinterwordspacing
Z.~Fang, Z.~Liu, J.~Wang, S.~Hu, Y.~Guo, Y.~Deng, and Y.~Fang, ``Task-oriented communications for visual navigation with edge-aerial collaboration in low altitude economy,'' \emph{arXiv preprint arXiv:2504.18317}, 2025. [Online]. Available: \url{https://arxiv.org/abs/2504.18317}
\BIBentrySTDinterwordspacing

\bibitem{hou2025splitfederated}
\BIBentryALTinterwordspacing
X.~Hou, J.~Wang, Z.~Zhang, J.~Wang, L.~Liu, and Y.~Ren, ``Split federated learning for {UAV}-enabled integrated sensing, computation, and communication,'' \emph{arXiv preprint arXiv:2504.01443}, 2025. [Online]. Available: \url{https://arxiv.org/abs/2504.01443}
\BIBentrySTDinterwordspacing

\bibitem{yang2022traffic}
H.~Yang, J.~Cai, M.~Zhu, C.~Liu, and Y.~Wang, ``Traffic-informed multi-camera sensing {(TIMS)} system based on vehicle re-identification,'' \emph{IEEE Transactions on Intelligent Transportation Systems}, vol.~23, no.~10, pp. 17\,189--17\,200, Mar. 2022.

\bibitem{he2020multi}
Y.~He, X.~Wei, X.~Hong, W.~Shi, and Y.~Gong, ``Multi-target multi-camera tracking by tracklet-to-target assignment,'' \emph{IEEE Transactions on Image Processing}, vol.~29, pp. 5191--5205, Mar. 2020.

\bibitem{10557621}
N.~Q. Hieu, D.~Thai~Hoang, D.~N. Nguyen, and M.~Abu~Alsheikh, ``Reconstructing human pose from inertial measurements: {A} generative model-based compressive sensing approach,'' \emph{IEEE Journal on Selected Areas in Communications}, vol.~42, no.~10, pp. 2674--2687, Jun. 2024.

\bibitem{wang2022rt}
S.~Wang, J.~Lu, B.~Guo, and Z.~Dong, ``{RT-VeD: R}eal-time voi detection on edge nodes with an adaptive model selection framework,'' in \emph{Proceedings of the 28th ACM SIGKDD Conference on Knowledge Discovery and Data Mining}, New York, NY, 2022, pp. 4050--4058.

\bibitem{yang2024robust}
D.~Yang, X.~Fan, W.~Dong, C.~Huang, and J.~Li, ``Robust {BEV 3D} object detection for vehicles with tire blow-out,'' \emph{Sensors}, vol.~24, no.~14, p. 4446, Jul. 2024.

\bibitem{tabb2019calibration}
A.~Tabb, H.~Medeiros, M.~J. Feldmann, and T.~T. Santos, ``Calibration of asynchronous camera networks: {C}alico,'' \emph{arXiv preprint arXiv:1903.06811}, 2019.

\bibitem{ozuysal2004manual}
M.~{\"O}zuysal, ``Manual and auto calibration of stereo camera systems,'' Master's thesis, Middle East Technical University, 2004.

\bibitem{9495137}
C.~Yuan, X.~Liu, X.~Hong, and F.~Zhang, ``Pixel-level extrinsic self calibration of high resolution {LiDAR }and camera in targetless environments,'' \emph{IEEE Robotics and Automation Letters}, vol.~6, no.~4, pp. 7517--7524, Jul. 2021.

\bibitem{khodarahmi2023review}
M.~Khodarahmi and V.~Maihami, ``A review on {Kalman} filter models,'' \emph{Archives of Computational Methods in Engineering}, vol.~30, no.~1, pp. 727--747, Oct. 2023.

\bibitem{fang2024pacp}
Z.~Fang, S.~Hu, H.~An, Y.~Zhang, J.~Wang, H.~Cao, X.~Chen, and Y.~Fang, ``{PACP: P}riority-aware collaborative perception for connected and autonomous vehicles,'' \emph{IEEE Transactions on Mobile Computing, (DOI: 10.1109/TMC.2024.3449371)}, Aug. 2024.

\bibitem{fang2024pib}
Z.~Fang, S.~Hu, L.~Yang, Y.~Deng, X.~Chen, and Y.~Fang, ``{PIB: P}rioritized information bottleneck framework for collaborative edge video analytics,'' in \emph{IEEE Global Communications Conference (GLOBECOM)}, Cape Town, South Africa, Dec. 2024, pp. 1--6.

\bibitem{fang2025ton}
Z.~Fang, S.~Hu, J.~Wang, Y.~Deng, X.~Chen, and Y.~Fang, ``Prioritized information bottleneck theoretic framework with distributed online learning for edge video analytics,'' \emph{IEEE Transactions on Networking, DOI: 10.1109/TON.2025.3526148}, Jan. 2025.

\bibitem{yang2022mixed}
Y.~Yang, W.~Wang, Z.~Yin, R.~Xu, X.~Zhou, N.~Kumar, M.~Alazab, and T.~R. Gadekallu, ``Mixed game-based {AoI} optimization for combating {COVID-19 with AI} bots,'' \emph{IEEE Journal on Selected Areas in Communications}, vol.~40, no.~11, pp. 3122--3138, Oct. 2022.

\bibitem{abedi2024safety}
M.~R. Abedi, N.~Mokari, M.~R. Javan, H.~Saeedi, E.~A. Jorswieck, and H.~Yanikomeroglu, ``Safety-aware age-of-information ({S-AoI}) for collision risk minimization in cell-free {mMIMO} platooning networks,'' \emph{IEEE Transactions on Network and Service Management}, Mar. 2024.

\bibitem{fangAgeInformationEnergy2022}
Z.~Fang, J.~Wang, Y.~Ren, Z.~Han, H.~V. Poor, and L.~Hanzo, ``Age of information in energy harvesting aided massive multiple access networks,'' \emph{IEEE Journal on Selected Areas in Communications}, vol.~40, no.~5, pp. 1441--1456, May 2022.

\bibitem{Wu2025MNET}
B.~Wu, J.~Huang, and Q.~Duan, ``Real-time intelligent healthcare enabled by federated digital twins with aoi optimization,'' \emph{IEEE Network, DOI: 10.1109/MNET.2025.3565977}, 2025.

\bibitem{he2018minimizing}
Q.~He, G.~Dan, and V.~Fodor, ``Minimizing age of correlated information for wireless camera networks,'' in \emph{IEEE Conference on Computer Communications Workshops (INFOCOM WKSHPS)}.\hskip 1em plus 0.5em minus 0.4em\relax Honolulu, HI: IEEE, Apr. 2018, pp. 547--552.

\bibitem{shao2023task}
J.~Shao, X.~Zhang, and J.~Zhang, ``Task-oriented communication for edge video analytics,'' \emph{IEEE Transactions on Wireless Communications (DOI: 10.1109/TWC.2023.3314888)}, 2023.

\bibitem{10268059}
H.~Feng, J.~Wang, Z.~Fang, J.~Chen, and D.-T. Do, ``Evaluating aoi-centric harq protocols for uav networks,'' \emph{IEEE Transactions on Communications}, vol.~72, no.~1, pp. 288--301, Sep. 2024.

\bibitem{YANG2023103982}
``Cooperative multi-camera vehicle tracking and traffic surveillance with edge artificial intelligence and representation learning,'' \emph{Transportation Research Part C: Emerging Technologies}, vol. 148, p. 103982, Mar. 2023.

\bibitem{liu2023efficient}
S.~Liu, S.~Huang, X.~Xu, J.~Lloret, and K.~Muhammad, ``Efficient visual tracking based on fuzzy inference for intelligent transportation systems,'' \emph{IEEE Transactions on Intelligent Transportation Systems}, vol.~24, no.~12, pp. 15\,795--15\,806, Jan. 2023.

\bibitem{qiu20223d}
R.~Qiu, M.~Xu, Y.~Yan, J.~S. Smith, and X.~Yang, ``{3D} random occlusion and multi-layer projection for deep multi-camera pedestrian localization,'' in \emph{European Conference on Computer Vision}.\hskip 1em plus 0.5em minus 0.4em\relax Tel Aviv, Israel: Springer, 2022, pp. 695--710.

\bibitem{guo2021optimal}
C.~Guo, L.~Zhao, Y.~Cui, Z.~Liu, and D.~W.~K. Ng, ``Power-efficient wireless streaming of multi-quality tiled 360 {VR} video in {MIMO-OFDMA} systems,'' \emph{IEEE Transactions on Wireless Communications}, vol.~20, no.~8, pp. 5408--5422, Mar. 2021.

\bibitem{10218989}
J.~Su, M.~Hirano, and Y.~Yamakawa, ``Online camera orientation calibration aided by a high-speed ground-view camera,'' \emph{IEEE Robotics and Automation Letters}, vol.~8, no.~10, pp. 6275--6282, Aug. 2023.

\bibitem{10195910}
J.~Yin, F.~Yan, Y.~Liu, and Y.~Zhuang, ``Automatic and targetless {LiDAR}–camera extrinsic calibration using edge alignment,'' \emph{IEEE Sensors Journal}, vol.~23, no.~17, pp. 19\,871--19\,880, Jul. 2023.

\bibitem{10007890}
J.~Wang, H.~Du, Z.~Tian, D.~Niyato, J.~Kang, and X.~Shen, ``Semantic-aware sensing information transmission for metaverse: {A} contest theoretic approach,'' \emph{IEEE Transactions on Wireless Communications}, vol.~22, no.~8, pp. 5214--5228, Aug. 2023.

\bibitem{10370739}
Z.~Meng, K.~Chen, Y.~Diao, C.~She, G.~Zhao, M.~A. Imran, and B.~Vucetic, ``Task-oriented cross-system design for timely and accurate modeling in the metaverse,'' \emph{IEEE Journal on Selected Areas in Communications}, vol.~42, no.~3, pp. 752--766, Dec. 2024.

\bibitem{10422886}
Z.~Meng, C.~She, G.~Zhao, M.~A. Imran, M.~Dohler, Y.~Li, and B.~Vucetic, ``Task-oriented metaverse design in the {6G} era,'' \emph{IEEE Wireless Communications}, vol.~31, no.~3, pp. 212--218, Feb. 2024.

\bibitem{kang2022personalized}
J.~Kang, H.~Du, Z.~Li, Z.~Xiong, S.~Ma, D.~Niyato, and Y.~Li, ``Personalized saliency in task-oriented semantic communications: Image transmission and performance analysis,'' \emph{IEEE Journal on Selected Areas in Communications}, vol.~41, no.~1, pp. 186--201, Nov. 2023.

\bibitem{10225550}
H.~Wei, W.~Ni, W.~Xu, F.~Wang, D.~Niyato, and P.~Zhang, ``Federated semantic learning driven by information bottleneck for task-oriented communications,'' \emph{IEEE Communications Letters}, vol.~27, no.~10, pp. 2652--2656, Aug. 2023.

\bibitem{9043503}
X.~Hou, Z.~Ren, J.~Wang, W.~Cheng, Y.~Ren, K.-C. Chen, and H.~Zhang, ``Reliable computation offloading for edge-computing-enabled software-defined {IoV},'' \emph{IEEE Internet of Things Journal}, vol.~7, no.~8, pp. 7097--7111, 2020.

\bibitem{peng2009walking}
D.~Peng, L.~Chao, and L.~Zhili, ``Walking time modeling on transfer pedestrians in subway passages,'' \emph{Journal of Transportation Systems Engineering and Information Technology}, vol.~9, no.~4, pp. 103--109, Nov. 2009.

\bibitem{aoii}
A.~Maatouk, S.~Kriouile, M.~Assaad, and A.~Ephremides, ``The age of incorrect information: {A} new performance metric for status updates,'' \emph{IEEE/ACM Trans. Netw.}, vol.~28, no.~5, p. 2215–2228, Oct. 2020.

\bibitem{10143537}
J.~Chen, J.~Wang, C.~Jiang, and J.~Wang, ``Age of incorrect information in semantic communications for {NOMA} aided {XR} applications,'' \emph{IEEE Journal of Selected Topics in Signal Processing}, vol.~17, no.~5, pp. 1093--1105, Sep. 2023.

\bibitem{du2022elements}
H.~Du, H.~Shi, D.~Zeng, X.-P. Zhang, and T.~Mei, ``The elements of end-to-end deep face recognition: A survey of recent advances,'' \emph{ACM Computing Surveys (CSUR)}, vol.~54, no. 10s, pp. 1--42, Sep. 2022.

\bibitem{lowe2004distinctive}
D.~G. Lowe, ``Distinctive image features from scale-invariant keypoints,'' \emph{International journal of computer vision}, vol.~60, pp. 91--110, Jun. 2004.

\bibitem{9780215}
Y.~Wang, Y.~Wang, I.~W.-H. Ho, W.~Sheng, and L.~Chen, ``Pavement marking incorporated with binary code for accurate localization of autonomous vehicles,'' \emph{IEEE Transactions on Intelligent Transportation Systems}, vol.~23, no.~11, pp. 22\,290--22\,300, May 2022.

\bibitem{10368103}
X.~Hou, J.~Wang, C.~Jiang, Z.~Meng, J.~Chen, and Y.~Ren, ``Efficient federated learning for metaverse via dynamic user selection, gradient quantization and resource allocation,'' \emph{IEEE Journal on Selected Areas in Communications}, vol.~42, no.~4, pp. 850--866, Apr. 2024.

\bibitem{chavdarova2018wildtrack}
T.~Chavdarova, P.~Baqu{\'e}, S.~Bouquet, A.~Maksai, C.~Jose, T.~Bagautdinov, L.~Lettry, P.~Fua, L.~Van~Gool, and F.~Fleuret, ``Wildtrack: {A} multi-camera hd dataset for dense unscripted pedestrian detection,'' in \emph{IEEE Conference on Computer Vision and Pattern Recognition (CVPR)}, Salt Lake City, UT, Jun. 2018, pp. 5030--5039.

\bibitem{wallace1992jpeg}
G.~K. Wallace, ``The {JPEG} still picture compression standard,'' \emph{IEEE Transactions on Consumer Electronics}, vol.~38, no.~1, pp. xviii--xxxiv, Feb. 1992.

\bibitem{H264}
\BIBentryALTinterwordspacing
{ITU-T Recommendation H.264 and ISO/IEC 14496-10}, \emph{Advanced Video Coding for Generic Audiovisual Services}, International Telecommunication Union Std., 2003. [Online]. Available: \url{https://www.itu.int/rec/T-REC-H.264}
\BIBentrySTDinterwordspacing

\bibitem{bossen2012hevc}
F.~Bossen, B.~Bross, K.~Suhring, and D.~Flynn, ``{HEVC} complexity and implementation analysis,'' \emph{IEEE Transactions on circuits and Systems for Video Technology}, vol.~22, no.~12, pp. 1685--1696, Oct. 2012.

\bibitem{han2021technical}
J.~Han, B.~Li, D.~Mukherjee, C.-H. Chiang, A.~Grange, C.~Chen, H.~Su, S.~Parker, S.~Deng, U.~Joshi \emph{et~al.}, ``A technical overview of {AV1},'' \emph{Proceedings of the IEEE}, vol. 109, no.~9, pp. 1435--1462, Sep. 2021.

\end{thebibliography}

\begin{IEEEbiography}[{\includegraphics[width=1in,height=1.25in,clip,keepaspectratio]{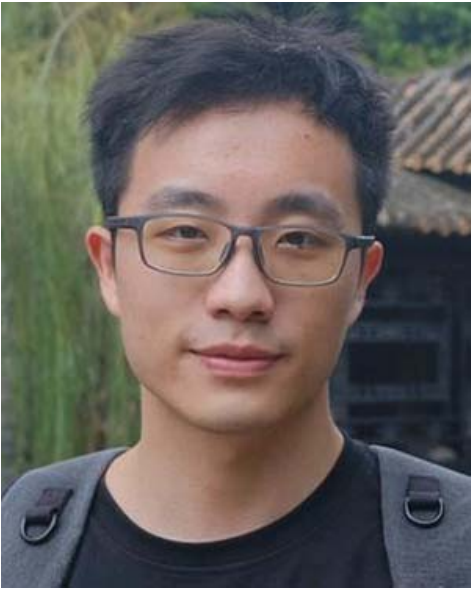}}] {Zhengru Fang} (S'20) received his B.S. degree (Hons.) in electronics and information engineering from the Huazhong University of Science and Technology (HUST), Wuhan, China, in 2019 and received his M.S. degree (Hons.) from Tsinghua University, Beijing, China, in 2022. Currently, he is pursuing his PhD degree in the Department of Computer Science at City University of Hong Kong. His research interests include collaborative perception, V2X, age of information, and mobile edge computing. He received the Outstanding Thesis Award from Tsinghua University in 2022, and the Excellent Master Thesis Award from the Chinese Institute of Electronics in 2023. His research work has been published in IEEE/CVF CVPR, IEEE ToN, IEEE JSAC, IEEE TMC, IEEE ICRA, and ACM MM, etc.
\end{IEEEbiography}

\begin{IEEEbiography}
[{\includegraphics[width=1in,height=1.25in,clip,keepaspectratio]{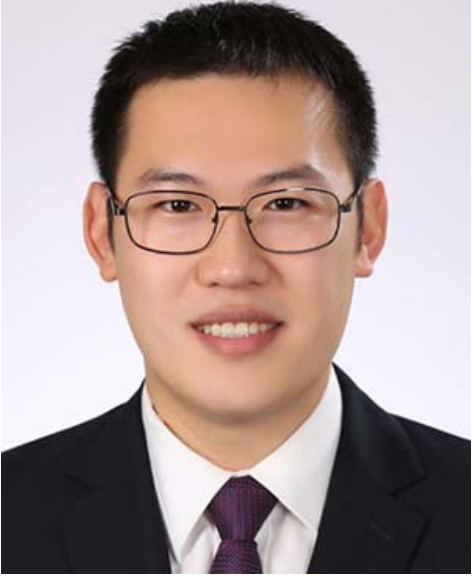}}]{\textbf{Jingjing Wang}} (S'14-M'19-SM'21) received his B.S. degree in Electronic Information Engineering from Dalian University of Technology, Liaoning, China in 2014 and the Ph.D. degree in Information and Communication Engineering from Tsinghua University, Beijing, China in 2019, both with the highest honors. From 2017 to 2018, he visited the Next Generation Wireless Group chaired by Prof. Lajos Hanzo, University of Southampton, UK. Dr. Wang is currently an associate professor at School of Cyber Science and Technology, Beihang University. His research interests include AI enhanced next-generation wireless networks, swarm intelligence and confrontation. He has published over 100 IEEE Journal/Conference papers. Dr. Wang was a recipient of the Best Journal Paper Award of IEEE ComSoc Technical Committee on Green Communications \& Computing in 2018, the Best Paper Award of IEEE ICC and IWCMC in 2019.
\end{IEEEbiography}

\begin{IEEEbiography}[{\includegraphics[width=1in,height=1.25in,clip,keepaspectratio]{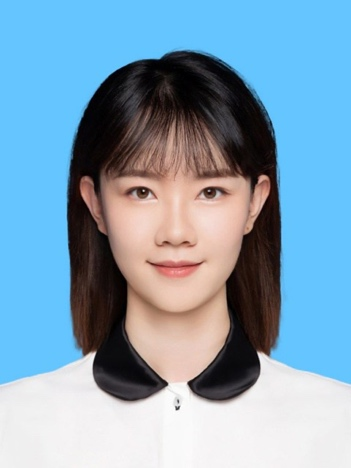}}]{Yanan Ma}( Graduate Student Member, IEEE) received the B.Eng. degree in Electronic Information Engineering (English Intensive) and the M.Eng. degree in Information and Communication Engineering from the Dalian University of Technology, Dalian, China, in 2020 and 2023. She is currently pursuing the Ph.D. degree in the Department of Computer Science at the City University of Hong Kong. Her research interests are focused on edge intelligence, wireless communication and networking.
\end{IEEEbiography}

\begin{IEEEbiography}
[{\includegraphics[width=1in,height=1.25in,clip,keepaspectratio]{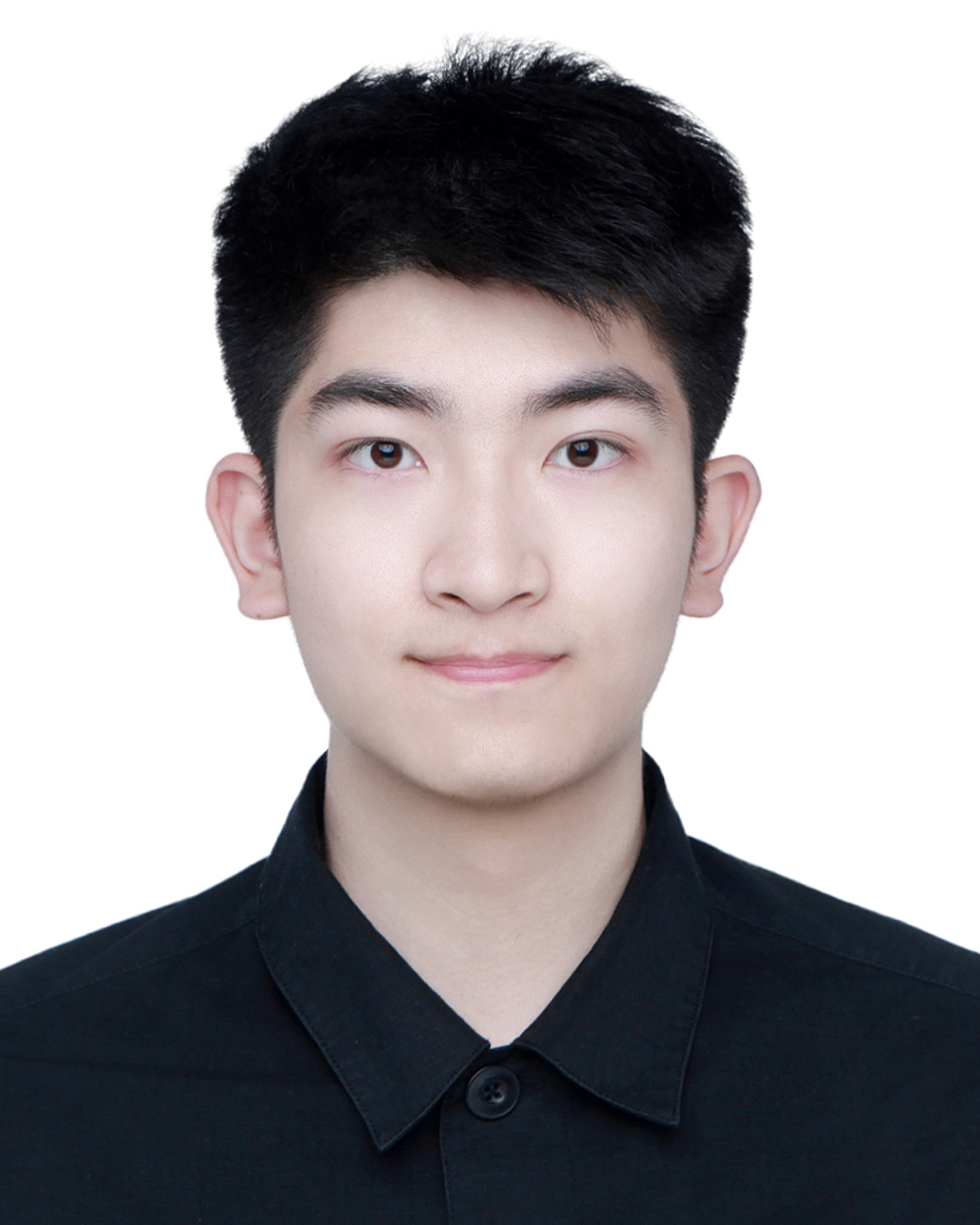}}]{\textbf{Yihang Tao}} received the B.S. degree from the School of Information Science and Engineering, Southeast University, Nanjing, China, in 2021 and received his M.S. degree from the School of Electronic Information and Electrical Engineering, Shanghai Jiao Tong University, Shanghai, China, in 2024. Currently, he is pursuing his PhD degree in the Department of Computer Science at City University of Hong Kong. His current research interests include collaborative perception, autonomous driving, and AI security.
\end{IEEEbiography}

\begin{IEEEbiography}
[{\includegraphics[width=1in,height=1.25in,clip,keepaspectratio]{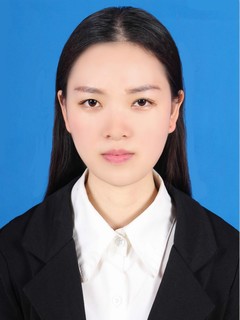}}]{\textbf{Yiqin Deng}} received her MS degree in software engineering and her PhD degree in computer science and technology from Central South University, Changsha, China, in 2017 and 2022, respectively. She is currently a Postdoctoral Researcher with the Department of Computer Science at City University of Hong Kong. Previously, she was a Postdoctoral Research Fellow with the School of Control Science and Engineering, Shandong University, Jinan, China. She was a visiting researcher at the University of Florida, Gainesville, Florida, USA, from 2019 to 2021. Her research interests include edge/fog computing, computing power networks, Internet of Vehicles, and resource management.
\end{IEEEbiography}

\begin{IEEEbiography}[{\includegraphics[width=1in,clip,keepaspectratio]{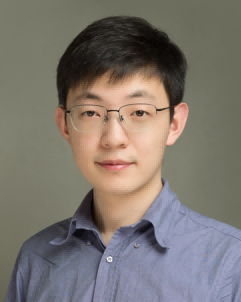}}]{Xianhao Chen}(Member, IEEE) received the B.Eng. degree in electronic information from Southwest Jiaotong University in 2017, and the Ph.D. degree in electrical and computer engineering from the University of Florida in 2022. He is currently an assistant professor at the Department of Electrical and Electronic Engineering, the University of Hong Kong, where he directs the Wireless Information \& Intelligence (WILL) Lab. He serves as a TPC member of several international conferences and an Associate Editor of ACM Computing Surveys. He received the Early Career Award from the Research Grants Council (RGC) of Hong Kong in 2024, the ECE Graduate Excellence Award for research from the University of Florida in 2022, and the ICCC Best Paper Award in 2023. His research interests include wireless networking, edge intelligence, and machine learning.
\end{IEEEbiography}

\begin{IEEEbiography}[{\includegraphics[width=1in,height=1.25in,clip,keepaspectratio]{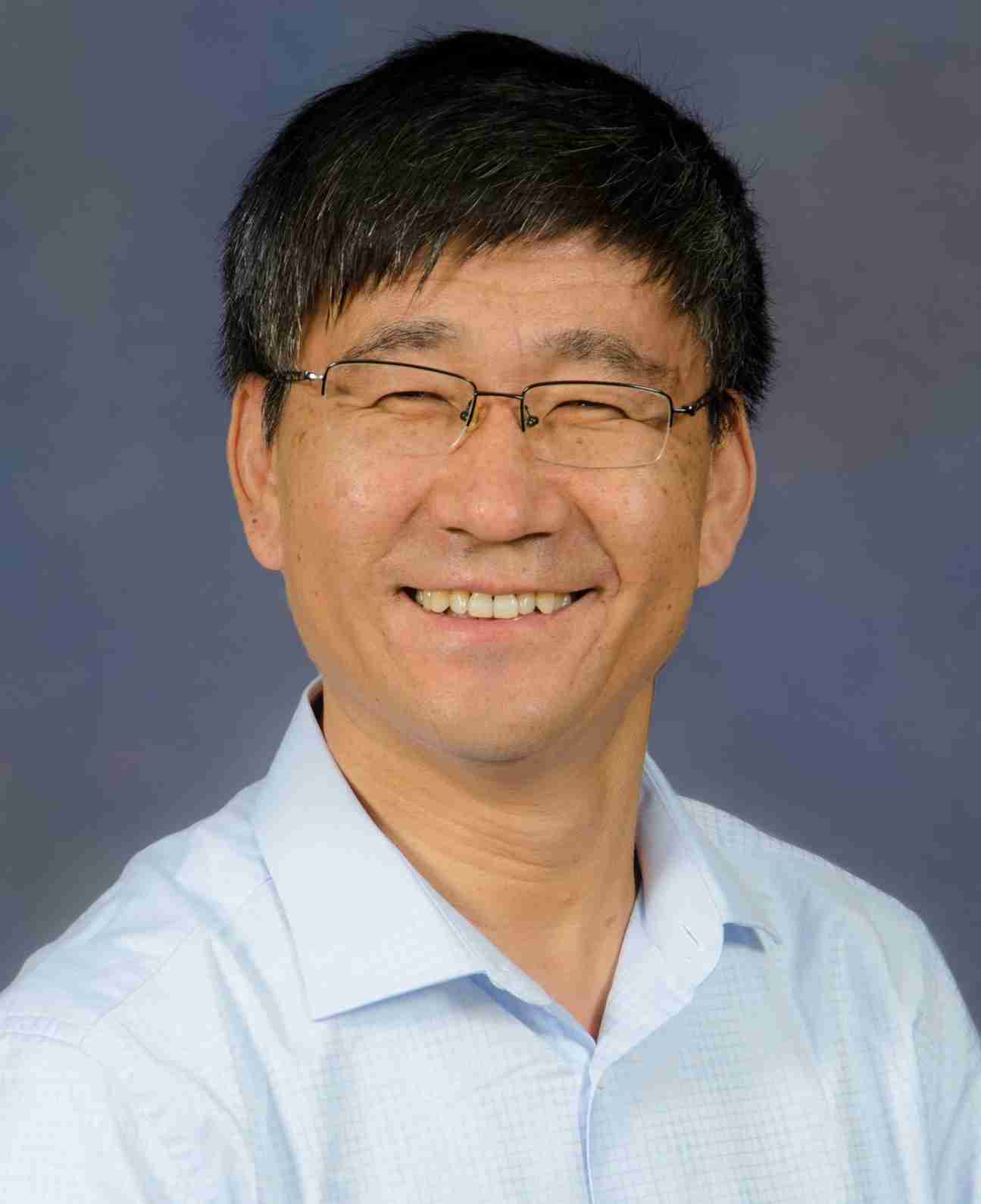}}]{Yuguang Fang}
(S'92, M'97, SM'99, F'08) received
the MS degree from Qufu Normal University, China
in 1987, a PhD degree from Case Western Reserve
University, Cleveland, Ohio, USA, in 1994, and a
PhD degree from Boston University, Boston, Massachusetts, USA in 1997. He joined the Department
of Electrical and Computer Engineering at University of Florida in 2000 as an assistant professor,
then was promoted to associate professor in 2003,
full professor in 2005, and distinguished professor in
2019, respectively. Since August 2022, he has been a Global STEM Scholar and 
Chair Professor with the Department of Computer
Science at City University of Hong Kong.

Prof. Fang received many awards including the US NSF CAREER Award
(2001), US ONR Young Investigator Award (2002), 2018 IEEE Vehicular Technology Outstanding Service Award, IEEE Communications Society
AHSN Technical Achievement Award (2019), CISTC Technical Recognition
Award (2015), and WTC Recognition Award (2014), and 2010-2011 UF
Doctoral Dissertation Advisor/Mentoring Award. He held multiple professorships including the Changjiang Scholar Chair Professorship (2008-2011),
Tsinghua University Guest Chair Professorship (2009-2012), University of
Florida Foundation Preeminence Term Professorship (2019-2022), and University of Florida Research Foundation Professorship (2017-2020, 2006-
2009). He served as the Editor-in-Chief of IEEE Transactions on Vehicular
Technology (2013-2017) and IEEE Wireless Communications (2009-2012)
and serves/served on several editorial boards of journals including Proceedings
of the IEEE (2018-present), ACM Computing Surveys (2017-present), ACM
Transactions on Cyber-Physical Systems (2020-present), IEEE Transactions
on Mobile Computing (2003-2008, 2011-2016, 2019-present), IEEE Transactions on Communications (2000-2011), and IEEE Transactions on Wireless
Communications (2002-2009). He served as the Technical Program Co-Chair of IEEE INFOCOM'2014. He is a Member-at-Large of the Board of
Governors of IEEE Communications Society (2022-2024) and the Director of
Magazines of IEEE Communications Society (2018-2019). He is a fellow of
ACM and AAAS.
\end{IEEEbiography}

\end{document}